\newtheorem*{theorem*}{Theorem}
\tikzset{
  arr/.style={{}-Latex,shorten <=-2pt}
}
\DeclareMathOperator*{\argmin}{argmin}
\newcommand{\FPA}{\Phi^{(t)}_{\mbox{\rm\tiny all}}}
\newcommand{\CCA}{\Omega^{(t)}_{\mbox{\rm\tiny all}}}
\newcommand{\ecc}{\mbox{\rm ecc}}
\newcommand{\radius}{\mbox{\rm radius}}
\newcommand{\sux}{\mbox{\rm succ}}
\newcommand{\view}{\mbox{\rm view}}
\newcommand{\cc}{\mbox{\rm comp}}
\newcommand{\config}{\mbox{\rm config}}
\newcommand{\dist}{\mbox{\rm dist}}
\newcommand{\env}{\mbox{\rm env}}
\newcommand{\IF}{\mathsf{IF}}
\newcommand{\KA}{{\rm \textsf{KNOW-ALL}}}
\title{Agreement Tasks in Fault-Prone Synchronous Networks of Arbitrary Structure}
\titlerunning{Agreement Tasks in Fault-Prone Synchronous Networks of Arbitrary Structure}
\author{Pierre Fraigniaud}{Institut de Recherche en Informatique Fondamentale (IRIF)\\ CNRS and Université Paris Cité, France \and \url{https://www.irif.fr/\~{}pierref} }{pierre.fraigniaud@irif.fr}{https://orcid.org/0000-0003-4534-4803}{Additional support from  ANR projects DUCAT (ANR-20-CE48-0006), ENEDISC, and QuDATA (ANR-18-CE47-0010). }
\author{Minh Hang Nguyen}{Institut de Recherche en Informatique Fondamentale (IRIF)\\ CNRS and Université Paris Cité, France \and \url{https://www.irif.fr/~mhnguyen/} }{mhnguyen@irif.fr}{https://orcid.org/0009-0008-2391-029X}{Additional support from  ANR projects DUCAT (ANR-20-CE48-0006), TEMPORAL (ANR-22-CE48-0001), and ENEDISC, and by the European Union’s Horizon 2020 program H2020‑MSCA ‑COFUND‑2019 Grant agreement n° 945332.}
\author{Ami Paz}{Laboratoire Interdisciplinaire des Sciences du Numérique (LISN)\\
CNRS and Université Paris-Saclay, France \and \url{sites.google.com/view/amipaz/}}{ami.paz@lisn.fr}{https://orcid.org/0000-0002-6629-8335}{}
\authorrunning{P. Fraigniaud, M. H. Nguyen, and A. Paz}
\keywords{Consensus, set-agreement, fault tolerance, crash failures.} 
\begin{document}
%%%%%%%%%%%%%%%%%%%%%%%%%%%%%%%%%%%%%%%%%%%%%%%%%%%%%%%%%%%%

\maketitle

\begin{abstract}
Consensus is arguably the most studied problem in distributed computing as a whole, and particularly in the distributed message-passing setting.
In this latter framework, research on consensus has considered various hypotheses regarding the failure types, the memory constraints, the algorithmic performances (e.g., early stopping and obliviousness), etc.
Surprisingly, almost all of this work assumes that messages are passed in a \emph{complete} network, i.e., each process has a direct link to every other process.
A noticeable exception is the recent work of Castañeda et al. (Inf. Comput. 2023) who designed a generic oblivious algorithm for consensus running in $\radius(G,t)$ rounds in every graph~$G$, when up to $t$ nodes can crash  by irrevocably stopping, where $t$ is smaller than the node-connectivity $\kappa$ of~$G$. Here, $\radius(G,t)$ denotes a graph parameter called the \emph{radius of~$G$ whenever up to $t$ nodes can crash}. For $t=0$, this parameter coincides with $\radius(G)$, the standard radius of a graph, and, for $G=K_n$, the running time $\radius(K_n,t)=t +1$ of the algorithm exactly matches the known round-complexity of consensus in the clique~$K_n$.

Our main result is a proof that $\radius(G,t)$ rounds are necessary for oblivious algorithms solving consensus in $G$ when up to $t$ nodes can crash, thus validating a conjecture of Castañeda et al., and demonstrating that their consensus algorithm is optimal for any graph~$G$. We also extend the result of Castañeda et al. to two different settings: First, to the case where the number $t$ of failures is not necessarily smaller than the connectivity $\kappa$ of the considered graph; Second, to the $k$-set agreement problem for which agreement is not restricted to be on a single value as in consensus, but on up to $k$ different values.  
\end{abstract}

\newpage
\tableofcontents
\newpage
%%%%%%%%%%%%%%%%%%%%%%%%%%%%%%%%%%%%%%%%%%%%%%%%%%%%%%%%%%%%
\section{Introduction}
%%%%%%%%%%%%%%%%%%%%%%%%%%%%%%%%%%%%%%%%%%%%%%%%%%%%%%%%%%%%

For $t\geq 0$, the standard \emph{synchronous $t$-resilient message-passing} model assumes $n\geq 2$ nodes labeled from 1 to $n$, and connected as a clique, i.e., as a complete graph~$K_n$. Computation proceeds as a sequence of synchronous rounds, during which every node can send a message to each other node, receive the message sent by each other node, and perform some local computation. Up to $t$ nodes may crash during the execution of an algorithm. When a node $v$ crashes at some round~$r\geq 1$, it stops functioning after round $r$ and never recovers. 
Moreover, some  (possibly all) of the messages sent by $v$ at round~$r$ may be lost, that is, when $v$ crashes, messages sent by $v$ at round~$r$ may reach some neighbors, while other neighbors of $v$ may not hear from $v$ at round~$r$. 
This model has been extensively studied in the literature (see, e.g., \cite{attiya2004distributed,HerlihyKR2013,Lynch96,Raynal2010}). 
In particular, it is known that consensus can be solved in $t+1$ rounds in the $t$-resilient model~\cite{dolev1983authenticated}, and this is optimal for every $t<n-1$ as far as the worst-case complexity is concerned~\cite{aguilera1999simple,dolev1983authenticated}. Similarly, $k$-set agreement, in which the cardinality of the set of output values decided by the (correct) nodes must not exceed~$k$, is known to be solvable in $\lfloor t/k\rfloor+1$ rounds~\cite{chaudhuri1991towards}, and this worst-case complexity is also optimal~\cite{chaudhuri1993tight}. 

It is only very recently that the synchronous $t$-resilient message-passing model has been extended to the setting in which the complete communication graph $K_n$ is replaced by an arbitrary communication graph~$G$ (see~\cite{CastanedaFPRRT23,ChlebusKOO23}). 
Specifically, the graph $G$ is fixed, but arbitrary, and the concern is to design algorithms for~$G$. 
It was proved in~\cite{CastanedaFPRRT23} that if the number of failures is smaller than the connectivity of the graph, i.e., if $t<\kappa(G)$, then consensus in $G$ can be solved in $\radius(G,t)$ rounds in the $t$-resilient model, where $\radius(G,t)$ generalizes the standard notion of graph radius to the scenarios in which up to $t$ nodes may fail by crashing. 
For $t=0$, $\radius(G,0)$ is the standard radius of the graph~$G$. For the complete graph $K_n$, the $\radius(K_n,t)$ upper bound from~\cite{CastanedaFPRRT23} coincides with the seminal $t+1$ upper bound for consensus in $K_n$.

To get an intuition of $\radius(G,t)$, let us consider the case of the $n$-node cycle~$C_n$, for $n\geq 3$. We have $\kappa(C_n)=2$, so we assume $t\leq 1$. The radius of $C_n$ is $\lfloor \frac{n}{2}\rfloor$, i.e., $\radius(C_n,0)=\lfloor \frac{n}{2}\rfloor$. For $t=1$, let $v$ be the node that crashes. We have $\radius(C_n,1)\geq n-2$, which is the distance between the two neighbors of~$v$ in $C_n$ if $v$ crashes ``cleanly'' at the first round, preventing them to communicate directly through~$v$. However, we actually have $\radius(C_n,1)=n-1$. Indeed, $v$ may crash at the first round, yet be capable to send a message to one of its neighbors, and this message needs $n-2$ additional rounds to reach the other neighbor of~$v$. That is, computing $\radius(G,t)$ requires to take into account not only which nodes crash, but when and how they are crashing --- by ``how'', it is meant that, for a node $v$ crashing at some round~$r$, to which neighbors they still succeed to communicate at this round, and to which they fail to communicate. 

Importantly, the algorithm of~\cite{CastanedaFPRRT23} is \emph{oblivious}, that is, the output of a node after $\radius(G,t)$ rounds is solely based on the set of pairs (\textit{node-identifier, input-value}) collected by that node during $\radius(G,t)$ rounds (and not, e.g., from whom, when, and how many times it received each of these pairs). There are many reasons why to restrict the study to oblivious algorithms. Among them, oblivious algorithms are simple by design, which is desirable for their potential implementation. Moreover, they are known to be efficient, as illustrated by the case of the complete graphs in which optimal solutions can be obtained thanks to oblivious algorithms. As far as this paper is concerned (and maybe also as far as \cite{CastanedaFPRRT23} is concerned) obliviousness is highly desirable for the design of \emph{generic} solutions, that is ``meta-algorithms'' that apply to each and every graph~$G$. In such algorithms, every node forwards pairs (\textit{node-identifier, input-value}) during a prescribed number of rounds (e.g., during $\radius(G,t)$ rounds in the generic algorithm from~\cite{CastanedaFPRRT23}), and then decides on an output value according to a simple function of the set of input values received during these rounds, without having to track of the sequence of rounds at which each pair was received, and from which neighbor(s). Last but not least, intermediate nodes do not need to send complex information about the history of each piece of information transmitted during the execution, hence reducing the bandwidth requirement of the algorithms. 

\subsection{Objective}

The question of the optimality of the consensus algorithm performing in $\radius(G,t)$ rounds in any fixed graph $G$ for every number $t\leq \kappa(G)$ of failures was however left open in~\cite{CastanedaFPRRT23}. 
It was conjectured there that, for every graph~$G$, and for every $0\leq t<\kappa(G)$, no oblivious algorithm can solve consensus in $G$ in less than $\radius(G,t)$ rounds,
but this was proved only for the specific case of \emph{symmetric} (a.k.a.~\emph{vertex-transitive}) graphs\footnote{A graph  $G=(V,E)$ is vertex-transitive if, for every two nodes $u\neq v$, there exists an automorphism $f$ of~$G$ (i.e., a permutation $f:V\to V$ preserving the edges and the non-edges of~$G$) such that $f(u)=v$.}. 
Although the class of symmetric graphs includes, e.g.,  the complete graphs~$K_n$, the cycles $C_n$, and the $d$-dimensional hypercubes~$Q_d$, a lower bound $\radius(G,t)$ for every graph~$G$ in this class does not come entirely as a surprise since all nodes of a symmetric graph have the same eccentricity (i.e., maximum distance to any other node, generalized to include crash failures). 
The fact that all nodes have the same eccentricity implies that they can merely be ordered according to their identifiers for selecting the output value from the received pairs (\textit{node-identifier, input-value}). 
Instead, if the graph is not symmetric, a node that received a pair (\textit{node-identifier, input-value}) after $\radius(G,t)$ rounds does not necessarily know whether all the nodes have received this pair, 
and thus the choice of the output value from the set of received pairs is more subtle. 
Not only the design of an upper bound is made harder, but it also makes the determination of a strong lower bound more involved. The main question addressed in this paper is therefore the following:
For every graph $G$, and every non-negative integer $t<\kappa(G)$, is there an oblivious algorithm solving consensus in $G$ in less than $\radius(G,t)$ rounds under the $t$-resilient model (i.e., when up to $t$ nodes may fail by crashing)? 

Moreover, the study in~\cite{CastanedaFPRRT23} let aside the design of a generic (oblivious) algorithm for solving the standard important relaxation of consensus, namely $k$-set agreement. (Recall that, in $k$-set agreement, the set of all  values  outputted by the nodes must be of cardinality at most~$k$.) In fact, several tools developed in~\cite{CastanedaFPRRT23} do not extend to $k$-set agreement. Our next step is therefore to question the ability to design a generic algorithm for solving $k$-set agreement in arbitrary graphs~$G$, for every $k>1$. 

Last but not least, the study in~\cite{CastanedaFPRRT23} assumed that the number $t$ of failures is smaller that the connectivity $\kappa(G)$ of the graph $G$ at hand. We question what can be said about the case where the number of failures may be larger, that is when $t\geq \kappa(G)$, for both consensus and $k$-set agreement? 

%-----------------------------------------------------------
\subsection{Our Results}
%-----------------------------------------------------------

We extend the investigation of the  $t$-resilient model in arbitrary graphs, in various complementary directions. 

\subparagraph{Lower Bounds for Consensus.}

We affirmatively prove the conjecture from~\cite{CastanedaFPRRT23} that their  consensus algorithm is indeed optimal (among oblivious algorithms) for \emph{every} graph~$G$, and not only for symmetric graphs. That is, we show that, for every graph~$G$, no oblivious algorithm can solve consensus in $G$ in less than $\radius(G,t)$ rounds under the $t$-resilient model. 
This result is achieved by revisiting the notion of \emph{information flow graph} defined in~\cite{CastanedaFPRRT23} for fixing some inaccuracies in the original definition. We present a more robust (and accurate) definition of information flow graph, and we provide a characterization of the number of rounds required to solve consensus as a function of some structural property of that graph. With this characterization at hand, we establish the optimality of the algorithm in~\cite{CastanedaFPRRT23} by showing that $\radius(G,t)$ rounds are necessary for the information flow graph to satisfy the desired property required for consensus solvability.  

\subparagraph{Set Agreement.}

We demonstrate the existence of a generic oblivious algorithm for $k$-set agreement, in any arbitrary (connected) graph~$G=(V,E)$. 
This algorithm is generic in the sense that it obeys a general structure: 
\begin{enumerate}
    \item flooding the graph with the input values of a predetermined  set of nodes $C\subseteq V$, for $R$ rounds, and 
    \item after $R$ rounds, letting every node~$v\in V$ pick the input value of the node~$u\in C$ with smallest identifier among all the nodes in $C$ received by~$v$.
\end{enumerate}
Of course, both $C$ and the number of rounds~$R$ depend on~$G$. 

We show that for every graph~$G$, every $t$ smaller than the node-connectivity $\kappa(G)$ of~$G$, and every $k\geq 1$, $k$-set agreement can be solved in $R=\radius(G,t,k)$ rounds, where $\radius(G,t,k)$ extends the standard notion of graph radius to the case in which there are $k$ centers, and whenever up to $t$ nodes can crash. 

For $t=0$ and $k=1$, $\radius(G,t,k)$ coincides with the standard radius of~$G$. Moreover, for $k=1$, $\radius(G,t,1)=\radius(G,t)$. 
More specifically, like in the $k$-center problem, let us consider broadcast in $G$ from a set $S\subseteq V$ of $k$ nodes by flooding. Then $\radius(G,t,k)$ essentially denotes the minimum, taken over all sets~$S$ of $k$ nodes, of the broadcast time of~$S$, i.e., of the smallest number of rounds sufficient to guarantee that every non-faulty node receives information from at least one node in~$S$. The definition is a bit more subtle though, as the broadcast time of~$S$ actually depend on the failure pattern (i.e., which nodes crash, and when), and it may even be the case that $S$ cannot broadcast at all for some failure patterns (e.g., whenever all nodes in $S$ crash at the first round without sending any messages to their neighbors). So $\radius(G,t,k)$ is in fact defined as the minimum, taken over all set $S\subseteq V$ of cardinality at most~$k$, of the \emph{finite} broadcast time of~$S$. That is, in the definition of  $\radius(G,t,k)$, we ignore the combinations of source sets and failure patterns where the source set cannot broadcast under the failure pattern. 
We shall show that this allows us to capture the round-complexity of our algorithm.

\subparagraph{Beyond the Connectivity Threshold.}

Finally, inspired by~\cite{ChlebusKOO23}, we extend the study of consensus and set agreement in the $t$-resilient model in arbitrary graphs to the case where the number $t$ of crash failures is arbitrary, i.e., not necessarily lower than the connectivity $\kappa(G)$ of the considered graph~$G$. 
We show that all our algorithms can be extended to this framework, at the mere cost of relaxing consensus and $k$-set agreement to impose agreement to hold within each connected component of the graph resulting from removing the faulty nodes from~$G$. 
Under this somehow unavoidable relaxation, we present extension of the consensus algorithm from~\cite{CastanedaFPRRT23} in particular, and of our $k$-set agreement algorithm in general, to $t$-resilient models for $t\geq \kappa(G)$, and express the round complexities of these algorithms in term of a non-trivial extension of the radius notion to disconnected graphs. 

%-----------------------------------------------------------
\subsection{Related Work}
%-----------------------------------------------------------

Distributed computing in synchronous networks has a long tradition, including the early studies of the message complexity and round complexity of various tasks such as leader election, spanning tree constructions, BFS and DFS traversals, etc.~(see, e.g., \cite{attiya2004distributed,Lynch96}). The topic has then  flourished in the 2000s under the umbrella of the so-called \textsf{LOCAL} and  \textsf{CONGEST} models~\cite{Suomela2020,Peleg2000}, with the study of numerous graph problems such as coloring, maximal independent set, minimum-weight spanning tree, etc. 

Distributed computing in synchronous \emph{fault-prone} networks has also a long history, but it remained for a long time mostly confined to the special case of the message-passing model in the complete networks. That is, $n$ nodes subject to \emph{crash} or \emph{malicious} (a.k.a.~Byzantine) failures are connected as a complete graph~$K_n$ in which every pair of nodes has a private reliable link allowing them to exchange messages.  In this setting, a significant amount  of effort has been dedicated to narrowing down the complexity of solving agreement tasks such as consensus and, more generally, $k$-set agreement for $k\geq 1$. This includes in particular the issue of \emph{early stopping} algorithms whose performances depend on the actual number of failures~$f$ experienced during the execution of the algorithm, and not on the upper bound~$t$ on the number of failures. We refer to a sequence of surveys on the matter~\cite{CastanedaMRR17,Raynal02,RaynalT06}.

In the Byzantine case,  general communication graphs were studied early on~\cite{Dolev82}, and are  still being investigated~\cite{KhanNV19}.
In the stop-fault case, on the other hand,
it is only recently that this approach has been extended to arbitrary networks, beyond the case of the complete graph~$K_n$~\cite{CastanedaFPRRT23,ChlebusKOO23}. 
Our paper is carrying on the preliminary investigations in~\cite{CastanedaFPRRT23}, by extending them from consensus to $k$-set agreement, establishing various lower bounds including one demonstrating the optimality of the consensus algorithm in~\cite{CastanedaFPRRT23}, and extending the analysis to the case where the number of crashes may exceed the connectivity threshold. 
The original work in~\cite{CastanedaFPRRT23} has been extended to solving consensus when \emph{links} are subject to crash failures~\cite{ChlebusKOO23}. Several consensus algorithms were proposed in~\cite{ChlebusKOO23}, but their round complexities are expressed as a function of the so-called \emph{stretch}, defined as the number of connected components of the graph after removing the faulty links, plus the sum of the diameters of the connected components. Instead, the round-complexity of the algorithm  in~\cite{CastanedaFPRRT23} is expressed in term of the \emph{radius}, which is a more refined measure. Indeed, we show that the upper bound in~\cite{CastanedaFPRRT23} is tight (no multiplicative constants, nor even additive constants). The consensus algorithms in~\cite{ChlebusKOO23} however extend to the case where failures may disconnect the graph, and the task is then referred to as ``disconnected agreement''. Again, the complexities of the algorithms are expressed in term of the stretch, while we shall express the complexity of our local consensus algorithm as a function of the more refined radius parameter. We actually conjecture that our local consensus algorithm is optimal (with no multiplicative nor additive constants) for all~$t$, no matter whether $t<\kappa(G)$ or $t\geq \kappa(G)$. On the other hand, some consensus algorithms proposed in~\cite{ChlebusKOO23} are early stopping, but the one with round-complexity close to the stretch of the actual failure pattern is not oblivious, and it uses messages with size significantly larger than the size of the messages in oblivious algorithms.   
 
The case of \emph{omission} failures has also attracted a lot of attention. 
In this context, nodes are reliable but messages may be lost. 
This is modeled as a sequence $\mathcal{S}=(G_i)_{i\geq 1}$ of directed graphs, where $G_i$ captures the connections that are functioning at round~$i$. 
The \emph{oblivious message adversary} model allows an adversary to choose each communication graph $G_i$ from a set $\mathcal{G}$ and independently of its choices for the other graphs.
The nodes know the set $\mathcal{G}$ a priori, but not the actual graph picked by the adversary at each round). We refer to \cite{coulouma2013characterization,nowak2019topological,winkler2024time} for recent advances in this domain, including solving consensus.
We also refer to the \emph{heard-of} model~\cite{Charron-BostM09,Charron-BostS09}, which bears similarities with the oblivious message adversary model. 

The case of \emph{transient} failures is addressed in the context of \emph{self-stabilizing} algorithms~\cite{Dolev2000}. As opposed to most distributed algorithms for networks, which start from a given specific initial configuration, self-stabilizing algorithms must be able to start from any initial configuration (which may result from a corruption of the internal variables of the nodes). Under the synchronous scheduler, a self-stabilizing algorithm performs in a sequence of synchronous rounds, just that it must be able to cope with an arbitrary initial state of the system.  

Last but not least, we underline the recent trend related to modeling communication between nodes (under the full-information paradigm) as a topological deformation of the input simplicial complex, and the computation (i.e., the decision of each node regarding its output value) as a simplicial map from the deformed input complex to the output simplicial complex~\cite{HerlihyKR2013}. The \KA\/ model~\cite{castaneda2021topological} has been designed as a first attempt to understand the \textsf{LOCAL} model through the lens of algebraic topology. In particular, it was shown that $k$-set agreement in a graph~$G$ known to all the nodes a priori requires $r$ rounds, where $r$ is the smallest integer such that there exists a $k$-node dominating set in the $r$-th transitive closure of~$G$.   
A follow-up work~\cite{FraigniaudP20}  minimized the involved simplicial complexes, and extended the framework to handle graph problems such as finding a proper coloring.

The study of \emph{anonymous} networks, in which nodes may not be provided with distinct identifiers, and of \emph{asynchronous} communication and computing, is beyond the scope of this paper, and we merely refer the reader to~\cite{Delporte-Gallet18,Delporte-GalletFT09,FraigniaudLR22,LunaV22,LunaV23} for recent advances in these domains, as far as computing in (non-necessarily complete) networks is concerned. 

%%%%%%%%%%%%%%%%%%%%%%%%%%%%%%%%%%%%%%%%%%%%%%%%%%%%%%%%%%%%
\section{Model and definitions}
%%%%%%%%%%%%%%%%%%%%%%%%%%%%%%%%%%%%%%%%%%%%%%%%%%%%%%%%%%%%

In this section, we recall the definition of the (synchronous) $t$-resilient model for networks, and the graph theoretical notions related to this model, all taken from~\cite{CastanedaFPRRT23}, as well as the consensus algorithm presented there. 

%-----------------------------------------------------------
\subsection{The Model}
%-----------------------------------------------------------

Let $G=(V,E)$ be an $n$-node undirected graph, which is also connected and simple (i.e., no multiple edges, nor self-loops). 
Each node $v\in V$ is a computing entity modeled as an infinite state machine. 
The nodes of $G$ have distinct identifiers, which are positive integers. For the sake of simplifying the notations, we shall not distinguish a node~$v$ from its identifier; 
for instance, by ``the smallest node'' we mean ``the node with the smallest identifier''. Initially, every node knows the graph~$G$, that is, it knows the identifiers of all nodes, and how the nodes are connected. 
The uncertainty is thus not related to the initial structure of the connections, but is only due to the presence of potential failures, in addition to the fact that, of course, every node is not a priori aware of the inputs of the other nodes. 

Computation in~$G$ proceeds as a sequence of synchronous rounds. 
All nodes start simultaneously, at round~1. 
At each round, each node sends a message to each of its neighbors in~$G$, receives the messages sent by its neighbors, and performs some local computation. 
Each node may however fail by crashing --- when a node crashes, it stops functioning and never recover.
However, if a node~$v$ crashes at round~$r$, it may still send a message to a non-empty subset of its set~$N(v)$ of neighbors during round~$r$.
For every positive integer $t\geq 0$, the $t$-resilient model assumes that at most $t$ nodes may crash. 
A \emph{failure pattern} is defined as a set 
\[
\varphi=\{(v,F_v,f_v)\mid v\in F\}
\]
where $F\subset V$ is the set of faulty nodes in~$\varphi$, with $0\leq |F|\leq t$, and, for each node $v\in F$, we use $f_v$ to specify the round at which $v$ crashes, and $F_v\subseteq N(v)$ to specify the non-empty set of neighbors to which $v$ fails to send messages at round~$f_v$. 

A node $v\in F$ such that $F_v=N(v)$ is said to crash \emph{cleanly} in~$\varphi$ (at round~$f_v$). 
All the nodes in $V\smallsetminus F$ are the correct nodes in~$\varphi$. The failure pattern in which no nodes fail is denoted by $\varphi_\varnothing$. 
The set of all failure patterns in which at most $t$ nodes fail is denoted by~$\FPA$. 
In any execution of an algorithm in graph $G$ under the $t$-resilient model, the nodes know~$t$ and~$G$, but they do not know in advance to which failure pattern they may be exposed. 
This absence of knowledge is the source of uncertainty in the $t$-resilient model.

%-----------------------------------------------------------
\subsection{Eccentricity, connectivity, and radius}
\label{subsec:ecc-conn-radius}
%-----------------------------------------------------------

The \emph{eccentricity} of a node $v$ in $G$ with respect to a failure pattern $\varphi$, denoted by $\ecc(v,\varphi)$, is defined as the minimum number of rounds required for broadcasting a message from~$v$ to all \emph{correct} nodes in $\varphi$. 
The broadcast protocol is by flooding, i.e., when a node receives a message at round~$r$, it forwards it to all its neighbors at round $r+1$. That is $\ecc(v,\varphi)$ is the maximum, taken over all correct nodes~$v'$, of the length of a shortest causal path from $v$ to $v'$, where a \emph{causal} path with respect to a failure pattern $\varphi$ from a node $v$ to a node~$v'$ is a sequence of nodes $u_1,\dots,u_q$ with $u_1=v$, $u_q=v'$, and, for every $i\in\{1,\dots,q-1\}$, $u_{i+1}\in N(u_i)$, $u_i$~has not crashed in $\varphi$ during rounds $1,\dots,i-1$, and if $u_i$ crashes in $\varphi$ at round~$i$, i.e., if $(u_i,F_i,i)\in \varphi$ for some non-empty set $F_i\subseteq N(u_i)$, then $u_{i+1}\notin F_i$. 

Note that $\ecc(v,\varphi)$ might be infinite, in case $v$ cannot broadcast to all correct nodes in~$G$ under~$\varphi$. 
A typical example is when $v$ crashes cleanly at the first round in~$\varphi$, before sending any message to any of its neighbors. 
A more elaborate failure pattern~$\varphi$  in which $v$ fails to broadcast is $\varphi=\{(v,N(v)\smallsetminus \{w\},1),(w,N(w),2)\}$ where $v$ crashes at round~1, and sends the message only to its neighbor~$w$, which crashes cleanly at round~2. 

The node-connectivity of~$G$, denoted~$\kappa(G)$, is the smallest integer $q$ such that removing $q$ nodes disconnects the graph~$G$ (or reduces it to a single node whenever $G$ is the complete graph~$K_n$). 
The following was established in~\cite{CastanedaFPRRT23}. 

\begin{proposition}[Lemma 1 in~\cite{CastanedaFPRRT23}]
    \label{lem:all-receive-iff-one-receive}
    For every graph $G$, every $t<\kappa(G)$, every node~$v$, and every failure pattern~$\varphi$ in the $t$-resilient model, $\ecc(v,\varphi)<\infty$ if and only if there exists at least one correct node that becomes aware of the message broadcast from~$v$. 
\end{proposition}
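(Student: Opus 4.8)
The plan is to prove both directions. One direction is immediate: if $\ecc(v,\varphi)<\infty$, then by definition $v$ broadcasts to \emph{all} correct nodes, and since $t<\kappa(G)\le n-1$ there is at least one correct node, so at least one correct node becomes aware of the message. The substantive direction is the converse: assuming some correct node $w$ receives the message broadcast from~$v$, I must show \emph{every} correct node does, i.e.\ $\ecc(v,\varphi)<\infty$.

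First I would set up the argument around causal paths. Let $W\subseteq V\setminus F$ be the (nonempty, by hypothesis) set of correct nodes reachable from $v$ by a causal path with respect to~$\varphi$; equivalently, $W$ is the set of correct nodes that become aware of the message. I want to show $W$ equals the whole set $V\setminus F$ of correct nodes. Suppose not, so there is a correct node $u\notin W$. Consider the set $S$ of nodes \emph{not} in $W$ that are \emph{not} correct, i.e.\ $S = V\setminus(W\cup(V\setminus F)) $... more cleanly: partition $V = W \,\dot\cup\, U \,\dot\cup\, F'$ where $U$ is the set of correct nodes that do \emph{not} receive the message and $F'$ is the set of faulty nodes; by assumption $W\ne\varnothing$ and, for contradiction, $U\ne\varnothing$.

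The key step is to show that $F'$ forms a cut separating $W$ from $U$ in $G$, which contradicts $|F'|\le t<\kappa(G)$. Concretely, I claim there is no edge of $G$ between any $a\in W$ and any $b\in U$. Indeed, suppose $ab\in E$ with $a\in W$, $b\in U$. Since $a\in W$, there is a causal path reaching $a$; let $r$ be the round by which $a$ has learned the message (the length of that causal path). Since $a$ is correct, $a$ sends messages to \emph{all} its neighbors at every round, in particular to $b$ at round $r+1$, so $b$ is reachable by a causal path of length $r+1$ — hence $b\in W$, contradiction. Therefore every edge from $W$ goes to $W\cup F'$, so removing $F'$ disconnects $W$ from the nonempty set $U$ (and $W$ itself is nonempty). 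This means $G - F'$ is disconnected, yet $|F'|\le t<\kappa(G)$, contradicting the definition of node-connectivity. (The degenerate case $G=K_n$: here $\kappa=n-1$, $t\le n-2$, so $F'$ cannot be all of $V\setminus\{$one vertex$\}$, and $W,U$ being both nonempty with no edge between them is impossible in a clique once $W\cup U$ has $\ge 2$ vertices.) Hence $U=\varnothing$, every correct node receives the message, and $\ecc(v,\varphi)<\infty$.

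The main obstacle I anticipate is being careful about the timing subtleties baked into the definition of causal path — specifically that a correct node $a$ reaching the message at round $r$ really does forward to \emph{every} neighbor at round $r+1$ (this uses that $a$ is correct, so it never crashes and the clause ``$u_{i+1}\notin F_i$'' is vacuous for correct $u_i$), and that the induction on causal-path length is well-founded even though paths may revisit the graph. A clean way to handle this is to argue by induction on the round number: let $W_r$ be the set of nodes aware of the message after round $r$; show $W_0=\{v\}$ (or handle $v$ crashing at round~1 separately via the causal-path definition), and show that if $W_r$ already contains all correct nodes we are done, while otherwise $W_{r}\subsetneq W_{r+1}$ unless $W_r$ is ``blocked'' entirely by faulty nodes — which is exactly the cut situation ruled out above. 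Since $W_r$ is nondecreasing and bounded, it stabilizes, and at stabilization the boundary argument forces it to contain all correct nodes.
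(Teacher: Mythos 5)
Your argument is correct, and it rests on exactly the right mechanism: the only substantive direction is the converse, and your observation that a correct node which has learned the message forwards it to all neighbours at the next round—so that otherwise the faulty nodes would form a vertex cut of size at most $t<\kappa(G)$ separating two nonempty sets of correct nodes—is precisely what makes the claim work. The paper does not reprove this proposition (it is imported from Casta\~{n}eda et al.), but its proof of the generalization to disconnected settings (Lemma~\ref{lem:manycrash_active}) uses the same idea in direct rather than contrapositive form: since $t<\kappa(G)$ the correct nodes induce a connected subgraph, so the message propagates from the one informed correct node along a path of correct nodes to every other correct node.
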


\noindent Note that, in particular, thanks to proposition~\ref{lem:all-receive-iff-one-receive}, if $v$ is correct then $\ecc(v,\varphi)<\infty$. Let 
\[
\Phi^\star_v=\{\varphi\in \FPA \mid \ecc(v,\varphi)<\infty\}
\]
denote the set of failure patterns in the $t$-resilient model in which $v$ eventually manages to broadcast to all correct nodes.  The \emph{$t$-resilient radius} is a key parameter defined in~\cite{CastanedaFPRRT23}: 

\begin{definition}
The \emph{$t$-resilient radius} of $G$ is 
\[
\radius(G,t)=\min_{v\in V}\max_{\varphi\in\Phi^\star_v} \ecc(v,\varphi). 
\]

\end{definition}

%-----------------------------------------------------------
\subsection{Consensus, oblivious algorithms, and the information flow graph}
%-----------------------------------------------------------

This section defines consensus, and survey the results in~\cite{CastanedaFPRRT23} regarding the round-complexity of oblivious consensus algorithms, which uses the notion of information flow graph. Note that this latter notion will be revisited, later in our paper. 

\subsubsection{Oblivious consensus algorithms}
\label{sec:algo-of-Castaneda}

In the consensus problem, every node $v\in V$ receives an input value $x_v$ from a set $I$ of cardinality at least~2, and every correct node must decide on an output value~$y_v\in I$ such that (1)~$y_u=y_v$ for every pair $\{u,v\}$ of correct nodes, and (2)~for every correct node $v\in V$, there exists $u\in V$ (not necessarily correct) such that $y_v=x_u$. 

Assuming that every node $u\in V$ starts broadcasting the pair $(u,x_u)$ at round~1, we let $\view(v,\varphi,r)$ be the \emph{view} of node~$v$ after $r\geq 0$ rounds in failure pattern~$\varphi$, that is, the set of pairs $(u,x_u)$ received by~$v$ after $r$ rounds. An algorithm solving consensus is said to be \emph{oblivious} if the output~$y_v$ of every correct node~$v$ depends only on the set of values received by~$v$ during the execution of the algorithm. That is, in an $r$-round oblivious algorithm executed under failure pattern~$\varphi$, every node $v$ outputs a value  based solely on the set of pairs $(u,x_u)\in \view(v,\varphi,r)$
(and not, say, on when each value was first received, or from which neighbor it was received). 
The following result was proved in~\cite{CastanedaFPRRT23}.

\begin{proposition}[Theorem~2 in \cite{CastanedaFPRRT23}]
\label{theo:mainCFRRT}
    For every graph $G$ and every $t<\kappa(G)$, consensus in $G$ can be solved by an oblivious algorithm running in $\radius(G,t)$ rounds under the $t$-resilient model. 
\end{proposition}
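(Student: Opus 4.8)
The plan is to exhibit an explicit oblivious algorithm: every node $v$ starts flooding its pair $(v,x_v)$ at round~$1$, the protocol runs for exactly $R:=\radius(G,t)$ rounds, and then $v$ outputs $\delta(\view(v,\varphi,R))$ for a decision function $\delta$ depending only on $G$ and $t$; obliviousness then holds by construction, and all the work lies in designing $\delta$ so that agreement and validity hold for every $\varphi\in\FPA$.

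The two tools I would use are Proposition~\ref{lem:all-receive-iff-one-receive} and the definition of $\radius(G,t)$ as a minimum over nodes. Call a node $u$ \emph{robust} for $(G,t)$ if $\max_{\varphi\in\Phi^\star_u}\ecc(u,\varphi)\le\radius(G,t)$; such nodes exist by definition of $\radius(G,t)$, and for a robust $u$, Proposition~\ref{lem:all-receive-iff-one-receive} yields the dichotomy that, in every $\varphi$, either all correct nodes have $(u,x_u)$ in their $R$-round view, or none do (the latter happening precisely when $u$ is faulty). A small auxiliary observation: if $u$ is faulty and yet no correct node ever hears $(u,x_u)$, then $u$ crashes in round~$1$ and sends only to faulty neighbors, since otherwise, because the minimum degree of $G$ is at least $\kappa(G)>t$, node $u$ would reach a correct neighbor. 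Hence, when computing $\ecc(w,\varphi)$ for a correct node $w$, such a $u$ behaves exactly as a deleted vertex; so if a set $D$ of nodes is faulty and silent to all correct nodes under $\varphi$, then the correct nodes are experiencing a genuine execution of the $(t-|D|)$-resilient model on the graph $G\setminus D$.

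This suggests the following $\delta$, computed from the view $\Sigma$ together with $G$ and $t$. Maintain a set $D$, initially empty. If $\Sigma$ contains a pair $(u,x_u)$ for some node $u$ robust for $(G\setminus D,\,t-|D|)$, output $x_u$ for the smallest such~$u$; otherwise, insert into $D$ every node robust for $(G\setminus D,\,t-|D|)$ and repeat. For correctness, let $i^\star$ be the first stage at which the view of some correct node contains a pair of a node robust for the current $G\setminus D$. At every earlier stage the inserted nodes are heard by no correct node, hence (by the auxiliary observation) are faulty and silent; therefore every correct node reaches stage $i^\star$ with the very same set $D$, and the actual failure pattern restricts to a legal $(t-|D|)$-resilient pattern on $G\setminus D$. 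The node $u$ picked at stage $i^\star$ is then heard by \emph{all} correct nodes within $R$ rounds, by its robustness together with Proposition~\ref{lem:all-receive-iff-one-receive}; consequently every correct node outputs the same value $x_u$, and $x_u$ is an input, so agreement and validity follow. Existence of $i^\star$ is a counting argument: $D$ strictly grows at each stage, and a stage whose robust set exceeds the remaining budget $t-|D|$ cannot have that whole set faulty and silent, forcing some of its pairs into every correct view; hence the process stops within $t+1$ stages, the last one a fault-free instance on some $G\setminus D$ whose center is correct and therefore heard.

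The step I expect to be the main obstacle is the assertion, used throughout the above, that a node robust for $(G\setminus D,\,t-|D|)$ broadcasts within $R=\radius(G,t)$ rounds --- equivalently, the monotonicity statement $\radius(G\setminus D,\,t-|D|)\le\radius(G,t)$ whenever $|D|\le t<\kappa(G)$ (which, incidentally, also guarantees that $G\setminus D$ stays connected and that $t-|D|<\kappa(G\setminus D)$, so the recursive instances are well-posed and Proposition~\ref{lem:all-receive-iff-one-receive} applies to them). This is not a purely combinatorial statement about $G$, since the resilient radius is itself a maximum over failure patterns: the content is that passing from $G$ to $G\setminus D$ discards exactly the patterns in which some vertex of $D$ is correct or crashes late --- the patterns that could slow a broadcast down --- while the surviving patterns are in bijection with those of the smaller instance through the ``clean round-$1$ crash $\equiv$ vertex deletion'' correspondence noted above. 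Making this rigorous requires a careful case analysis of causal paths, which I expect to be where most of the effort goes; the remainder of the argument is bookkeeping with Proposition~\ref{lem:all-receive-iff-one-receive}.
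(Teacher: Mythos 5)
Your algorithmic skeleton is sound and closely parallels the construction that the paper attributes to~\cite{CastanedaFPRRT23} (Section~\ref{sec:algo-of-Castaneda}): under the correspondence ``clean round-$1$ crash of $D$ $\equiv$ deletion of $D$'', your iterated sets of robust nodes for the residual instances $(G\smallsetminus D,\,t-|D|)$ play exactly the role of the sources $s_1,s_2,\dots$ that the algorithm of~\cite{CastanedaFPRRT23} selects by iteratively restricting the failure patterns to $\FPA\smallsetminus\bigcup_j\Phi_{s_j}^\star$. The surrounding bookkeeping (every node discarded at an earlier stage is faulty and silent, all correct nodes compute the same $D$ and stop at the same stage, termination within $t+1$ stages) is essentially correct, modulo small technicalities such as entries with $F_v\subseteq D$ becoming empty in the restricted pattern.

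The genuine gap is the one you flag yourself: the inequality $\radius(G\smallsetminus D,\,t-|D|)\le\radius(G,t)$, i.e., that a robust node of a residual instance still broadcasts within $R=\radius(G,t)$ rounds. This is not routine, and it does not follow from your embedding of residual patterns into $\FPA$: that embedding gives $\max_{\psi}\ecc_{G\smallsetminus D}(v,\psi)\le\max_{\varphi\in\Phi_v^\star}\ecc_G(v,\varphi)$ for each surviving $v$, but $D$ is precisely the set of minimizers of the right-hand side, so after deleting it the minimum over $v\notin D$ of that right-hand side is \emph{strictly larger} than $\radius(G,t)$ --- the comparison points the wrong way. The entire content of the theorem is that restricting to patterns in which every node of $D$ fails to broadcast pulls the worst-case eccentricity of some surviving node back down to at most $\radius(G,t)$; this is exactly the ``key property of the sequence $s_1,\dots,s_{t+1}$'' stated in Section~\ref{sec:algo-of-Castaneda} and proved in~\cite{CastanedaFPRRT23}, and your remark that deletion ``discards exactly the patterns that could slow a broadcast down'' is an intuition, not an argument: discarding patterns shrinks a maximum, but you have simultaneously changed the node over which you minimize, and nothing in the proposal bounds the new node's worst case. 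Until that lemma is established (e.g., by a causal-path surgery comparing a pattern where some $d\in D$ fails to broadcast with one where it succeeds), your construction yields obliviousness and validity but not the $\radius(G,t)$ round bound, hence not the proposition.
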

That is, consensus can be solved in the minimal time it takes for a \emph{fixed} node to broadcast in all failure patterns (in which it manages to broadcast). Note that 
$\radius(G,t)$ might be much larger than 
$\max_{\varphi\in\FPA}\min_{v\in V}\ecc(v,\varphi)$. 
For instance, the radius of the clique $K_n$ is $t+1$: consider a path $(v_1,...,v_{t+1})$ in which $v_1=v$, and, for every $i\in \{1,...,t\}$, $v_i$ crashes at round $i$ while sending only to $v_{i+1}$. 
On the other hand, 
$\max_{\varphi\in\FPA}\min_{v\in V}\ecc(v,\varphi)=1$  because, for every failure pattern~$\varphi$, there is a (correct) node~$v$ that broadcasts to all correct nodes in a single round. 
Similarly, the  cycle $C_n$ has radius $n-1$, whereas 
$\max_{\varphi\in\FPA}\min_{v\in V}\ecc(v,\varphi)$ 
is roughly~$n/2$.

The consensus algorithm in~\cite{CastanedaFPRRT23} works as follows. It selects an ordered set of $t+1$ nodes $s_1,\dots,s_{t+1}$ according to the following rules. Node~$s_1$ is a node with smallest eccentricity, i.e., a node that broadcasts the fastest among all nodes. However, there are failure patterns for which $s_1$ fails to broadcast (e.g., if $s_1$ crashes cleanly at round~1).  Node~$s_2$ is a node that broadcasts the fastest for all failure patterns in which $s_1$ fails to broadcast, that is node~$s_2$ is a node that broadcasts the fastest for all failure patterns in $\FPA\smallsetminus \Phi_{s_1}^\star$. Similarly, node $s_3$ is a node that broadcasts the fastest for all failure patterns in which $s_1$ and $s_2$ fail to broadcast, that is node~$s_3$ is a node that broadcasts the fastest for all failure patterns in $\FPA\smallsetminus (\Phi_{s_1}^\star\cup \Phi_{s_2}^\star)$. And so on, for every $1<i\leq t+1$, $s_i$ is a node that broadcasts the fastest for all failure patterns in 
\[
\FPA\smallsetminus \cup_{j=1,\dots,i-1}\Phi_{s_j}^\star.
\]
A key property of the sequence $s_1,\dots,s_{t+1}$ defined as above is that, for all $1<i\leq t+1$, the worst-case broadcast time of $s_i$ over all failure patterns in 
\[
\FPA\smallsetminus \cup_{j=1,\dots,i-1}\Phi_{s_j}^\star
\]
is at most the worst-case broadcast time of $s_{i-1}$ over all failure patterns in 
\[
\FPA\smallsetminus \cup_{j=1,\dots,i-2}\Phi_{s_j}^\star.
\]
As a consequence, for every $i\in\{1,\dots,t+1\}$, the worst-case broadcast time of $s_i$ over all failure patterns in $\FPA\smallsetminus \cup_{j=1,\dots,i-1}\Phi_{s_j}^\star$ is at most $\radius(G,t)$ rounds. 

The algorithm in~\cite{CastanedaFPRRT23} merely consists of letting all nodes $s_1,\dots,s_{t+1}$ broadcast the pairs $(s_i,x_{s_i})$ by flooding during $\radius(G,t)$ rounds. Every node $u$ then selects as output the input $x_{s_i}$ of the node $s_i$ with smallest index~$i$ such that the pair $(s_i,x_{s_i})$ was received by node~$u$. It was shown that this choice guarantees agreement.

\subsection{Information flow graph}

The lower bound from \cite{CastanedaFPRRT23} on the number of rounds for achieving consensus in vertex-transitive graphs used the core notion of  \emph{information flow digraph}. The (directed) graph $\IF(G,r)$ captures the state of mutual knowledge of the nodes at the end of round $r\geq 1$, assuming every node~$u$ broadcasts the pair $(u,x_u)$ by flooding throughout the graph~$G$, starting at round~1.

\begin{itemize}
    \item The vertices of $\IF(G,r)$ are all pairs $(v,\view(v,r,\varphi))$ for  $v\in V$ and $\varphi\in\FPA$ in which $v$ does not crash in $\varphi$ during the first $r$  rounds. Note that a same vertex of $\IF(G,r)$ can represent both $(v,\view(v,r,\varphi))$ and $(v,\view(v,r,\psi))$ if $v$ has the same view after $r$ rounds in $\varphi$ and $\psi$.
\item There is an arc from $(u,\view(u,r,\varphi))$ to $(v,\view(v,r,\varphi))$ whenever $(u,x_u)\in \view(v,r,\varphi)$, where~$x_u$ is the input of $u$.
\end{itemize}

The \emph{connected components} of $\IF(G,r)$ play an important role, where by connected component we actually refer to the vertices of a connected component of the undirected graph resulting from $\IF(G,r)$ by ignoring the directions of the arcs. A node $v\in V$ of the communication graph $G=(V,E)$ is said to \emph{dominate} a connected component $C$ of $\IF(G,r)$ if, for every vertex $(u,\view(u,r,\varphi))\in C$ with $u\neq v$ there is a vertex $(v,\view(v,r,\varphi))\in C$ with an arc from $(v,\view(v,r,\varphi))$ to $(u,\view(u,r,\varphi))$ in $\IF(G,r)$. 
The following result characterizes the round-complexity of consensus in~$G$.

\begin{proposition}[Theorem~3 in \cite{CastanedaFPRRT23}]
\label{theo:characterizationCFRRT}
    For every graph $G=(V,E)$ and every $t<\kappa(G)$, consensus in $G$ can be solved by an oblivious algorithm running in $r$ rounds under the $t$-resilient model if and only if every connected component of $\IF(G,r)$ has a dominating node in~$V$. 
\end{proposition}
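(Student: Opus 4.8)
The plan is to prove the two directions of the equivalence separately, in each case translating a statement about oblivious algorithms into a statement about arcs and connected components of $\IF(G,r)$. I would first observe that an $r$-round oblivious algorithm is, by definition, nothing more than a decision function $\delta$ mapping the view $\view(v,r,\varphi)$ of a (non-crashed) node $v$ to an output value in $I$, and that correctness of such an algorithm is equivalent to the two consensus requirements — agreement and validity — holding for every failure pattern $\varphi\in\FPA$ and every input assignment. The key bridge between the combinatorial structure and the algorithm is this: two vertices $(u,\view(u,r,\varphi))$ and $(v,\view(v,r,\psi))$ of $\IF(G,r)$ lying in the same connected component means that there is a chain of ``who-heard-whom'' arcs linking them, and along any such chain the presence of an arc from $(a,\cdot)$ to $(b,\cdot)$ forces, under a suitably chosen input assignment, the view of $b$ to contain a pair that pins down $a$'s input. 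Since the algorithm is oblivious (it sees only the set of received pairs), this is exactly the handle one needs to propagate the output value along the component.

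For the ``if'' direction, assume every connected component $C$ of $\IF(G,r)$ has a dominating node; I would simply exhibit the algorithm, which is essentially the one of~\cite{CastanedaFPRRT23} recast in this language: each node $v$, after $r$ rounds with view $W=\view(v,r,\varphi)$, looks at which connected component its vertex $(v,W)$ lies in, picks a dominating node $s$ of that component, and outputs $x_s$ — which is legitimate because, $s$ dominating $C$ and $(v,W)\in C$, there is an arc into $(v,W)$ from some $(s,\view(s,r,\varphi))$, so the pair $(s,x_s)$ is in $W$ and thus the value $x_s$ is actually available to $v$. Validity is then immediate. For agreement, two correct nodes running under the same $\varphi$ have vertices in $\IF(G,r)$ that are connected (flooding for $r$ rounds under $t<\kappa(G)$ guarantees, by an argument in the spirit of Proposition~\ref{lem:all-receive-iff-one-receive}, that the subgraph of $\IF(G,r)$ induced by the non-crashed nodes of a fixed $\varphi$ is connected), hence they lie in the same component $C$, pick the same dominating node $s$ (say the one of smallest identifier), and output the same value $x_s$. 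One subtlety to handle carefully is that the choice of dominating node must be made consistently as a function of the component alone, not of the node or the failure pattern, and that a given vertex of $\IF(G,r)$ may arise from several pairs $(v,\varphi)$ — but since the vertex determines the component, the decision is well defined.

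For the ``only if'' direction — the contrapositive is cleaner — I would suppose some connected component $C$ of $\IF(G,r)$ has no dominating node, and derive that no oblivious $r$-round algorithm can solve consensus. Fix any oblivious algorithm with decision function $\delta$. The idea is an indistinguishability/chain argument: walk along a path in the (undirected) graph underlying $C$, say $(v_0,W_0),(v_1,W_1),\dots,(v_k,W_k)$; for consecutive vertices joined by an arc, one can build a single failure pattern and input assignment realizing both views simultaneously, and the arc forces one node's output to be constrained by the other's input, so agreement across the edge forces $\delta(W_i)=\delta(W_{i+1})$ once inputs are chosen adversarially (a standard $0/1$-valence argument: set the input of the ``upstream'' node and propagate). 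Chaining this around the whole component shows $\delta$ is constant on $C$, equal to some fixed value — but then pick a vertex $(v,W)\in C$ together with a failure pattern $\varphi$ and an input assignment in which the unique input equal to that fixed value belongs to a node $w$ that is \emph{not} in $C$ (possible precisely because no node dominates $C$: for every candidate $s$ there is a vertex $(u,\cdot)\in C$ with no arc from $s$ into it, i.e.\ $u$ never hears $s$, which is what lets us place the distinguished input outside $C$ and violate validity at $u$). This yields a violation of either validity or agreement, a contradiction.

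The main obstacle I anticipate is the bookkeeping in the ``only if'' chain argument: one must be careful that the failure patterns witnessing the individual arcs along a path in $C$ can be chosen so that an \emph{input assignment} can be fixed once and for all that simultaneously triggers all the required view-coincidences and makes the relevant pairs appear or not appear. Because $\IF(G,r)$ is built over all of $\FPA$ and a single vertex can be witnessed by many $(v,\varphi)$ pairs, the cleanest route is probably to argue locally — edge by edge, each time using only indistinguishability of two specific executions for the two endpoints — rather than trying to glue one global execution; the ``fixing some inaccuracies in the original definition'' remark in the introduction suggests this is exactly where the earlier treatment was fragile, so I would pay particular attention to making the vertex identification (same node, same view, possibly different $\varphi$) interact correctly with both arc directions.
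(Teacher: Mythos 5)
You set out to prove the statement with the digraph $\IF(G,r)$ and the arc-based notion of domination inherited from~\cite{CastanedaFPRRT23}, but this is precisely the version the paper declares formally incorrect and does not prove; it instead redefines the information flow graph (undirected, with an edge between two views whenever they co-occur under a common failure pattern) and redefines domination as $(v,x_v)\in\view(u,r,\varphi)$, and proves the characterization only for that corrected object (Theorem~\ref{thm:new-characterizationCFRRT}). Your ``only if'' direction founders exactly on the bug being fixed: you infer from ``there is no arc from $s$ into $(u,\view(u,r,\varphi))$'' that ``$u$ never hears $s$''. Under the digraph definition the source vertex $(s,\view(s,r,\varphi))$ simply does not exist whenever $s$ crashes during the first $r$ rounds, so the absence of an arc is entirely compatible with $(s,x_s)$ lying in every correct node's view; a component can then fail to be dominated while all correct nodes can still agree on $x_s$. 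This is the content of the paper's Figure~\ref{fig:IFG} discussion, and it makes the contrapositive you set out to prove false as stated.

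Separately, even after repairing the definitions, the closing step of your ``only if'' argument does not go through. Having argued that the decision function is constant on a non-dominated component $C$ (your chain argument, which is essentially the paper's Claim~\ref{claim:they-output-the-same}), you try to contradict validity by choosing an input assignment in which ``the unique input equal to that fixed value'' belongs to a node ``outside $C$''. This is circular --- the constant value is only determined once the inputs are fixed --- and in any case validity only requires the output to equal the input of \emph{some} node of $G$, so nothing is violated by where that node's vertices sit in $\IF(G,r)$. The paper instead finishes with a hybrid argument: enumerate the nodes $u_0,\dots,u_{n-1}$, use non-domination to find for each $u_i$ a vertex $(v_i,\view(v_i,r,\varphi_i))\in C$ with $(u_i,x_{u_i})\notin\view(v_i,r,\varphi_i)$, and slide from the all-zero input configuration to the all-one configuration one node at a time; indistinguishability at $v_i$, agreement under $\varphi_i$, and constancy of the decision on $C$ transport the common output value across each flip, contradicting validity at the two endpoints. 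Your ``if'' direction is essentially the paper's and is fine, including the observation that all views arising from a single failure pattern lie in one component.
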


It was proved in~\cite{CastanedaFPRRT23} that, if $G$ is a symmetric graph then no node in $V$ dominates $\IF(G,\radius(G,t)-1)$.
Property~\ref{theo:characterizationCFRRT} immediately implies  that consensus in $G$ cannot be solved by an oblivious algorithm running in less than $\radius(G,t)$ rounds under the $t$-resilient model. 
Their proof, however, holds only for symmetric graphs, and does not extend to general graphs. 

\subparagraph{Remark.}

The definition of the information flow \emph{digraph} in~\cite{CastanedaFPRRT23} actually suffers from inconsistencies, and Theorem~3 there is formally incorrect. 
Roughly, it overlooks the possibility of deciding on an input of a process that already stopped.
The ``spirit'' of the definition and the theorem is nevertheless plausible,
and the specific consequences mentioned there are correct.
For establishing our lower bound, we had to fix the inaccuracy in the definition of the information flow digraph, and the bugs in the proof of Theorem~3 of~\cite{CastanedaFPRRT23}.
In the next section we introduce a new information flow \emph{graph} instead of the digraph of~\cite{CastanedaFPRRT23}, and establish a correct version of Theorem~3 using that definition (cf.\ Theorem~\ref{thm:new-characterizationCFRRT}).

%------------------------------------------------------------
\section{Lower bounds for consensus}
\label{sec:lower-bound-for-consensus}
%------------------------------------------------------------

We show that the consensus algorithm in~\cite{CastanedaFPRRT23} is optimal for every graph~$G$, and not only for symmetric graphs. 
Specifically, we establish the following.

\begin{theorem}
\label{theo:main:consensus}
For every graph $G$ and every $t<\kappa(G)$, consensus in $G$ cannot be solved in less than $\radius(G,t)$ rounds by an oblivious algorithm in the $t$-resilient model. 
\end{theorem}

This result was conjectured in~\cite{CastanedaFPRRT23}, but only proved to be true for symmetric graphs. 
The class of symmetric graphs includes cliques, cycles and  hypercubes, but remains limited. Moreover, in symmetric graphs, for every two nodes $u$ and~$v$, \[
\ecc(u,\FPA)=\ecc(v,\FPA)=\radius(G,t),
\]
which implies that a naive algorithm for consensus in which every node outputs the input received from the node with smallest identifier performs in $\radius(G,t)$ rounds. The fact that $\radius(G,t)$ is a tight upper bound for consensus is thus not surprising for the family of symmetric graphs because, essentially, the choice of the $t+1$ nodes $s_1,\dots,s_{t+1}$ defined in Section~\ref{sec:algo-of-Castaneda} does not matter. 

Instead, for an arbitrary graph~$G$, two different nodes may have different eccentricities, which may differ by a multiplicative factor~2 at least. 
As a consequence, the choice of the source nodes $s_1,\dots,s_{t+1}$ whose input can be adopted as output by the other nodes matters, as well as the ordering of these nodes (in case a node receives the input of two different source nodes). 

\begin{figure}[tb]
    \centering
    \begin{tikzpicture}[scale = 1.07]

\tikzstyle{whitenode}=[circle,minimum size=0pt,inner sep=0pt,font=\scriptsize]
\tikzstyle{thicknode}=[circle,minimum size=0pt,inner sep=0pt]

\draw (0,0) node[thicknode] (a0)   [] {000..0};
\draw (2,0) node[thicknode] (a1)   [] {100..0};
\draw (4,0) node[thicknode] (a2)   [] {110..0};
\draw (5,0) node[whitenode] ()   [] {$\ldots$};
\draw (6,0) node[thicknode] (a3)   [] {1..100};
\draw (8,0) node[thicknode] (a4)   [] {1..110};
\draw (10,0) node[thicknode] (a5)   [] {1..111};

\draw (0,-0.5) node[whitenode] ()   [] {$I_0$};
\draw (2,-0.5) node[whitenode] ()   [] {$I_1$};
\draw (4,-0.5) node[whitenode] ()   [] {$I_2$};
\draw (6,-0.5) node[whitenode] ()   [] {$I_{n-2}$};
\draw (8,-0.5) node[whitenode] ()   [] {$I_{n-1}$};
\draw (10,-0.5) node[whitenode] ()   [] {$I_n$};

\path[dotted,draw] (a0) edge node[above,font=\scriptsize] {$w_1$} (a1);
\path[dotted,draw] (a1) edge node[above,,font=\scriptsize] {$w_2$} (a2);
\path[dotted,draw] (a3) edge node[above,,font=\scriptsize] {$w_{n-1}$} (a4);
\path[dotted,draw] (a4) edge node[above,,font=\scriptsize] {$w_n$} (a5);

\end{tikzpicture}
    \caption{Input configurations $I_0,\ldots,I_n$ of a graph $G=(V,E)$, where $V=\{v_1,\ldots,v_n\}$. } 
    \label{fig:consensus-one-dim}
\end{figure}

\subsection{A naive lower bound}

A naive lower bound for the round-complexity of consensus is the maximum, over all failure patterns, of the time it takes \emph{some} node to broadcast in the given pattern, obtained by switching the min and max operator in the definition of $\radius(G,t)$, i.e., 
\begin{equation}\label{eq:trivial-lower-bound-consensus}
\max_{\varphi\in\FPA} \min_{v\in V} \ecc(v,\varphi). 
\end{equation}
Indeed, for every failure pattern~$\varphi$, even binary consensus under failure pattern~$\varphi$ cannot be solved in less than $R(\varphi)=\min_{v\in V} \ecc(v,\varphi)$ rounds. The proof of this claim is by a standard indistinguishability argument. Specifically, let us assume, for the purpose of contradiction, that there is an  algorithm ALG solving consensus in $G=(V,E)$ under failure pattern~$\varphi$ in $R(\varphi)-1$ rounds. Let us order the nodes of $G$ as $v_1,\dots,v_n$ arbitrarily. Let us consider the input configuration~$I_0$ in which all nodes have input~0. For every $i=1,\dots,n$, we gradually change the input configuration as follows (see Figure~\ref{fig:consensus-one-dim}). Since $\ecc(v_i,\varphi)>R(\varphi)$, there exists a node~$w_i$ that does not receive the input of $v_i$ in ALG. Let us then switch the input of $v_i$ from~0 to~1, and denote by $I_i$ the resulting input configuration. Note that $I_n$ is the input configuration in which all nodes have input~1. Note also that, for every $i\in\{1,\dots,n\}$, node $w_i$ does not distinguish $I_{i-1}$ from $I_i$, and therefore ALG must output the same at $w_i$ in both input configurations. Since, for every $i\in\{1,\dots,n\}$, all nodes must output the same value for input configuration~$I_i$, we get that the consensus value returned by ALG for $I_0$ is the same as for~$I_n$, which contradicts the validity condition.

It was conjectured in~\cite{CastanedaFPRRT23} that, in the $t$-resilient model, consensus needs longer time than $\max_{\varphi\in\FPA} \min_{v\in V} \ecc(v,\varphi)$, and cannot be solved by an oblivious algorithm in less than $\radius(G,t)$ rounds, i.e., the time it takes a fixed node to broadcast. As said before, this conjecture was however proved only for vertex-transitive graphs.

\subsection{Information flow graph revisited}
%--------------------------------------------------------------

In order to prove Theorem~\ref{theo:main:consensus}, we first establish a consistent notion of information flow graph, which can then be used to characterize consensus solvability, and we fix the bugs in the proof of Theorem~3 in~\cite{CastanedaFPRRT23} (see Proposition~\ref{theo:characterizationCFRRT}) resulting from inconsistencies in the original definition of the information flow digraph. 

The main issue with the notion of information flow \emph{digraph} $\IF(G,r)$ as defined  in~\cite{CastanedaFPRRT23} comes from the fact that this directed graph includes only vertices $(v,\view(v,r,\varphi))$ where $v$ has not crashed in $\varphi$ during rounds $1,\dots,r$. 
%
%However, $v$ may actually crash in~$\varphi$, but at a round~$>r$. This results in the fact that $\IF(G,r+1)$ may have fewer nodes than~$\IF(G,r)$.
%\mhn{However, agreement and termination conditions are only applied to correct nodes under a failure pattern. In other words, the fact that $\IF(G,r)$ contains a vertex $(v,\view(v,r,\varphi))$ does not imply that node $v$ have to terminate and decide an output in $\varphi$.}
%\ami{[[[I am still not sure about the aforementioned issue. If $(v,\view(v,r,\varphi))$ appears in $\IF(G,r)$ then in order to terminate in round $r$, node $v$ has to produce an output, even if it is prone to fail in a later round.]]]} \mhn{In other words, the fact that $\IF(G,r)$ contains vertices $(v,\view(v,r,\varphi))$ and $(u,\view(u,r,\varphi))$ does not imply that node $v$ and node $u$ have to decide on the same value in $\varphi$.}
%
The main issue is related to the concept of domination, as defined in~\cite{CastanedaFPRRT23}. A vertex $v$ dominates a connected component $C$ of $\IF(G,r)$ if the set $\{(v, \view(v,r,\varphi)) \mid  \varphi \in \FPA\}$ dominates~$C$. 
This is too restrictive, as the correct nodes may agree on the input value of a node~$v$ that has already crashed. 
It follows that, for some failure pattern $\varphi$, the vertex $(v, \view(v,r,\varphi))$ may not be present in $\IF(G,r)$ (and therefore cannot dominate any other vertices of $\IF(G,r)$), whereas the nodes that are correct in~$\varphi$ may agree on the input value of~$v$. The characterization of Theorem~3 in~\cite{CastanedaFPRRT23} is therefore incorrect, even if the ``spirit'' of the characterization remains conceptually valid, as we shall show in this section. 

To provide an illustration of the problems resulting from the original definition of information flow digraph in~\cite{CastanedaFPRRT23}, let us clarify that this definition was aiming for capturing any subset $\Phi\subseteq \FPA$ of failure patterns (for instance the subset $\Phi$ of failure patterns in which nodes crash cleanly), in which case only the failure patterns $\varphi\in\Phi$ are considered. Let us then consider the scenario displayed on Fig.~\ref{fig:IFG}. The graph $G$ is a 6-node path plus a universal node~$v$. The set $\Phi=\{\varphi\}$ contains a single failure pattern $\varphi$ in which $v$ crashes cleanly at the second round.

Fig.~\ref{fig:IFG} displays $\IF(G,1,\{\varphi\})$ and $\IF(G,2,\{\varphi\})$ as defined in~\cite{CastanedaFPRRT23} (the direction of the arcs are omitted, each edge corresponding to two symmetric arcs). 
%Node $v$
A vertex $(v,\view(v,r,\varphi))$ is present in the former but not in the latter, and thus, as opposed to what one might expect since nodes acquire more and more information as time passes, $\IF(G,2,\{\varphi\})$ is not a denser super graph of $\IF(G,1,\{\varphi\})$ nor it includes more vertices (with larger views), as some vertices present in $\IF(G,1,\{\varphi\})$ may disappear in $\IF(G,2,\{\varphi\})$. 
In fact, node $v$ dominates $\IF(G,1,\{\varphi\})$, but it does not dominate $\IF(G,2,\{\varphi\})$. 
Therefore, when analyzing $G$ with the set 
$\{\varphi\}$ of failure patterns
using the characterization theorem in~\cite{CastanedaFPRRT23},
consensus should be solvable in 1~round but not in 2~rounds! 

\begin{figure}[tb]
	\centering
	\includegraphics[width=14cm]{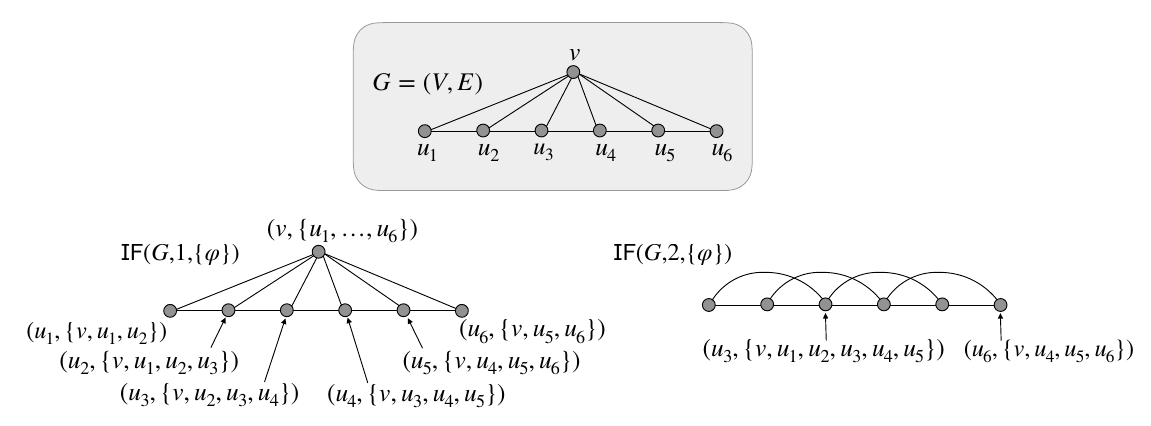}
	\caption{The information flow graph $\IF(G,r,\{\varphi\})$ as defined in~\cite{CastanedaFPRRT23} for $r=1$ and $r=2$, where $\varphi$ is the failure pattern in which $v$ crashes cleanly at the second round.
		No node dominates $\IF(G,2,\{\varphi\})$ (right), even though consensus is solvable in $G$ under $\varphi$ in $2$ rounds.}
	\label{fig:IFG}
\end{figure}

We propose below a more robust notion of information flow \emph{graph} (which is not directed anymore). The reader familiar with the algebraic topology interpretation of distributed computing~\cite{HerlihyKR2013} will recognize the mere 1-skeleton of the protocol complex after $r$ rounds. For the purpose of fixing the issues in~\cite{CastanedaFPRRT23}, we introduce $\IF(G,r,\Phi)$ for an arbitrary set of failure patterns $\Phi\subseteq\FPA$. 

\begin{definition}
	The information flow graph of a communication graph $G=(V,E)$ after $r\geq 0$ rounds for a set $\Phi\subseteq \FPA$, $t\geq 0$, of failure patterns is the graph $\IF(G,r,\Phi)$ defined as follows. 
	\begin{itemize}
		\item The vertices of $\IF(G,r,\Phi)$ are all pairs $(v,\view(v,r,\varphi))$ for  $v\in V$ and $\varphi\in\Phi$, where $v$ is correct in $\varphi$. 
		\item There is an edge between $(v_1,w_1)$ and $(v_2,w_2)$ in $\IF(G,r,\Phi)$ whenever there exists $\varphi\in\Phi$ such that $w_1 = \view(v_1,r,\varphi)$ and $w_2 = \view(v_2,r,\varphi)$
	\end{itemize}
\end{definition}

\subparagraph{Remark.} 

Unlike the definition of~\cite{CastanedaFPRRT23}, this new notion of information-flow graph is not limit limited to $t\leq \kappa(G)$. 

\medbreak

Note that a same vertex $(v,\omega)$ of $\IF(G,\Phi,r)$ can represent both $(v,\view(v,r,\varphi))$ and $(v,\view(v,r,\psi))$ if $v$ has the same view after $r$ rounds in $\varphi\in\Phi$ and $\psi\in\Phi$. Note also that, for every $\varphi\in\Phi$, the set  
\[
\config(G,r,\varphi)=\{(v,\view(v,r,\varphi))\in \IF(G,r,\Phi) \mid v\in V\}
\]
is a clique in $\IF(G,r,\Phi)$. The \emph{connected components} of $\IF(G,r,\Phi)$ play an important role, w.r.t. the following concept of \emph{domination}. 

\begin{definition}
	A node $v\in V$ of the communication graph $G=(V,E)$ is said to \emph{dominate} a connected component $C$ of $\IF(G,r,\Phi)$ if, for every $\varphi\in\Phi$ and every $u\in V$, 
	$$
	(u,\view(u,r,\varphi))\in C \Longrightarrow (v,x_v) \in \view(u,r,\varphi).
	$$
\end{definition} 

\noindent 
Note that only correct nodes need to be dominated, as 
\[
(u,\view(u,r,\varphi))\in C \subseteq \IF(G,r,\Phi)
\]
implies that $u$ is correct at round $r$.
On the other hand, any node may be dominating.
The following result characterizes the round-complexity of consensus in~$G$ by fixing the aforementioned inaccuracies in the definition of the information flow graph in~\cite{CastanedaFPRRT23}, with impact on the proof of their characterization theorem (Theorem~3 in~\cite{CastanedaFPRRT23}).  

\begin{theorem}
	\label{thm:new-characterizationCFRRT}
	For every graph $G=(V,E)$, every $t\geq 0$, and every set of failure patterns $\Phi\subseteq\FPA$, consensus in $G$ can be solved by an oblivious algorithm running in $r$ rounds under the $t$-resilient model with failure patterns in~$\Phi$ if and only if every connected component of $\IF(G,r,\Phi)$ has a dominating node in~$V$. 
\end{theorem}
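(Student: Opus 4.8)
The plan is to prove the two implications separately, both resting on a single \emph{indistinguishability principle}: for every oblivious algorithm solving consensus in $r$ rounds for the failure patterns in $\Phi$, and for every input configuration, all vertices lying in the same connected component of $\IF(G,r,\Phi)$ are forced to decide the same output value. I would first establish this principle. Its atomic case is a single clique $\config(G,r,\varphi)$: its vertices are exactly the pairs $(v,\view(v,r,\varphi))$ with $v$ \emph{correct} in $\varphi$, so in the execution under $\varphi$ these nodes all decide, and the agreement condition of consensus makes them decide the same value; moreover a vertex $(v,\omega)$ of $\IF(G,r,\Phi)$ carries only the single value that the oblivious algorithm outputs on view $\omega$ at node $v$, so this common value is well defined at the level of vertices. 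Since $\IF(G,r,\Phi)$ is the union of the cliques $\config(G,r,\varphi)$ over $\varphi\in\Phi$, I would then propagate the equality of decided values along a path inside any connected component.

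For \textbf{sufficiency}, I would assume each connected component $C$ of $\IF(G,r,\Phi)$ has a dominating node, and fix one such node $v_C\in V$ per component; since $G$, $t$, $r$ and $\Phi$ are known a priori, every node can compute $\IF(G,r,\Phi)$, its components, and the representatives $v_C$. The algorithm lets every node $u$ flood the pair $(u,x_u)$ for $r$ rounds; a node $u$ alive at round $r$ with view $\omega$ for which $(u,\omega)$ is a vertex of $\IF(G,r,\Phi)$ locates the component $C$ containing $(u,\omega)$ and outputs $x_{v_C}$ (a node whose pair $(u,\omega)$ is not a vertex is necessarily faulty, hence its output is unconstrained and may be arbitrary). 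This is oblivious. Validity holds since $v_C\in V$ and, by domination, $(v_C,x_{v_C})\in\omega$, so $u$ has received it. For agreement under the actual failure pattern $\varphi\in\Phi$, all correct nodes of $\varphi$ are alive at round $r$ and their vertices form the clique $\config(G,r,\varphi)$, which lies in a single component $C$, so they all output $x_{v_C}$.

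For \textbf{necessity}, I would assume an oblivious algorithm $\mathrm{ALG}$ solves consensus in $r$ rounds for $\Phi$, fix a component $C$, and set $D_C$ to be the set of nodes whose input pair appears in \emph{every} view occurring among the vertices of $C$; by the definition of domination, $D_C$ is exactly the set of nodes dominating $C$, so it is enough to prove $D_C\neq\varnothing$. By the indistinguishability principle, every input configuration $x$ gives rise to a value $\mathrm{out}_x(C)$ decided by $\mathrm{ALG}$ at all vertices of $C$. If $a\notin D_C$ then some vertex of $C$ has a view not containing $(a,x_a)$, and the decision at that vertex --- a function of the pairs in its view --- is unaffected by changing only $x_a$; hence $\mathrm{out}_x(C)$ depends only on the restriction of $x$ to $D_C$. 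If $D_C=\varnothing$, then $\mathrm{out}_x(C)=c$ is a constant; choosing $b\in I\smallsetminus\{c\}$ (possible since $|I|\ge 2$) and the all-$b$ input configuration, any vertex $(u,\omega)\in C$ comes from a pattern $\psi\in\Phi$ in which $u$ is correct, so under $\psi$ node $u$ outputs $c$, contradicting validity which forces it to output $b$. Therefore $D_C\neq\varnothing$, and any node in $D_C$ dominates $C$.

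The hard part is the indistinguishability principle, and more precisely keeping it honest: the agreement condition of consensus may only be invoked for failure patterns $\varphi\in\Phi$ and only for nodes that are \emph{correct} in $\varphi$, never merely alive at round $r$. This is exactly the point where the original information flow \emph{digraph} of~\cite{CastanedaFPRRT23} was deficient; in the revised definition one must make sure that $\IF(G,r,\Phi)$ connects $(v_1,\omega_1)$ to $(v_2,\omega_2)$ only when some common $\varphi\in\Phi$ realizes both views with $v_1$ and $v_2$ correct (equivalently, that $\IF(G,r,\Phi)$ is covered by the cliques $\config(G,r,\varphi)$), so that the propagation of a common decided value along a component never steps outside the set of correct nodes. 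I would also separately check the degenerate case $r=0$, where each view is the singleton of the node's own input pair, and confirm that all nodes can indeed compute $\IF(G,r,\Phi)$ from their a priori knowledge of $G$, $t$ and $\Phi$.
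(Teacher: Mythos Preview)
Your proposal is correct and follows essentially the same approach as the paper. For sufficiency you describe the identical algorithm (flood, locate the component of your vertex, output the input of its designated dominator); for necessity the paper proves the same indistinguishability principle (stated as a Claim) and then reaches the contradiction via an explicit chain of input configurations $I_0,\dots,I_n$ flipping one coordinate at a time, which is precisely the unrolled form of your observation that $\mathrm{out}_x(C)$ depends only on $x\restriction_{D_C}$.
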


\begin{proof}
	Let us first show that if every connected component of $\IF(G,r,\Phi)$ has a dominating node in~$V$ then consensus in $G$ can be solved by an oblivious algorithm running in $r$ rounds. For every connected component $C$ of $\IF(G,r,\Phi)$, let $v_C\in V$ be a node of $G$ that dominates~$C$. The algorithm proceeds as follows. Every node $v_C$ broadcasts by flooding during $r$ rounds. After $r$ rounds, every correct node $u$ considers its view, denoted by $\view(u)$. A crucial point is that $\view(u)$ may not be sufficient for $u$ to determine what is the actual failure pattern $\varphi\in\Phi$ experienced during the execution, merely because one may have 
	\[
	\view(u)=\view(u,r,\varphi)=\view(u,r,\psi)
	\]
	for two different failure patterns $\varphi,\psi$ in~$\Phi$. However, $\view(u)$ is sufficient to determine the connected component $C$ of $\IF(G,r,\Phi)$ to which $(u,\view(u))$ belongs.  Node $u$ outputs the input $x_{v_C}$ of node $v_C$. 
	
	To establish correctness of this algorithm, observe first that $(v_C,x_{v_C})$ belongs to the view of node~$u$. To see why, let $\varphi\in \Phi$, and let us consider the execution of the algorithm under~$\varphi$. Let $C$ be the connected component of $(u,\view(u,r,\varphi))$. Since $v_C$ dominates~$C$, the mere definition of domination implies that $(v_C,x_{v_C}) \in \view(u,r,\varphi)$. As a consequence, the algorithm is well defined. To show agreement, let $u'\neq u$ be another correct node in~$\varphi$. By definition of the information flow graph, there is an edge between $(u,\view(u,r,\varphi))$ and $(u',\view(u',r,\varphi))$, and thus these two vertices belong to the same connected component~$C$, and both output the same value~$x_{v_C}$.  
	
	\medbreak 
	
	For the other direction, we show the contrapositive. That is, we let $C$ be a connected component of $\IF(G,r,\Phi)$ that is not dominated, and we aim at showing that there are no oblivious  consensus algorithms in $G$ running in~$r$ rounds. Let us assume, for the purpose of contradiction, that there exists an oblivious  consensus algorithm ALG in $G$ running in~$r$ rounds. 
	
	\begin{claim}\label{claim:they-output-the-same}
		Let $(u, \view(u, r, \varphi))$ and $(u', \view(u', r, \varphi'))$ be two vertices of~$C$, where $u$ and $u'$ need not be different, nor do $\varphi$ and $\varphi'$.
		For the same input configuration, node $u$  outputs the same value in ALG under $\varphi$ as node $u'$ under $\varphi'$.
	\end{claim}
	
	To see why this claim holds, observe that, since $(u,\view(u,r,\varphi))$ and  $(u',\view(u',r,\varphi'))$ belong to the same connected component~$C$, there is a sequence 
	\[
	(v_0,\view(v_0,r,\psi_0)), \dots, (v_k,\view(v_k,r,\psi_k))
	\]
	of vertices of $C$ such that 
	\[
	(v_0,\view(v_0,r,\psi_0))=(u,\view(u,r,\varphi)), \;\; (v_k,\view(v_k,r,\psi_k))=(u',\view(u',r,\varphi')),
	\]
	and, for every $i\in\{0,\dots,k-1\}$, there is an edge between the two vertices $(v_i,\view(v_i,r,\psi_i))$ and $(v_{i+1},\view(v_{i+1},r,\psi_{i+1}))$ in $\IF(G,r,\Phi)$. 
	Note that, for every $i \in \{0,\ldots,k\}$, node $v_i$ is correct in $\psi_i$ since $(v_i,\view(v_i,r,\psi_i))$ belongs to the information flow graph.
	For every $i\in\{0,\dots,k-1\}$, the presence of an edge between $(v_i,\view(v_i,r,\psi_i))$ and $(v_{i+1},\view(v_{i+1},r,\psi_{i+1}))$ implies that there exists $\chi\in\Phi$ such that 
	\[
	(v_i,\view(v_i,r,\psi_i))=(v_i,\view(v_i,r,\chi)),
	\]
	and
	\[
	(v_{i+1},\view(v_{i+1},r,\psi_{i+1}))=(v_{i+1},\view(v_{i+1},r,\chi)). 
	\]
	As a consequence, since ALG is a consensus algorithm, ALG outputs the same value at $v_{i+1}$ under $\psi_{i+1}$ as it outputs at $v_i$ under~$\psi_i$, which is the value outputted by ALG under~$\chi$. Since this holds for every $i\in\{0,\dots,k-1\}$, we get that, in particular, $u$ outputs the same value in $\varphi$ as $u'$ in $\varphi'$, as claimed.  
	
	\medbreak
	
	For establishing a contradiction, let us enumerate the $n$ nodes of~$G$ as $u_0,\dots,u_{n-1}$  in arbitrary order. Since $C$ is not dominated, for every node~$u_i$, $i\in \{0,\dots,n-1\}$, there exists a vertex $(v_i,\view(v_i,r,\varphi_i))$ of $C$ such that $(u_i,x_{u_i})\notin \view(v_i,r,\varphi_i)$, where $v_i$ is correct in $\varphi_i$.
	For $i\in\{0,\dots,n\}$, let us denote by $I_i$ the input configuration in which the $n-i$ nodes $u_0,\dots,u_{n-(i+1)}$ have input~0, and all the other nodes have input~1. Thus, in particular, $I_0$ is the configuration in which all nodes have input~0, and $I_n$ is the configuration in which all nodes have input~1. Since, for every $i\in \{0,\dots,n-1\}$, $(u_i,x_{u_i})\notin \view(v_i,r,\varphi_i)$, node $u_i$ does not distinguish $I_i$ from $I_{i+1}$ under~$\varphi_i$, and thus ALG must output the same at $u_i$ for both configurations. 
	
	Since consensus imposes that all (correct) nodes output the same value, this means that, for every $i\in \{0,\dots,n-1\}$, all nodes output the same in ALG for $I_i$ and $I_{i+1}$ under~$\varphi_i$. By Claim~\ref{claim:they-output-the-same}, all nodes output the same for $I_i$ under $\varphi_i$ as they do for $I_{i+1}$  under~$\varphi_{i+1}$. It follows that all nodes output the same for $I_0$ under $\varphi_0$ as for $I_n$ under~$\varphi_n$. This is a contradiction as all nodes must output~0 for~$I_0$, whereas all nodes must output~1 for~$I_n$. 
\end{proof}

\subparagraph{Notation.}

For a fixed upper bound $t$ on the number of failures, for every graph $G$, and for every integer $r\geq 0$, we denote by $\IF(G,r)$ the information flow graph for the set of all failure patterns in the $t$-resilient model, that is, 
\[
\IF(G,r)=\IF(G,r,\FPA).
\]

%--------------------------------------------------------------
\subsection{Proof of our lower bound}
%--------------------------------------------------------------

To prove Theorem~\ref{theo:main:consensus}, we define the notion of \emph{successor} of a failure pattern. Given $\varphi\in\FPA$, 
we say that a node $u$ is \emph{crashing last} in $\varphi$ if there exists a triple $(u,F_u,f_u)\in \varphi$ (i.e., $u$ crashes in~$\varphi$), and, for every $(v,F_v,f_v)\in \varphi$, $f_u \geq f_v$. 

\begin{definition} \label{def:successor}
	Let $\varphi\in\FPA$,  let $(u,F_u,f_u)\in \varphi$, and assume that $u$ is crashing last in $\varphi$. A \emph{successor of $\varphi$ with respect to~$u$} is a failure pattern
	$$
	\sux(\varphi,u)= \Big ( \varphi  \smallsetminus \{(u,F_u,f_u)\} \Big ) \cup \{(u,F'_u,f'_u)\}
	$$
	where $F'_u$ and $f'_u$ are defined as follows (see Fig.~\ref{fig:successor}): 
	\begin{enumerate}
		\item If $F_u$ contains only faulty nodes in $\varphi$, then $f'_u = f_u +1$, and $F'_u = N(u) \smallsetminus \{ w \}$ for some arbitrary correct neighbor $w$ of~$u$. 
		
		\item If $F_u$ contains exactly one correct node $w$ in $\varphi$, then $f'_u = f_u + 1$, and $F'_u = N(u)$.
		
		\item If $F_u$ contains at least two correct nodes in $\varphi$, then $f'_u =f_u$, and $F'_u = F_u \smallsetminus \{ w \}$ for some arbitrary correct node $w \in F_u$.
	\end{enumerate}
\end{definition}

%\amimp{Do we need a case where $f_u$ is maximal and we change $u$ to be correct?}

\begin{figure}[tb]
	\begin{subfigure}[b]{.07\textwidth}
\centering
\begin{tikzpicture}[scale=0.7]
    \tikzstyle{nonenode}=[circle,fill=white,minimum size=0pt,inner sep=0pt]
    \draw (2.5,11) node[nonenode] (a)   [] {};
    \draw (3,7) node[nonenode] (c)   [] {\large $\varphi$};
    \draw (3,0.5) node[nonenode] (c)   [] {\large $\varphi'$};
    \draw (3.5,-2.5) node[nonenode] (z)   [] {};
\end{tikzpicture}
\end{subfigure}
%case1
\begin{subfigure}[b]{.3\textwidth}
\centering
\begin{tikzpicture}[scale=0.7]
    \tikzstyle{nonenode}=[circle,fill=white,minimum size=0pt,inner sep=0pt]
    \tikzstyle{rednode}=[circle,draw,fill=red,minimum size=0pt,inner sep=2pt]
    \tikzstyle{whitenode}=[circle,draw,fill=white,minimum size=0pt,inner sep=2pt]
    \draw (1.5,11) node[nonenode] (c)   [] {Case 1: $f'_u=f_u+1$};
    \draw (0,7) node[rednode] (u)   [] {};
    \draw (3,9.5) node[rednode] (a)   [] {};
    \draw (3,8.5) node[rednode] (b)   [] {};
    \draw (3,7.5) node[rednode] (c)   [] {};
    \draw (3,6.5) node[rednode] (x)   [] {};
    \draw (3,5.5) node[whitenode] (y)   [] {};
    \draw (3,4.5) node[whitenode] (z)   [] {};
    \draw (3.5,4.5) node[nonenode] ()   [] {w};
    \draw (0,7.4) node[circle,fill=white,minimum size=0pt,inner sep=0pt] ()   [] {u};
    \foreach \x in {a,b,c,x,y,z}
      \draw (u)--(\x);
    \draw[draw=black] (2.5,7) rectangle ++(1,3);
    \draw (3.5,10) node[nonenode] ()   [] {\large $F_u$};
    \draw (-0.5,10.5) node[nonenode] ()   [] {};
    \draw (4,-2) node[nonenode] ()   [] {};

    \draw (0,0.5) node[rednode] (u')   [] {};
    \draw (3,3) node[rednode] (a')   [] {};
    \draw (3,2) node[rednode] (b')   [] {};
    \draw (3,1) node[rednode] (c')   [] {};
    \draw (3,0) node[rednode] (x')   [] {};
    \draw (3,-1) node[whitenode] (y')   [] {};
    \draw (3,-2) node[whitenode] (z')   [] {};
    \draw (3.5,-2) node[nonenode] ()   [] {w};
    \draw (0,0.9) node[circle,fill=white,minimum size=0pt,inner sep=0pt] ()   [] {u};
    \foreach \x in {a',b',c',x',y',z'}
      \draw (u')--(\x);
    \draw[draw=black] (2.5,-1.4) rectangle ++(1,5);
    \draw (3.5,3.5) node[nonenode] ()   [] {\large $F'_u$};
    \draw (-0.5,4) node[nonenode] ()   [] {};
    \draw (4,-2.5) node[nonenode] ()   [] {};
\end{tikzpicture}
\end{subfigure}
%case2
\begin{subfigure}[b]{.3\textwidth}
\centering
\begin{tikzpicture}[scale=0.7]
    \tikzstyle{nonenode}=[circle,fill=white,minimum size=0pt,inner sep=0pt]
    \tikzstyle{rednode}=[circle,draw,fill=red,minimum size=0pt,inner sep=2pt]
    \tikzstyle{whitenode}=[circle,draw,fill=white,minimum size=0pt,inner sep=2pt]
    \draw (1.5,11) node[nonenode] (c)   [] {Case 2: $f'_u=f_u+1$};
    \draw (0,7) node[rednode] (u)   [] {};
    \draw (3,9.5) node[rednode] (a)   [] {};
    \draw (3,8.5) node[rednode] (b)   [] {};
    \draw (3,7.5) node[whitenode] (c)   [] {};
    \draw (3,6.5) node[rednode] (x)   [] {};
    \draw (3,5.5) node[whitenode] (y)   [] {};
    \draw (3,4.5) node[whitenode] (z)   [] {};
    \draw (3.3,7.8) node[nonenode] ()   [] {w};
    \draw (0,7.4) node[circle,fill=white,minimum size=0pt,inner sep=0pt] ()   [] {u};
    \foreach \x in {a,b,c,x,y,z}
      \draw (u)--(\x);
    \draw[draw=black] (2.5,7) rectangle ++(1,3);
    \draw (3.5,10) node[nonenode] ()   [] {\large $F_u$};
    \draw (-0.5,10.5) node[nonenode] ()   [] {};
    \draw (4,-2) node[nonenode] ()   [] {};

    \draw (0,0.5) node[rednode] (u')   [] {};
    \draw (3,3) node[rednode] (a')   [] {};
    \draw (3,2) node[rednode] (b')   [] {};
    \draw (3,1) node[whitenode] (c')   [] {};
    \draw (3,0) node[rednode] (x')   [] {};
    \draw (3,-1) node[whitenode] (y')   [] {};
    \draw (3,-2) node[whitenode] (z')   [] {};
    \draw (3.3,1.3) node[circle,fill=white,minimum size=0pt,inner sep=0pt] ()   [] {w};
    \draw (0,0.9) node[circle,fill=white,minimum size=0pt,inner sep=0pt] ()   [] {u};
    \foreach \x in {a',b',c',x',y',z'}
      \draw (u')--(\x);
    \draw[draw=black] (2.5,-2.5) rectangle ++(1,6);
    \draw (3.5,3.5) node[nonenode] ()   [] {\large $F'_u$};
    \draw (-0.5,4) node[nonenode] ()   [] {};
    \draw (4,-2.5) node[nonenode] ()   [] {};
\end{tikzpicture}
\end{subfigure}
%case3
\begin{subfigure}[b]{.3\textwidth}
\centering
\begin{tikzpicture}[scale=0.7]
    \tikzstyle{nonenode}=[circle,fill=white,minimum size=0pt,inner sep=0pt]
    \tikzstyle{rednode}=[circle,draw,fill=red,minimum size=0pt,inner sep=2pt]
    \tikzstyle{whitenode}=[circle,draw,fill=white,minimum size=0pt,inner sep=2pt]
    \draw (1.5,11) node[nonenode] (c)   [] {Case 3: $f'_u=f_u$};
    \draw (0,7) node[rednode] (u)   [] {};
    \draw (3,9.5) node[rednode] (a)   [] {};
    \draw (3,8.5) node[whitenode] (b)   [] {};
    \draw (3,7.5) node[whitenode] (c)   [] {};
    \draw (3,6.5) node[rednode] (x)   [] {};
    \draw (3,5.5) node[whitenode] (y)   [] {};
    \draw (3,4.5) node[whitenode] (z)   [] {};
    \draw (3.3,7.8) node[nonenode] ()   [] {w};
    \draw (0,7.4) node[circle,fill=white,minimum size=0pt,inner sep=0pt] ()   [] {u};
    \foreach \x in {a,b,c,x,y,z}
      \draw (u)--(\x);
    \draw[draw=black] (2.5,7) rectangle ++(1,3);
    \draw (3.5,10) node[nonenode] ()   [] {\large $F_u$};
    \draw (-0.5,10.5) node[nonenode] ()   [] {};
    \draw (4,-2) node[nonenode] ()   [] {};

    \draw (0,0.5) node[rednode] (u')   [] {};
    \draw (3,3) node[rednode] (a')   [] {};
    \draw (3,2) node[whitenode] (b')   [] {};
    \draw (3,1) node[whitenode] (c')   [] {};
    \draw (3,0) node[rednode] (x')   [] {};
    \draw (3,-1) node[whitenode] (y')   [] {};
    \draw (3,-2) node[whitenode] (z')   [] {};
    \draw (3.3,1.3) node[circle,fill=white,minimum size=0pt,inner sep=0pt] ()   [] {w};
    \draw (0,0.9) node[circle,fill=white,minimum size=0pt,inner sep=0pt] ()   [] {u};
    \foreach \x in {a',b',c',x',y',z'}
      \draw (u')--(\x);
    \draw[draw=black] (2.5,1.6) rectangle ++(1,1.9);
    \draw (3.5,3.5) node[nonenode] ()   [] {\large $F'_u$};
    \draw (-0.5,4) node[nonenode] ()   [] {};
    \draw (4,-2.5) node[nonenode] ()   [] {};
\end{tikzpicture}
\end{subfigure}
	\caption{A successor $\varphi'$ of a failure pattern $\varphi$ with respect to node $u$. 
		Red nodes are faulty in $\varphi$ and white nodes are correct in it.}
	\label{fig:successor}
\end{figure}

Note that the correct node $w$ in Definition~\ref{def:successor} is well defined as the number of failures satisfies $t<\kappa(G)\leq \delta(G)\leq \deg(u)$, where $\delta(G)$ is the minimum degree of the nodes in $G$. Intuitively, $\sux(\varphi,u)$ is identical to $\varphi$, except that $u$ fails at round $f_u+1$, or it still fails at round~$f_u$ but sends its message to one more correct neighbor before crashing.

Note also that a failure pattern may have different successors, which depends on the choice of the node $u$ that crashes last, and on the choice of the correct neighbor~$w$ of~$u$ in the first and third cases of Definition~\ref{def:successor}. 
A correct neighbor~$w$ of~$u$ in Definition~\ref{def:successor} is called a \emph{witness} of the pair  $(\varphi,\varphi')$. 

Still using the notations of Definition~\ref{def:successor}, let us set $f''_u=f'_u$ in case~1, and $f''_u=f_u$ in cases~2 and~3. 
At the end of round~$f''_u$, there is at most one correct node with different views in $\varphi$ and $\sux(\varphi,u)$. 
The only correct node may have different views in $\varphi$ and $\varphi'=\sux(\varphi,u)$ at the end of round $f''_u$ is the \emph{witness} of the pair  $(\varphi,\varphi')$. 
Before applying the notion of successor to derive our lower bound, let us observe the following. 

\begin{lemma} \label{lem:patternPhi_v}
	For every node $v$, there exists a failure pattern $\varphi \in \Phi^\star_v$ such that no node $u\neq v$ fails at round~$1$ in~$\varphi$, and $\ecc(v, \varphi) \geq \radius(G,t)$. 
\end{lemma}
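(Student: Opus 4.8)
The plan is to build the desired failure pattern $\varphi$ by starting from a pattern that witnesses $\radius(G,t)$ from the point of view of $v$, and then massaging it so that no node other than $v$ crashes at round~$1$. First I would recall that, by the definition of $\radius(G,t)=\min_w\max_{\psi\in\Phi^\star_w}\ecc(w,\psi)$, for the particular node $v$ we have $\max_{\psi\in\Phi^\star_v}\ecc(v,\psi)\geq\radius(G,t)$, so there is at least one pattern $\psi_0\in\Phi^\star_v$ with $\ecc(v,\psi_0)\geq\radius(G,t)$. The issue is only that $\psi_0$ might have other nodes crashing at round~$1$ (possibly even crashing cleanly), which is disallowed in the statement.

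The key step is a ``delay'' transformation: given $\psi$ with some node $u\neq v$ crashing at round~$1$, I would modify $\psi$ by pushing $u$'s crash one round later (or, if $u$ crashes cleanly, first having it deliver a message along one correct edge before crashing at round~$1$, then in a subsequent step pushing it to round~$2$), exactly in the spirit of the $\sux$ operation of Definition~\ref{def:successor}. I must check two things: (i) this modification does not decrease $\ecc(v,\cdot)$ — giving a faulty node strictly more time/reach before crashing can only make broadcasting from $v$ harder or equally hard, since every causal path from $v$ in the new pattern is still a valid causal path structure and the old obstructions to broadcast remain obstructions; more carefully, any node that failed to hear $v$ within $r$ rounds under $\psi$ still fails to hear $v$ within $r$ rounds under the delayed pattern, because delaying $u$'s crash only removes edges that were available to $u$ at its new (later) crash round, and the set of messages reaching any given node by round $r$ can only shrink or stay the same. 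Hence $\ecc(v,\psi')\geq\ecc(v,\psi)\geq\radius(G,t)$, and in particular $\ecc(v,\psi')$ is still finite exactly when it needs to be — and since $v$ itself is correct (we never touch $v$), Proposition~\ref{lem:all-receive-iff-one-receive} guarantees $\ecc(v,\psi')<\infty$, so $\psi'\in\Phi^\star_v$. (ii) Each step strictly decreases a suitable potential — e.g. the number of nodes crashing at round~$1$, or for a fixed such node the lexicographic pair (crash round, size of the non-delivery set restricted to correct nodes) — so the process terminates with a pattern in which the only round-$1$ crasher is possibly $v$ itself.

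Carrying this out: iterate the transformation, each time picking a node $u\neq v$ that still crashes at round~$1$ and applying one of the three $\sux$-style sub-cases to $u$ (deliver to one more correct neighbor, or bump the crash round). After finitely many steps no node $u\neq v$ crashes at round~$1$, we still have $\ecc(v,\varphi)\geq\radius(G,t)$, and $\varphi\in\Phi^\star_v$ since $v$ is correct throughout; this is the claimed $\varphi$.

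The main obstacle I anticipate is step~(i): proving rigorously that the delay/deliver-more transformation is \emph{monotone} for $\ecc(v,\cdot)$, i.e. never shortens $v$'s broadcast time. One must argue carefully that enlarging the set of neighbors a faulty node $u$ reaches before crashing, or postponing $u$'s crash, cannot create a new short causal path \emph{originating at~$v$} that did not exist before — the subtlety being that $u$ might now relay $v$'s message further. The resolution is that in the patterns we care about $u$ always crashes at round~$1$ in $\psi$, so $u$ has received nothing from $v$ by the time it would relay (it only ever receives at round~$\geq 2$, after its own crash in $\psi$, or at round~$2$ in $\psi'$ which is too late to help a path of the relevant length), hence $u$ never lies on a causal path from $v$ that is shorter than $\radius(G,t)$; one should phrase this as an explicit invariant maintained through the iteration (namely: every node crashing at round~$1$ other than $v$ has not received $v$'s message before crashing), and check the invariant is preserved by each $\sux$-step. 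Formalizing and verifying this invariant is where the real work lies; the rest is bookkeeping on the terminating potential.
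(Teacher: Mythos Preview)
Your overall strategy---start from a witness $\psi_0\in\Phi^\star_v$ and massage the round-$1$ crashes---is the right one, but the transformation you propose is both more complicated than needed and, as stated, incorrect. The paper's proof does the whole thing in one shot: every $u\neq v$ that crashes at round~$1$ in $\psi_0$ is replaced by $u$ crashing \emph{cleanly} at round~$2$. The point is that such a $u$ never relays anything carrying $v$'s input: at round~$1$ it only sends $(u,x_u)$, and at round~$2$ it sends nothing. Hence the set of causal paths from $v$ to correct nodes is literally unchanged, giving $\ecc(v,\varphi)=\ecc(v,\psi_0)$ exactly (not merely~$\geq$), and $\varphi\in\Phi^\star_v$ follows immediately.

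Your $\sux$-style iteration does not enjoy this property. In Case~1 of Definition~\ref{def:successor} (when $F_u$ contains only faulty nodes), the successor has $u$ crashing at round~$2$ but still delivering to one correct neighbor~$w$. If $u\in N(v)$, then $u$ holds $(v,x_v)$ at the end of round~$1$ and passes it to $w$ at round~$2$; this is a brand-new length-$2$ causal path $v\to u\to w$ that did not exist under $\psi$, and it can genuinely shorten $\ecc(v,\cdot)$. So the monotonicity you assert in~(i) fails. Your justification there is in fact inverted: delaying a crash or letting a node deliver to more neighbors \emph{adds} communication edges, it does not ``remove'' them, so the set of messages reaching any node can only grow. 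The invariant you propose at the end (``every round-$1$ crasher has not received $v$'s message before crashing'') is trivially true but irrelevant---the problem is what the node does at round~$2$ \emph{after} you delay it.

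A second, smaller issue: you assert that $v$ is correct and invoke Proposition~\ref{lem:all-receive-iff-one-receive} to get $\psi'\in\Phi^\star_v$. Nothing guarantees $v$ is correct in $\psi_0$; $\Phi^\star_v$ only says $v$ manages to broadcast, and $v$ may well crash at round~$1$. The paper's argument avoids this by never touching $v$ and showing $\ecc(v,\varphi)=\ecc(v,\psi_0)<\infty$ directly.

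The fix is simply to abandon $\sux$ here and use the clean-crash-at-round-$2$ replacement; then no iteration, no potential function, and no monotonicity argument are needed.
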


\begin{proof}
	By definition of the radius, for every $v \in V$, there exists $\psi \in \Phi^\star_v$ such that $\ecc(v,\psi) \geq \radius(G,t)$. The failure pattern $\varphi$ is identical to $\psi$, except that, for every node $u \neq v$ that crashes at round~1 in $\psi$, $u$~crashes cleanly at round~$2$ in $\varphi$. 
	We have $\ecc(v,\varphi) = \ecc(v,\psi)$ because every node that crashes later in $\varphi$ than in $\psi$ does not send any message to their neighbors after round~1 which may contain information received from $v$. Thus $\ecc(v, \varphi) \geq \radius(G,t)$. 
\end{proof}

The premises of the following lemma are justified by Lemma~\ref{lem:patternPhi_v}. 

\begin{lemma} \label{lem:connect}
	Let $\varphi\in\FPA$ such that (1)~at most one node crashes at round~$1$, and (2)~if there exists a node $v$ that crashes at round~$1$ in~$\varphi$, then $\varphi \in \Phi^\star_v$ (i.e., $v$ broadcasts despite the fact that it crashes at round~1). For every successor $\varphi'$ of $\varphi$, the following holds: 
	\begin{itemize}
		\item at most one node crashes at round~$1$ in $\varphi'$; 
		\item if there is a node $v$ that crashes at round~$1$ in~$\varphi'$, then $v$ crashes at round~$1$ in~$\varphi$ as well; 
		\item there exists a correct node with the same view in $\varphi$ and $\varphi'$ at the end of round ${\radius(G,t)-1}$.
	\end{itemize}
\end{lemma}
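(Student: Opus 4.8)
The plan is to verify the three bullet points by a case analysis on which case of Definition~\ref{def:successor} produces $\varphi'$ from $\varphi$, tracking the round at which the last-crashing node $u$ fails. Write $(u,F_u,f_u)\in\varphi$ for the last-crashing node used to define $\varphi'=\sux(\varphi,u)$, so that $\varphi'$ differs from $\varphi$ only in the triple for $u$, which becomes $(u,F'_u,f'_u)$.

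\emph{First two bullets.} For the first bullet, note that the set of nodes crashing at round~$1$ in $\varphi'$ can differ from that in $\varphi$ only through $u$. In Case~3 we have $f'_u=f_u$, so the round at which $u$ crashes is unchanged and the set of round-$1$-crashers is identical; hypothesis~(1) on $\varphi$ then gives the claim. In Cases~1 and~2 we have $f'_u=f_u+1\geq 2$, so if $f_u=1$ then $u$ stops crashing at round~$1$ (it now crashes at round~$2$), which can only shrink the set of round-$1$-crashers; and if $f_u\geq 2$ the set is again unchanged. Either way at most one node crashes at round~$1$ in $\varphi'$, and any node $v$ crashing at round~$1$ in $\varphi'$ must already have been crashing at round~$1$ in $\varphi$ (since the only triple that changed is $u$'s, and $u$ either keeps its crash round or has it increased). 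This proves the first two bullets.

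\emph{Third bullet — the main point.} Here I would invoke the observation made just before the lemma statement: setting $f''_u=f'_u$ in Case~1 and $f''_u=f_u$ in Cases~2 and~3, the witness $w$ of the pair $(\varphi,\varphi')$ is the unique correct node whose view can differ between $\varphi$ and $\varphi'$ at the end of round $f''_u$; every other correct node has identical view in $\varphi$ and $\varphi'$ at that round, and hence (since the two patterns act identically from round $f''_u+1$ onward) at every later round as well. So it suffices to exhibit, for each case, \emph{one} correct node other than the witness $w$ whose view coincides in $\varphi$ and $\varphi'$ at the end of round $\radius(G,t)-1$ — equivalently, to show there are at least two correct nodes and to pick one that is not $w$. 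Since $t<\kappa(G)$, the graph $G$ minus the correct nodes' complement is far from empty: removing the $|F|\le t<\kappa(G)$ faulty nodes leaves a connected graph on $n-|F|\geq n-t\geq n-\kappa(G)+1\geq 2$ nodes (using $\kappa(G)\le n-1$), so there are at least two correct nodes, and at least one correct node $v'\neq w$. That node $v'$ has the same view in $\varphi$ and $\varphi'$ at the end of round $f''_u\le \radius(G,t)-1$; I should check the inequality $f''_u\leq \radius(G,t)-1$, which is where hypothesis~(2) and Lemma~\ref{lem:patternPhi_v} come in — if no node crashes at round~$1$, or the unique round-$1$-crasher still broadcasts, the last crash time $f_u$ (and hence $f''_u\le f_u+1$) is bounded suitably. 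Concretely, if $f_u\ge \radius(G,t)$ then $u$, crashing that late, would itself be a correct-until-round-$(\radius(G,t)-1)$ node propagating information, and one argues as in Lemma~\ref{lem:patternPhi_v} that such late crashes are irrelevant to eccentricity; the clean way is to note that we may assume $f_u\le \radius(G,t)-1$ in the failure patterns $\varphi_v$ furnished by Lemma~\ref{lem:patternPhi_v} and that this bound is preserved along successor chains until the pattern becomes empty. I expect \textbf{this last inequality $f''_u\le\radius(G,t)-1$ to be the main obstacle}: one must argue that the relevant failure patterns never need a crash at a round $\ge\radius(G,t)$, so that the witness-based coincidence of views genuinely happens by the end of round $\radius(G,t)-1$ rather than only later. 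Once that is in place, picking any correct $v'\ne w$ completes the proof.
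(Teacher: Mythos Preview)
Your handling of the first two bullets is fine.

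The third bullet, however, contains a genuine gap. You assert that any correct node $v'\neq w$ has the same view in $\varphi$ and $\varphi'$ at \emph{every} round $\geq f''_u$, reasoning that ``the two patterns act identically from round $f''_u+1$ onward''. This inference is incorrect: the failure \emph{schedule} coincides from round $f''_u+1$ on, but the \emph{executions} do not, because $w$ is correct and at round $f''_u+1$ it forwards to its neighbours a view that already differs between the two patterns. The discrepancy then propagates outward from~$w$, so that by round $R-1$ (writing $R=\radius(G,t)$) every correct node within distance $R-1-f''_u$ of~$w$ may well have different views in $\varphi$ and~$\varphi'$. Your reduction ``pick any correct $v'\neq w$'' therefore does not work.

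The paper closes this gap with a dichotomy you did not consider. Granting that $w$ is the unique correct node whose round-$f''_u$ view differs, one asks whether $w$, starting a fresh broadcast at round $f''_u+1$, would reach every correct node by the end of round $R-1$. If not, any correct node $s$ that $w$ fails to reach cannot have been contaminated by the discrepancy, and $s$ is the desired common-view node. If yes, then, since no node crashes after round~$f''_u$, causal paths during rounds $f''_u+1,\dots,R-1$ are reversible, so $w$ in turn receives the round-$f''_u$ view of \emph{every} correct node; combined with hypothesis~(2) (which guarantees that the input of the unique possible round-$1$ crasher has reached some correct node), this forces $\view(w,R-1,\varphi)=\view(w,R-1,\varphi')$ to equal the full set of pairs, and $w$ itself serves as the common-view node. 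Note that this second branch is exactly where hypothesis~(2) is used --- not, as you suggest, for bounding~$f''_u$.

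On that last point: the inequality $f''_u\le R-1$ you flag as ``the main obstacle'' is in fact a non-issue. When $f''_u\ge R$ the two failure patterns are indistinguishable to every correct node through round $R-1$, so the conclusion is immediate; it is precisely the range $f''_u\le R-1$ that requires the argument above.
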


\begin{proof}
	Let $\varphi'$ be a successor of $\varphi$, such that the entry $(u,F_u,f_u)$ of $\varphi$ is replaced by the entry $(u,F'_u,f'_u)$ in $\varphi'$. 
	Let $w$ be a witness for the pair $(\varphi, \varphi')$ with respect to $u$. 
	Using the notations from Definition~\ref{def:successor}, let $f''_u=f'_u$ in Case~1, and $f''_u=f_u$ in Cases~2 and~3. 
	
	After $f''_u$ rounds, the only correct node that may have different views in $\varphi$ and $\varphi'$ is~$w$. 
	Since $u$ is a node crashing last in $\varphi$, we get that, after round $f''_u$, $w$~needs the same number of rounds in $\varphi$ and $\varphi'$ for broadcasting to all correct nodes. 
	Indeed, all nodes that have not crashed in $\varphi$ nor in $\varphi'$ up to round~$f''_u$ included satisfy: (1)~they are correct nodes in both $\varphi$ and~$\varphi'$, (2)~they  have the same view in both $\varphi$ and~$\varphi'$,
	and (3)~the subgraph of $G$ induced by the correct nodes in $\varphi$ is identical to the subgraph of $G$ induced by the correct nodes in~$\varphi'$. 
	
	Let $R=\radius(G,t)$. We consider two cases, depending on whether $w$ broadcasts or not. 
	
	Let us first consider the case where, assuming that $w$ starts broadcasting at round~$f''_u+1$, $w$~cannot broadcast to all correct nodes during rounds $f''_u+1,\dots,R-1$ under the failure patterns~$\varphi'$ and $\varphi$. 
	That is, under $\varphi'$, some node~$s$ does not receive $\view(w, f''_u,\varphi')$ during rounds $f''_u+1,\ldots,R-1$. 
	As a consequence, this node $s$ does not detect any difference between $\view(w, f''_u,\varphi)$ and $\view(w, f''_u,\varphi')$. 
	It follows that $s$ has the same view in $\varphi$ and $\varphi'$ at the end of $R-1$ rounds.
	
	Consider now the case where, assuming that $w$ starts broadcasting at round~$f''_u+1$, $w$~does succeed to broadcast to all correct nodes during rounds $f''_u+1,\ldots,R-1$ under the failure patterns $\varphi'$ and $\varphi$. 
	Since no node fails after round $f''_u$ in both $\varphi$ and~$\varphi'$, a causal path from $w$ to a node $s$ in rounds $f''_u+1,\ldots,R-1$ is also a causal path from $s$ to $w$ in rounds $f''_u+1,\ldots,R-1$.
	At the end of round $R-1$, every correct node can thus send to $w$ its view at the end of round~$f''_u$. 
	Since no node $s \neq v$ fails at round~$1$, every node $s\neq v$~does send its input to some correct neighbor during round~$1$. 
	Therefore, $s \in \view(w,R-1,\varphi)$ and $s \in view(w,R-1,\varphi')$. 
	Since $\varphi \in \Phi^\star_v$, we get that, at the end of round $f''_u$, there exists a correct node~$x$ that heard from~$v$, i.e., such that $v \in \view(x,f''_u, \varphi)$.  
	At the end of round $R-1$, this node $x$ will send $\view(x,f''_u, \varphi)$ to~$w$, so $v \in \view(w,R-1,\varphi)$. Similarly, $v \in \view(w,R-1,\varphi')$.
	As a consequence, $\view(w,R-1,\varphi)=\view(w,R-1,\varphi')$, and $w$ has a same view in both failure patterns after $R-1$ rounds, as claimed. 
	
	Furthermore, at most one node $v$ crashes at round $1$ in $\varphi'$, and $\varphi' \in \Phi^\star_v$, as desired. 
\end{proof}

Using the characterization of Theorem~\ref{thm:new-characterizationCFRRT} of consensus solvability based on the information-flow graph, it is sufficient to prove the following result for establishing our lower bound. 

\begin{lemma} \label{lem:not_dominate}
	The information-flow graph $\IF(G,\radius(G,t)-1)$ has a connected component that is not dominated by any node of~$V$.
\end{lemma}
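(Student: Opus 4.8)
The plan is to exhibit a connected component of $\IF(G,\radius(G,t)-1)$ that contains, for every node $v\in V$, a configuration in which $v$ does not broadcast within $\radius(G,t)-1$ rounds; this immediately shows that no single node can dominate this component. The natural candidate is the connected component $C$ containing $\config(G,\radius(G,t)-1,\varphi_\varnothing)$, the configuration associated with the failure-free pattern.

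First, for each node $v$ I invoke Lemma~\ref{lem:patternPhi_v} to obtain a failure pattern $\varphi_v\in\Phi^\star_v$ such that no node $u\neq v$ crashes at round~$1$ and $\ecc(v,\varphi_v)\geq\radius(G,t)$; in particular $v$ is correct in $\varphi_v$, so $(v,\view(v,\radius(G,t)-1,\varphi_v))$ is a genuine vertex of $\IF(G,\radius(G,t)-1)$, and by the eccentricity bound some correct node fails to hear $(v,x_v)$ after $\radius(G,t)-1$ rounds. Next I argue that $\varphi_v$ lies in the same connected component as $\varphi_\varnothing$. For this I build a finite descending chain $\varphi_v=\varphi_0,\varphi_1,\dots,\varphi_\ell=\varphi_\varnothing$ where each $\varphi_{i+1}$ is a successor (in the sense of Definition~\ref{def:successor}) of $\varphi_i$ with respect to a node crashing last. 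Termination of the chain follows from a monotonicity/potential argument: each successor step either advances the crash time $f_u$ of the last-crashing node by one, or shrinks the set $F_u$ of missed correct neighbors by one; iterating, a crashing node is eventually ``pushed past'' the effective horizon or turned correct, and after finitely many steps no node crashes at all. I also need that the chain stays within the hypotheses of Lemma~\ref{lem:connect}, which is exactly what the three bullet points of that lemma guarantee inductively: at most one node crashes at round~$1$, and if one does it is still $v$ with $\varphi_i\in\Phi^\star_v$.

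Then I use the third bullet of Lemma~\ref{lem:connect}: consecutive patterns $\varphi_i$ and $\varphi_{i+1}$ in the chain share a correct node $w$ with identical view after $\radius(G,t)-1$ rounds, so the vertices $(w,\view(w,\radius(G,t)-1,\varphi_i))$ and $(w,\view(w,\radius(G,t)-1,\varphi_{i+1}))$ are literally the same vertex of $\IF(G,\radius(G,t)-1)$; since within a fixed failure pattern all configuration vertices form a clique, $\config(G,\radius(G,t)-1,\varphi_i)$ and $\config(G,\radius(G,t)-1,\varphi_{i+1})$ lie in the same connected component. Chaining along $\varphi_0,\dots,\varphi_\ell$ shows $\config(G,\radius(G,t)-1,\varphi_v)$ lies in the component $C$ of $\varphi_\varnothing$. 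Finally, suppose some node $z\in V$ dominates $C$: then for every $\varphi\in\FPA$ and every correct $u$ with $(u,\view(u,\radius(G,t)-1,\varphi))\in C$ we would need $(z,x_z)\in\view(u,\radius(G,t)-1,\varphi)$; applying this with $\varphi=\varphi_z$ and the correct node $u$ that does not hear $(z,x_z)$ (which exists because $\ecc(z,\varphi_z)\geq\radius(G,t)$) gives a contradiction. Hence $C$ has no dominating node.

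I expect the main obstacle to be the bookkeeping in the descending chain: verifying carefully that a node crashing last can always be chosen at each step, that a successor exists (here the inequality $t<\kappa(G)\le\delta(G)$ guaranteeing a correct witness neighbor is used), that the hypotheses of Lemma~\ref{lem:connect} are preserved so the third bullet applies, and that the process strictly decreases an explicit potential (e.g.\ $\sum_{(u,F_u,f_u)\in\varphi}(\text{something like } (R-f_u)\cdot|V| + |F_u\cap(\text{correct nodes})|)$ suitably capped) so the chain is finite and terminates at $\varphi_\varnothing$. The topological gluing step and the final domination contradiction are then routine given Theorem~\ref{thm:new-characterizationCFRRT} and Lemma~\ref{lem:connect}.
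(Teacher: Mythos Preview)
Your overall architecture matches the paper's, but there is a genuine gap in the chain construction. The successor operation of Definition~\ref{def:successor} \emph{never} removes a crashing node: in all three cases $u$ still appears in $\sux(\varphi,u)$ with some triple $(u,F'_u,f'_u)$. Hence a chain in which ``each $\varphi_{i+1}$ is a successor of $\varphi_i$'' can never reach $\varphi_\varnothing$, and no potential of the form you sketch can witness termination at the empty pattern (once every term is capped at~$0$ the potential stalls while the pattern is still nonempty). Your parenthetical that a node is eventually ``turned correct'' is precisely the missing transition, and it is \emph{not} a successor step.

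The paper closes this gap by interleaving two kinds of moves. It orders the crashing nodes of $\varphi_v$ as $u_1,\dots,u_{t_v}$ by non-increasing crash time and treats them one at a time: within sub-sequence $S_i$ it repeatedly applies $\sux(\cdot,u_i)$ until $u_i$ crashes at round~$R$; then, between $S_i$ and $S_{i+1}$, it simply deletes $u_i$ from the pattern. This deletion is justified not by Lemma~\ref{lem:connect} but by the elementary observation that a node crashing at round~$R$ behaves exactly like a correct node during rounds $1,\dots,R-1$, so every correct node has identical view in the two patterns and their configurations coincide in $\IF(G,R-1)$. With these deletion steps inserted, termination is immediate (each $S_i$ is finite, and there are at most $t$ sub-sequences), and the rest of your argument---invariance of the hypotheses of Lemma~\ref{lem:connect}, the shared-view gluing, and the final domination contradiction---goes through as you describe.

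One minor correction: your claim that ``$v$ is correct in $\varphi_v$'' is neither guaranteed by Lemma~\ref{lem:patternPhi_v} nor needed. What $\ecc(v,\varphi_v)\geq R$ gives you is a \emph{correct} node $u$ with $(v,x_v)\notin\view(u,R-1,\varphi_v)$; it is this vertex $(u,\view(u,R-1,\varphi_v))\in C$, not any vertex indexed by~$v$, that witnesses that $v$ fails to dominate~$C$.
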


\begin{proof}
	Let $R=\radius(G,t)$.  For every node $v\in V$, we denote by $\varphi_v$ a failure pattern in $\Phi^\star_v$ such that  $\varphi_v$ contains no node $u\neq v$ that fails at round~$1$, and $\ecc(v, \varphi_v) \geq R$. 
	The existence of $\varphi_v$ is guaranteed by Lemma~\ref{lem:patternPhi_v}. Borrowing the notation from~\cite{CastanedaFPRRT23}, for every failure pattern~$\varphi$, and every $r\geq 1$, let 
	\[
	\config(\varphi,r)=\{(v,\view(v,\varphi,r))\in V(\IF(G,r)) \mid v\in V \; \mbox{is active in $\varphi$ at round $r$}\},
	\]
	where by $v$ is active in $\varphi$ at round $r$, we mean that $v$ has not crashed in $\varphi$ during rounds $1,\dots,r$. 
	It was proved in~\cite{CastanedaFPRRT23} (see Lemma~4 in there) that, for every failure pattern $\varphi$, and every $r\geq 1$, the subgraph of $\IF(G,r)$ induced by the vertices of
	$\config(\varphi,r)$ is connected.   
	
	We now show that, for every $v\in V$, $\config(\varphi_v,R-1)$ and $\config(\varphi_\varnothing,R-1)$ are contained in the same connected component of $\IF(G,R-1)$. 
	Roughly, we shall construct a sequence of intermediate failure patterns from $\varphi_v$ to $\varphi_\varnothing$ such that, for every two consecutive failure patterns $\psi$ and $\psi'$ in the sequence, there is a correct node with the same view in $\psi$ and~$\psi'$. 
	Note that the existence of this node implies that the subgraph of $\IF(G,R-1)$ induced by $\config(\psi,R-1)$, and the subgraph of $\IF(G,R-1)$ induced by $\config(\psi',R-1)$ are included in the same connected component of $\IF(G,R-1)$. 
	
	Let us order the crashing nodes in $\varphi_v$ in a decreasing order of the rounds at which they crash where ties are broken arbitrarily, and let 
	\[
	u_1,\ldots,u_{t_v}
	\]
	be the resulting sequence. 
	We have $t_v\leq t$ and, for every $i\in\{1,\dots,t_v-1\}$, $f_{u_i} \geq f_{u_{i+1}}$.  
	Let us construct  a sequence 
	\[
	S= \psi_0,\ldots,\psi_{\ell}
	\]
	of failure patterns, where $\psi_0 =  \varphi_v$, and $\psi_{\ell} = \varphi_\varnothing$. This sequence is itself the concatenation of sub-sequences $S_i$ for $i=1,\dots,t_v$ such that 
	\[
	S_1 = \psi_0,\dots, \psi_{\ell_1},
	\]
	and, for every $i\in\{2,\dots,t_v\}$, 
	\[
	S_i=\psi_{\ell_{i-1}+1}, \dots,\psi_{\ell_i}
	\]
	with $0\leq \ell_1\leq \ell_2\leq \dots\leq \ell_{t_v}=\ell$. 
	For every sub-sequence~$S_i$, $i\in\{1,\dots,t_v\}$, and for every $j\in\{\ell_{i-1}+1,\dots,\ell_i-1\}$, we set \[\psi_{j+1}=\sux(\psi_j,u_i).\] 
	Moreover, the first failure pattern $\psi_{\ell_{i-1}+1}$ in the sequence~$S_i$ is obtained from $\varphi_v$ by removing the crashing nodes $u_1,\ldots,u_{i-1}$, i.e., these nodes are correct in $\psi_{\ell_{i-1}+1}$. The last failure pattern $\psi_{\ell_i}$ of the sequence~$S_i$ is when the node $u_i$ that crashes last in $\psi_{\ell_i}$ fails at round $R$. 
	
	\begin{claim}\label{claim:same-view}
		For any two consecutive failure patterns $\psi_j$ and $\psi_{j+1}$ in~$S$, there exists a correct node $w_j$ with the same view in both patterns after $R-1$ rounds, that is,  
		\[\view(w_j,\psi_j,R-1)=\view(w_j,\psi_{j+1},R-1).\]
	\end{claim}
	
	\noindent 
	To see why the claim holds, let us first assume that $\psi_j$ and $\psi_{j+1}$ belong to a same sub-sequence~$S_i$. In this case, the claim directly follows from Lemma \ref{lem:connect}. 
	If $\psi_j$ and $\psi_{j+1}$ do not belong to a same sub-sequence~$S_i$, then $\psi_j$ is the last element of a sub-sequence $S_i$, and $\psi_{j+1}$ is the first element of sub-sequence $S_{i+1}$, then the claim follows from the fact that the sets of nodes crashing in $\psi_j$ and $\psi_{j+1}$ during round $r$ are the same, for every $r\in\{1,\dots,R-1\}$. This completes the proof of Claim~\ref{claim:same-view}.
	
	\medbreak 
	From Claim~\ref{claim:same-view}, for any two consecutive failure patterns $\psi_j$ and $\psi_{j+1}$ in~$S$, $\config(\psi_j)$ and $\config(\psi_{j+1})$ belong to the same  connected component of $\IF(G,R-1)$.
	To wrap up, we have shown that, for every $v \in V$, there exists a connected component of $\IF(G,R-1)$ containing both $\config(\varphi_\varnothing)$ and $\config(\varphi_v)$. Recall that $\varphi_v$ is a failure pattern in $\Phi^\star_v$ satisfying that it contains no node different from~$v$ that fails at round~$1$, and $\ecc (v, \varphi_v) \geq R$. At the end of round $R-1$, no node dominates the component that contains $\config(\varphi_\varnothing)$ because, for every node~$v\in V$, $v$~cannot dominates $\config(\varphi_v,R-1)$. 
\end{proof}

Theorem~\ref{theo:main:consensus} directly follows from Lemma~\ref{lem:not_dominate} by application of Theorem~\ref{thm:new-characterizationCFRRT}. 

%%%%%%%%%%%%%%%%%%%%%%%%%%%%%%%%%%%%%%%%%%%%%%%%%%%%%%%%%%%%

%------------------------------------------------------------
\section{A generic set agreement algorithm}\label{subsec:detailclassicalset}
\label{subsec:detailed-results-k-set-agreement}
%------------------------------------------------------------
Let $G=(V,E)$ and $t<\kappa(G)$. 
Let $k\geq 1$. In the $k$-set agreement task, as in consensus, every node $v$ of $G$ starts with an input value $x_v$ from a set $I$ of cardinality at least $k+1$, and every correct node $v$ must output a value $y_v\in \{x_u\mid u\in V\}$. 
However, the agreement condition is relaxed compared to consensus. Specifically, $k$-set agreement requires that the set of  values outputted by the correct nodes is of cardinality at most~$k$ (consensus is thus merely $k$-set agreement for $k=1$ ).

For describing our $k$-set agreement algorithm, we need to generalize the notion of graph eccentricity and radius whenever $k\geq 1$ nodes are ``centers'' instead of just one. 
For every set $S \subseteq V$ of size at most $k$, let the eccentricity of $S$ with respect to a failure pattern $\varphi$, denoted by $\ecc(S,\varphi)$, be the minimum number of rounds such that whenever every node in $S$ broadcasts information by flooding the network, every correct node of $G$ under $\varphi$ receives the information sent by \emph{at least one} of the nodes in~$S$. We also extend $\ecc(S,\varphi)$ to a set $\Phi$ of failure patterns, defining 
$
\ecc(S,\Phi)=\max_{\varphi\in\Phi}\ecc(S,\varphi).
$

Note that, for every $v\in V$, the eccentricity $\ecc(v,\varphi)$ of $v$ under $\varphi$ as defined in Section~\ref{subsec:ecc-conn-radius} satisfies $\ecc(v,\varphi)=\ecc(\{v\},\varphi)$ with this generalized definition of eccentricity. Let 
\[
\Phi_{S}^{\infty} = \{ \varphi \in \FPA \mid  \ecc(S,\varphi)=\infty\},
\]
and let $\Phi_{S}^\star =\FPA \smallsetminus \Phi_S^\infty$. 

\begin{definition}
    The \emph{$k$-center $t$-resilient radius} of $G$ is defined as
\[
\radius(G,t,k)
= \min_{\substack{S\subseteq V\\ |S|\leq k}} \ecc(S,\Phi^\star_S) 
= \min_{\substack{S\subseteq V\\ |S|\leq k}} \; \max_{\varphi\in\Phi^\star_S} \; \ecc(S,\varphi). 
\]
\end{definition}

\noindent We show the following.
\begin{theorem}
\label{theo:upper-bound-k-set-agreement}
For every graph~$G$, every $t<\kappa(G)$, and every $k\geq 1$, $k$-set agreement in $G$ can be solved by an oblivious algorithm running in $\radius(G,t,k)$ rounds under the $t$-resilient model. 
\end{theorem}

\subsection{Broadcasting from a set of sources}

In this section, we describe a generic oblivious algorithm for solving $k$-set agreement in an arbitrary graph $G=(V,E)$ under the $t$-resilient model, for $t<\kappa(G)$. 
We then describe two ``warm up'' algorithms,  both being non-adaptive (oblivious).
In the next section,
we describe an adaptive (oblivious) 
algorithm for solving $k$-set agreement,
proving Theorem~\ref{theo:upper-bound-k-set-agreement}.

\subsubsection{Basic facts on sets of sources}
We start by stating a couple of remarks similar to Proposition~\ref{lem:all-receive-iff-one-receive}, that hold for a set $S\subseteq V$ of source nodes instead of just one node~$s\in V$.

\begin{lemma}\label{lem:setagr-all-receive-or-none}
	For every graph $G$, every $t<\kappa(G)$, every set~$S\subseteq V$, and every failure pattern~$\varphi$ in the $t$-resilient model, $\ecc(S,\varphi)<\infty$ if and only if there exists at least one correct node $v$ that becomes aware of the message broadcast from at least one node $u\in S$.    
\end{lemma}

\begin{proof}
	If $\ecc(S,\varphi)<\infty$, then, by definition, all correct nodes in $\varphi$ receive the message of at least one node $u\in S$. Conversely,  if there exists a correct node $v$ that becomes aware of the message broadcast from a node $u\in S$, then, since the graph induced be the correct nodes in $\varphi$ is connected (as $t<\kappa(G)$), node $v$ can eventually broadcast the information received from $u$ to all correct nodes, and thus $\ecc(S,\varphi)<\infty$.
\end{proof}

For every set $S \subseteq V$, recall that $\Phi_{S}^{\infty} = \{ \varphi \in \FPA \mid  \ecc(S,\varphi)=\infty\}$, and $\Phi_{S}^\star =\FPA \smallsetminus \Phi_S^\infty$. 

\begin{corollary}\label{cor:setagr_all_or_none}
	For every graph $G$, every $t < \kappa(G)$, every set $S \subseteq V$, and every failure pattern $\varphi$, after $\ecc(S,\Phi^\star_S)$ rounds of broadcasting under failure pattern~$\varphi$, either every active node has received the information sent from some node in~$S$, or no active node received any information sent from any node in~$S$.
\end{corollary}

\begin{proof}
	Assume by contradiction that there exists a failure pattern $\varphi \in \Phi^{\star}_S$ such that, after $\ecc(S,\Phi^\star_S)$ rounds of broadcasting from $S$ under $\varphi$, there are some active nodes that heard from $S$, and some active nodes that haven't heard from~$S$. Let us then consider the failure pattern $\varphi'$ identical to $\varphi$, except that all nodes that are still active in $\varphi$ after round $\ecc(S,\Phi^\star_S)$ remains correct in $\varphi'$ for all rounds larger than $\ecc(S,\Phi^\star_S)$. By Lemma~\ref{lem:setagr-all-receive-or-none}, $\varphi' \in \Phi_S^{\star}$. At the end of round $\ecc(S,\Phi^\star_S)$ of broadcasting from $S$ under $\varphi'$, there are some active nodes that heard from $S$, and some active nodes that haven't heard from~$S$, which implies that $\ecc(S,\varphi')>\ecc(S,\Phi^\star_S)$, a contradiction. 
\end{proof}

%------------------------------------------------------------
\subsubsection{Greedy algorithm 1}

The first greedy algorithm consists to construct a sequence of subsets of $V$ with size at most~$k$, iteratively, as follows. For $i\geq 1$, we define
\[
S_i=\argmin \{\ecc(S,\Phi^\star_S)\mid S\subseteq V, \; |S|\leq k, \text{ and } S \cap (\cup_{j=1}^{i-1} S_j) = \varnothing \}
\]
Let $r_1\geq 1$ be the smallest integer such that $|\bigcup_{i=1}^{r_1} S_i| \geq t+1$. The $r_1$ sets $S_1, \ldots, S_{r_1}$, have the following properties. They are pairwise disjoint, of cardinality at most $k$, and, for every $i\in\{1,\dots,r_1-1\}$, 
\[
\ecc(S_i,\Phi^\star_{S_i}) \leq \ecc(S_{i+1},\Phi^\star_{S_{i+1}}).
\]
Let us order the vertices in $\bigcup_{i=1}^{r_1} S_i$ such that, for every two vertices $u$ and $v$ in $\bigcup_{i=1}^{r_1} S_i$, 
\[
u \prec v \iff 
(u\in S_i, v\in S_j, \text{ and } i<j) \text{ or } (\{u,v\}\subseteq  S_i, \text{ and } u < v),  
\]
where $u<v$ means that the identifier of $u$ is smaller than the identifier of~$v$. The first greedy algorithm proceeds as follows. 
\begin{enumerate}
	\item Every node $u\in \bigcup_{i=1}^{r_1} S_i$ broadcasts $(u,x_u)$ during $R_1=\ecc(S_{r_1},\Phi^\star_{S_{r_1}})$ rounds.
	\item Every node $v$ outputs $y_v=x_u$ where $u\in \cup_{i=1}^{r_1} S_i$ is the smallest node according to $\prec$ for which $(u,x_u)\in \view(v,R_1)$. 
\end{enumerate}

\begin{proposition}
	Greedy Algorithm 1 solves $k$-set agreement in $R_1=\ecc(S_{r_1},\Phi^\star_{S_1})$ rounds. 
\end{proposition}

\begin{proof}
	By construction, every correct node runs for $R_1$ rounds, so termination is guaranteed. Validity also holds by construction since every node picks an input value as output value. Note that every correct node receives at least one pair $(u,x_u)$ with $u\in \bigcup_{i=1}^{r_1} S_i$ after  $R_1$ rounds, since $S_{r_1}$ is the ``slowest'' set in $S_1,\dots,S_{r_1}$, and $|\bigcup_{i=1}^{r_1} S_i| \geq t+1$. Regarding agreement, let us assume that the algorithm runs under failure pattern $\varphi \in\FPA$, and let $A$ be the set of nodes that are active at round $R_1$ in~$\varphi$ (i.e., that haven't crashed in $\varphi$ up to this round). Corollary~\ref{cor:setagr_all_or_none} implies that for every $i \in \{1,\ldots,r_1\}$, whenever a node has received the pair $(u,x_u)$ from a node $u\in S_i$ by round $R_1$ for some~$i$, every node have received the message from at least one node in $S_i$. The claim follows.
\end{proof}

\subsubsection{Greedy algorithm 2}

The second greedy algorithm is a slight improvement of the first greedy algorithm. We construct again an ordered sequence of sets iteratively, but in a different manner. For $i\geq 1$, we define
\[
S_i=\argmin \{\ecc(S,\Phi^\star_{S_i})\mid S\subseteq V, \; |S|\leq k, \text{ and } S \smallsetminus (\cup_{j=1}^{i-1} S_j) \neq  \varnothing \}.
\]
That is, instead of asking that the next set does not intersect the previous sets, one just require that the next set contains at least one node that is not in any of the previous sets. 

We define $r_2\geq 1$ as the smallest integer such that $|\bigcup_{i=1}^{r_2} S_i| \geq t+1$. 
The $r_2$ sets $S_1, \ldots, S_{r_2}$ have the following properties: They are all of cardinality at most~$k$, and, for every $i\in\{1,\dots,r_2-1\}$, 
\[
\ecc(S_i,\Phi^\star_{S_i}) \leq \ecc(S_{i+1},\Phi^\star_{S_{i+1}}).
\]
Note that, for every vertex $u\in \bigcup_{j=1}^{r_2} S_i$, there exists a unique index $i$ such that $u \in S_i \smallsetminus \bigcup_{j=1}^{i-1} S_j$. Again, we define an ordering $\prec$ of the nodes in~$S$. Let $u\neq v$ be two nodes in~$\bigcup_{i=1}^{r_2} S_i$. We set 
\[
u \prec v \iff 
(u\in S_i \smallsetminus \cup_{\ell=1}^{i-1} S_\ell, v\in S_j \smallsetminus \cup_{\ell=1}^{j-1} S_\ell  \text{ and }  i<j) \text{ or } (\{u,v\}\subseteq S_i \smallsetminus \cup_{\ell=1}^{i-1} S_\ell  \text{ and }  u < v).  
\]
The algorithm proceeds exactly the same as the previous greedy algorithm (just the setting of the sets $S_i$'s and of the number of rounds differs). 
\begin{enumerate}
	\item Every node $u\in \bigcup_{i=1}^{r_2} S_i$ broadcasts $(u,x_u)$ during $R_2=\ecc(S_{r_2},\Phi^\star_{S_{r_2}})$ rounds.
	\item Every node $v$ outputs $y_v=x_u$ where $u\in \cup_{i=1}^{r_2} S_i$ is the smallest node according to $\prec$ for which $(u,x_u)\in \view(v,R_2)$. 
\end{enumerate}

\begin{proposition}
	Greedy algorithm 2 solves $k$-set agreement in $R_2=\ecc(S_{r_2},\Phi^\star_{S_{r_2}})$ rounds. 
\end{proposition}

\begin{proof}
	As for the first greedy algorithm, termination and validy are satisfied by construction. 
	Regarding agreement, thanks to Corollary~\ref{cor:setagr_all_or_none}, we get that, for every $i\in\{1,\dots,r_2\}$, after $R_2 \geq \ecc(S_i,\Phi^\star_{S_i})$ rounds, either each correct node has received a pair $(u,x_u)$ from at least one node $u\in S_i$, or no correct nodes have received messages from nodes in $S_i$. Therefore, if a node $v$ returns the input $x_u$ of some node $u\in S_i \smallsetminus \bigcup_{j=1}^{i-1} S_j$, then every node $v'$ also returns the input $x_{u'}$ of some node $u'\in S_i \smallsetminus \bigcup_{j=1}^{i-1} S_j$.  
\end{proof}

%------------------------------------------------------------
\subsection{Beyond greedy: an adaptive algorithm}
%------------------------------------------------------------

The greedy algorithms described above are naive in the sense that they ignore the sets of failure patterns. In this section, we describe a faster algorithm, which does take into account the failure patterns. 

\subsubsection{The adaptive algorithm}

Let us recall our adaptive algorithm from Section~\ref{subsec:detailclassicalset}. 
Let $\Phi_0 = \FPA$, and, for $i\geq 1$, let
\[
\begin{cases}
	S_i = \argmin\big \{ \ecc(S,\Phi_S^\star \cap \Phi_{i-1}) \mid S \subseteq V \smallsetminus (\cup_{j=1}^{i-1} S_j) \text{ and } |S|\leq k \big \} \\
	\Phi_{i} = \Phi_{S_i}^{\infty} \cap \Phi_{i-1}
\end{cases}
\]
We stops this construction at set $S_r$, $r\geq 1$, as soon as $|\bigcup_{i=1}^r S_i|\geq t+1$. Let us now consider the same ordering of the nodes as the one for the first greedy algorithm, but for the sets $S_1,\dots,S_r$ constructed as above. Namely, for two different nodes $u$ and $v$ of $\bigcup_{i=1}^r S_i$, we set 
\[
u \prec v \iff 
(u\in S_i, v\in S_j, \text{ and } i<j) \text{ or } (\{u,v\}\subseteq  S_i, \text{ and } u < v),  
\]
The adaptive algorithm then performs as follows: 
\begin{enumerate}
	\item Every node $u\in \bigcup_{i=1}^r S_i$ broadcasts $(u,x_u)$ during $R=\ecc(S_1,\Phi^\star_{S_1})$ rounds.
	\item Every node $v$ outputs $y_v=x_u$ where $u\in \bigcup_{i=1}^r S_i$ is the smallest node according to $\prec$ for which $(u,x_u)\in \view(v,R)$. 
\end{enumerate}
Note that the adaptive algorithm performs in $\ecc(S_1,\Phi^\star_{S_1})=\radius(G,t,k)$ rounds, whereas the greedy algorithms perform in $\ecc(S_{r_1},\Phi^\star_{S_{r_1}})$ rounds, and $\ecc(S_{r_2},\Phi^\star_{S_{r_2}})$ rounds, respectively, with their own setting of the sets $S_i$'s. 

\subsubsection{Correctness of the adaptive algorithm}

We are now ready to prove Theorem~\ref{theo:upper-bound-k-set-agreement}, by proving the correctness of our adaptive algorithm.  

\begin{proof}
	To establish the theorem, we show that the adaptive algorithm solves $k$-set agreement in $\radius(G,t,k)=\ecc(S_1,\Phi^\star_{S_1}$ rounds. 
	Let us consider the execution of the algorithm under an arbitrary failure pattern $\varphi \in \FPA$.
	Let $i$ be the smallest index such that some correct node $v$ has received a message $(u,x_u)$ from a node $u\in S_i$. By Lemma~\ref{lem:setagr-all-receive-or-none}, we thus have $\varphi \in \Phi_{S_i}^\star \cap \Phi_{i-1}$, which implies that 
	\[
	\ecc(S_i,\varphi) \leq \ecc(S_i,\Phi_{S_i}^\star \cap \Phi_{i-1}).
	\]
	Let $R=\radius(G,t,k)$. Assuming that $\ecc(S_i,\Phi_{S_i}^\star \cap \Phi_{i-1}) \leq R$, we would get $\ecc(S_i,\varphi)\leq R$, and, thanks to Corollary~\ref{cor:setagr_all_or_none}, it would follows that every correct node received a pair $(u,x_u)$ from at least one node $u\in S_i$ during the first $R$ rounds, which would establish termination and agreement. 
	To show that $\ecc(S_i,\Phi_{S_i}^\star \cap \Phi_{i-1}) \leq R$ indeed holds, it is sufficient to show that, for every $i\in \{1,\dots,r-1\}$, 
	\begin{equation}\label{eq:it-is-decreasing-again} 
		\ecc(S_{i+1},\Phi_{S_{i+1}}^\star \cap \Phi_{i}) < \ecc(S_i,\Phi_{S_i}^\star \cap \Phi_{i-1}). 
	\end{equation}
	To establish Eq.~\eqref{eq:it-is-decreasing-again}, let us define the following notion. 
	
	\begin{figure}[tb]
		\centering
		\centering
\begin{tikzpicture}[scale=0.9]
    \tikzstyle{nonenode}=[circle,fill=white,minimum size=0pt,inner sep=0pt]
    
    \tikzstyle{whitenode}=[draw,circle,fill=white, minimum size=0pt,inner sep=2pt]
    \tikzstyle{bluenode}=[draw,circle,blue,fill=blue, minimum size=0pt,inner sep=2pt]
    
    \draw[dotted] (0.3,0) ellipse (1.1cm and 2.3cm);
    \draw[dotted] (3.7,0) ellipse (1.1cm and 2.3cm);
    \draw[dotted] (-3,0) ellipse (0.8cm and 2.3cm);
    \draw[dotted] (-6,0) ellipse (0.8cm and 2.3cm);
    
    \draw (0,0) node[font=\small] () {$S_i$};
    \draw (-3,0) node[font=\small] () {$S_{i-1}$};
    \draw (-4.5,0) node[font=\small] () {$\ldots$};
    \draw (-6,0) node[font=\small] () {$S_1$};
    \draw (4,0) node[font=\small] () {$\env(S_i)$};
    \draw (-8,0) node[font=\Large] () {$G$};
    
    \draw (0.5,1) node[whitenode] (a) {};
    \draw (0.8,0) node[whitenode] (b) {};
    \draw (0.5,-1) node[whitenode] (c) {};
    \draw (0.3,-2) node[whitenode] (d) {};
    \draw (0.3,2) node[whitenode] (e) {};

    \draw (3.5,1) node[bluenode] (a') {};
    \draw (3.5,-1) node[bluenode] (c') {};
    \draw (3.7,2) node[bluenode] (e') {};
    \draw (3.7,-2) node[bluenode] (d') {};

    \draw[] (a) -- (a');
    \draw[] (b) -- (a');
    \draw[] (c) -- (c');
    \draw[] (b) -- (c');
    \draw[] (e) -- (e');
    \draw[] (d) -- (d');
\end{tikzpicture}
		\caption{An envelop $\env(S_i)$ of a set $S_i\subseteq V$ in $G=(V,E)$.}
		\label{fig:envelop-setagr}
	\end{figure}
	
	\begin{definition}
		For every $i\in\{1,\ldots,r-1\}$ an \emph{envelop} of $S_i$ is a set $\env(S_i)$ satisfying the following conditions (see Fig.~\ref{fig:envelop-setagr}): 
		\begin{enumerate}
			\item $\env(S_i) \subseteq V \smallsetminus \{S_1,\ldots,S_{i}\}$,
			\item $|\env(S_i)| \leq |S_i|$,
			\item for every $v \in S_i$, $\env(S_i)\cap N(v)\neq \varnothing$, and  
			\item for every $v \in \env(S_i)$, $S_i\cap N(v)\neq \varnothing$.   
		\end{enumerate}  
	\end{definition}
	
	\noindent 
	Note that every node $v\in S_i$ has  a neighbor that is not in $S_1,\ldots,S_i$ because 
	\[
	|\cup_{j=1}^{i} S_j| \leq t < \kappa(G).
	\]
	Choosing one such neighbor for each $v\in S_i$ gives a set $\env(S_i)$, so an envelop $\env(S_i)$ does exist and is well defined.
	We next show that, for every $i>1$, 
	\[
	\ecc(S_i,\Phi_{S_i}^\star \cap \Phi_{i-1}) > \ecc(\env(S_i),\Phi_{\footnotesize \env(S_i)}^\star \cap \Phi_{i}).
	\]
	To see why this inequality holds, let us construct, for every $\varphi \in \Phi_{\footnotesize \env(S_i)}^\star \cap \Phi_{i}$, a failure pattern $\varphi' \in \Phi_{S_i}^\star \cap \Phi_{i-1}$ as follows. 
	In $\varphi'$, every node in $\bigcup_{j=1}^{i-1} S_j$ fails cleanly at the first round, and every node in $S_i$ also fails at the first round but manages to send its message to its neighbors in $\env(S_i)$ (and only to them). 
	The other nodes that fail in $\varphi$ also fails in $\varphi'$  but one round latter. 
	Let us show that $\ecc(S_i, \varphi') >\ecc(\env(S_i),\varphi)$.
	
	For every correct node $u\in V$, there exists a node $v\in S_i$ which is the fastest node in $S_i$ to broadcast to $u$ under $\varphi'$. 
	Let $P$ be a shortest causal path from $v$ to $u$ in $\varphi'$. Note that, by the definition of  $\varphi'$, the first message on this path must go from $v$ to a neighbor $w \in \env(S_i)$.
	The sub-path of $P$ from $w$ to $u$ is also a causal path under~$\varphi$, and is one link shorter than~$P$.
	Thus, for every correct node $u$, 
	the number of rounds required for $\env(S_i)$ to broadcast to $u$ under $\varphi$ is strictly smaller than the number of rounds required for $S_i$ to broadcast to $u$ under~$\varphi'$.
	Therefore, for every $\varphi \in \Phi_{\footnotesize \env(S_i)}^\star \cap \Phi_i$, there exists a failure pattern $\varphi' \in \Phi_{S_{i}}^\star \cap \Phi_{i-1}$ such that $\ecc(S_i, \varphi') > \ecc(\env(S_i),\varphi)$.
	As a consequence,  
	\begin{align*}
		\ecc(S_i,\Phi_{S_i}^\star \cap \Phi_{i-1}) & \geq \ecc(S_i, \varphi') \\
		& > \ecc(\env(S_i),\Phi_{\footnotesize \env(S_i)}^\star \cap \Phi_{i}) \\
		& \geq \ecc(S_{i+1},\Phi_{S_{i+1}}^\star \cap \Phi_{i}), 
	\end{align*}
	which completes the proof of Eq.~\eqref{eq:it-is-decreasing-again}, and the proof of the theorem. 
\end{proof}

% ------------------------------------------------------------
\subsection{Round complexities of the set agreement algorithms}
%------------------------------------------------------------

We conclude this section by showing that the three algorithms presented in this section can be ordered and separated with respect to their round-complexities. 

\begin{theorem}
	For every graph $G$, every $t\geq 0$, and every $k\geq 1$: 
	\begin{itemize}
		\item Greedy algorithm~2 is at least as fast as Greedy algorithm~1 in~$G$, and
		\item Adaptive algorithm is at least as fast as Greedy algorithm~2 in~$G$.
	\end{itemize}
	Moreover, 
	\begin{itemize}
		\item there exists $G$, $t\geq 0$, and $k\geq 1$ such that Greedy algorithm~2 is faster than Greedy algorithm~1 in~$G$, 
		\item there exists $G$, $t\geq 0$, and $k\geq 1$ such that Adaptive algorithm is faster than Greedy algorithm~2 in~$G$.
	\end{itemize}
\end{theorem}

\begin{proof}
	The adaptive algorithm performs in $\radius(G,t,k)=\ecc(S_1,\Phi^\star_{S_1})$ rounds, so it is at least as fast as both greedy algorithms. 
	
	Let us show that, for every $G,t$ and $k$, the second greedy algorithm is at least as fast as the first greedy algorithm. Assume that the sequence computed by Greedy~1 is $S_1,\ldots,S_{r_1}$, and the  sequence computed by Greedy~2 is $S'_1,\ldots,S'_{r_2}$. Since 
	\[
	|\bigcup_{j=i}^{r_1} S_j| \geq t+1, \text{and} |\bigcup_{j=i}^{r_2-1} S'_j| < t+1,
	\]
	we have 
	\[
	\bigcup_{j=i}^{r_1} S_j \smallsetminus \bigcup_{j=i}^{r_2-1} S'_j \neq \varnothing.
	\]
	Thus there exists $\ell \in \{1,\ldots,r_1\}$ such that $S_{\ell} \smallsetminus \bigcup_{j=i}^{r_2-1} S'_j \neq \varnothing$. We have
	\[
	\ecc(S'_{r_2},\Phi^\star_{S'_{r_2}}) \leq \ecc(S_{\ell},\Phi^\star_{S_{\ell}}) \leq \ecc(S_{r_1},\Phi^\star_{S_{r_1}})
	\]
	So, our second greedy algorithm is at least as fast as our first greedy algorithm, as claimed. 
	
	\begin{figure}[tb]
		\centering

\begin{tikzpicture}[scale=0.7]
   \tikzstyle{circlenode}=[draw,circle,minimum size=30pt,inner sep=0pt]
    \tikzstyle{whitenode}=[draw,circle,fill=white,minimum size=10pt,inner sep=0pt]
 
\draw (0,0) node[whitenode] (a1)   [] {};
\draw (1,0) node[whitenode] (b1)   [] {};
\draw (2,0) node[whitenode] (c1)   [] {};
\draw (3,0) node[whitenode] (d1)   [] {a};
\draw (4,0) node[whitenode] (e1)   [] {};
\draw (5,0) node[whitenode] (f1)   [] {};
\draw (6,0) node[whitenode] (g1)   [] {};

\draw (7,0) node[whitenode] (a2)   [] {};
\draw (8,0) node[whitenode] (b2)   [] {};
\draw (9,0) node[whitenode] (c2)   [] {};
\draw (10,0.5) node[whitenode] (d21)   [] {b};
\draw (10,-0.5) node[whitenode] (d22)   [] {c};
\draw (11,0) node[whitenode] (e2)   [] {};
\draw (12,0) node[whitenode] (f2)   [] {};
\draw (13,0) node[whitenode] (g2)   [] {};

\draw (8.5,3) node[whitenode] (x)   [] {x};
\draw (8.5,-3) node[whitenode] (y)   [] {y};

\draw (a1) edge node [below] {} (b1);
\draw (b1) edge node [below] {} (c1);
\draw (c1) edge node [below] {} (d1);
\draw (d1) edge node [below] {} (e1);
\draw (e1) edge node [below] {} (f1);
\draw (f1) edge node [below] {} (g1);
\draw (g1) edge node [below] {} (a2);
\draw (a2) edge node [below] {} (b2);
\draw (b2) edge node [below] {} (c2);
\draw (c2) edge node [below] {} (d21);
\draw (c2) edge node [below] {} (d22);
\draw (d21) edge node [below] {} (e2);
\draw (d22) edge node [below] {} (e2);
\draw (e2) edge node [below] {} (f2);
\draw (f2) edge node [below] {} (g2);

\draw (x) edge node [below] {} (a1);
\draw (x) edge node [below] {} (b1);
\draw (x) edge node [below] {} (c1);
\draw (x) edge node [below] {} (d1);
\draw (x) edge node [below] {} (e1);
\draw (x) edge node [below] {} (f1);
\draw (x) edge node [below] {} (g1);
\draw (x) edge node [below] {} (a2);
\draw (x) edge node [below] {} (b2);
\draw (x) edge node [below] {} (c2);
\draw (x) edge node [below] {} (d21);
\draw (x) edge node [below] {} (e2);
\draw (x) edge node [below] {} (f2);
\draw (x) edge node [below] {} (g2);

\draw (y) edge node [below] {} (a1);
\draw (y) edge node [below] {} (b1);
\draw (y) edge node [below] {} (c1);
\draw (y) edge node [below] {} (d1);
\draw (y) edge node [below] {} (e1);
\draw (y) edge node [below] {} (f1);
\draw (y) edge node [below] {} (g1);
\draw (y) edge node [below] {} (a2);
\draw (y) edge node [below] {} (b2);
\draw (y) edge node [below] {} (c2);
\draw (y) edge node [below] {} (d22);
\draw (y) edge node [below] {} (e2);
\draw (y) edge node [below] {} (f2);
\draw (y) edge node [below] {} (g2);

\end{tikzpicture}
    
		\caption{A graph $G$ for which the greedy algorithm~2 is faster than the greedy algorithm~1}
		\label{fig: compare1}
	\end{figure}
	
	\medbreak 
	
	To strictly separate the round complexities of the three algorithms, let us fix $t=k=2$.  
	Figure~\ref{fig: compare1} exhibits a graph $G$ such that the greedy algorithm~2 in $G$ is faster than the greedy algorithm~1 in~$G$. Indeed, the greedy algorithm~1 solves $2$-set agreement in $4$ rounds in~$G$, whereas the greedy algorithm~2 solves $2$-set agreement in $3$~rounds only. To see why, let us compute $\ecc(S,\Phi^\star_S)$ for all subsets $S$ of size at most~$k=2$.
	Let us prove that $\ecc(\{ a, b\},\Phi^\star_{\{ a, b\}}) =3$. We consider four cases: 
	\begin{itemize}
		\item $a, b$ are correct: Then $S = \{ a,b\}$ broadcast in 3 rounds.
		\item $a$ and $b$ crash: Then $S$ broadcast in 3 rounds. If $a$ is not crash cleanly at the first round, then either $a$ sends a message to $x$ or $y$ or an other neighbour, $z$. Note that, all nodes in the graph has distance at most $2$ to $z$ or $x$ or $y$.
		If $b$ is not crash cleanly at the first round, then $b$ sends a message to either $x$ or an other neighbour, $z$. 
		\item $a$ crashes, but $b$ is correct: If $x$ crash then $b$ is correct and in two rounds, $y$ hears from $b$. Otherwise, i.e., if $x$ is correct, then in one round $x$ hear from $b$.
		\item $b$ crashes, but $a$ is correct: If $x$ crash, then in one rounds, $y$ hears from $a$. If $x$ correct, then in one round $x$ hear from $a$.
	\end{itemize}
	Similarly, $\ecc(\{ a, c\},\Phi^\star_{\{ a, c\}}) =3$. For any other set $S$ of size at most~$2$, by reasoning on the case when $x$ and $y$ crash, we have $\ecc(S,\Phi^\star_S) \geq 4$.
	Thus, the greedy algorithm~2 constructs $S_1 = \{a,b\}, S_2 = \{a,c\}$, or $S_1 = \{a,c\}, S_2 = \{a,b\}$, and the algorithm does terminate in $3$~rounds.
	Instead, for the greedy algorithm~1, $S_1$ is either $\{a,b\}$ or $\{a,c\}$ because the sets must be pairwise disjoint, and $\ecc(S_2,\Phi^\star_{S_2}) \geq 4$.
	
	Finally, to separate the adaptive algorithm from the greedy algorithm~2, we consider the graph $G$ displayed on Figure~\ref{fig: compare2}.  The greedy algorithm~2 solves $2$-set agreement in at least $4$ rounds, whereas the adaptive algorithm solves $2$-set agreement in $3$ rounds. To see why, let us compute $\ecc(S,\Phi^\star_S)$ for every sets $S$ of size at most~$2$. We have $\ecc(\{a,b\},\Phi^\star_{\{ a, b\}}) =3$. Instead, for any other set~$S$ of size at most~$2$, $\ecc(S,\Phi^\star_S) \geq 4$.
	Thus, the greedy algorithm~2 will pick $S_1 = \{a,b\}$, and some $S_2$ with $ecc(S_2,\Phi^\star_{S_2}) \geq 4$. As a consequence, the greedy algorithm~2 terminates in at least $4$~rounds.
\end{proof}

\begin{figure}[tb]
	\centering

\begin{tikzpicture}[scale=0.7]
   \tikzstyle{circlenode}=[draw,circle,minimum size=30pt,inner sep=0pt]
    \tikzstyle{whitenode}=[draw,circle,fill=white,minimum size=10pt,inner sep=0pt]
 
\draw (0,0) node[whitenode] (a1)   [] {};
\draw (1,0) node[whitenode] (b1)   [] {};
\draw (2,0) node[whitenode] (c1)   [] {};
\draw (3,0) node[whitenode] (d1)   [] {a};
\draw (4,0) node[whitenode] (e1)   [] {};
\draw (5,0) node[whitenode] (f1)   [] {};
\draw (6,0) node[whitenode] (g1)   [] {};

\draw (7,0) node[whitenode] (a2)   [] {};
\draw (8,0) node[whitenode] (b2)   [] {};
\draw (9,0) node[whitenode] (c2)   [] {};
\draw (10,0) node[whitenode] (d2)   [] {b};
\draw (11,0) node[whitenode] (e2)   [] {};
\draw (12,0) node[whitenode] (f2)   [] {};
\draw (13,0) node[whitenode] (g2)   [] {};

\draw (8.5,3) node[whitenode] (x)   [] {x};
\draw (8.5,-3) node[whitenode] (y)   [] {y};

\draw (a1) edge node [below] {} (b1);
\draw (b1) edge node [below] {} (c1);
\draw (c1) edge node [below] {} (d1);
\draw (d1) edge node [below] {} (e1);
\draw (e1) edge node [below] {} (f1);
\draw (f1) edge node [below] {} (g1);
\draw (g1) edge node [below] {} (a2);
\draw (a2) edge node [below] {} (b2);
\draw (b2) edge node [below] {} (c2);
\draw (c2) edge node [below] {} (d2);
\draw (d2) edge node [below] {} (e2);
\draw (e2) edge node [below] {} (f2);
\draw (f2) edge node [below] {} (g2);

\draw (x) edge node [below] {} (a1);
\draw (x) edge node [below] {} (b1);
\draw (x) edge node [below] {} (c1);
\draw (x) edge node [below] {} (d1);
\draw (x) edge node [below] {} (e1);
\draw (x) edge node [below] {} (f1);
\draw (x) edge node [below] {} (g1);
\draw (x) edge node [below] {} (a2);
\draw (x) edge node [below] {} (b2);
\draw (x) edge node [below] {} (c2);
\draw (x) edge node [below] {} (d2);
\draw (x) edge node [below] {} (e2);
\draw (x) edge node [below] {} (f2);
\draw (x) edge node [below] {} (g2);

\draw (y) edge node [below] {} (a1);
\draw (y) edge node [below] {} (b1);
\draw (y) edge node [below] {} (c1);
\draw (y) edge node [below] {} (d1);
\draw (y) edge node [below] {} (e1);
\draw (y) edge node [below] {} (f1);
\draw (y) edge node [below] {} (g1);
\draw (y) edge node [below] {} (a2);
\draw (y) edge node [below] {} (b2);
\draw (y) edge node [below] {} (c2);
\draw (y) edge node [below] {} (d2);
\draw (y) edge node [below] {} (e2);
\draw (y) edge node [below] {} (f2);
\draw (y) edge node [below] {} (g2);

\end{tikzpicture}
    
	\caption{A graph $G$ for which the adaptive algorithm is faster than the greedy algorithm~2}
	\label{fig: compare2}
\end{figure}

% - - - - - - - - - - - - - - - - - -
\section{Consensus beyond the connectivity threshold}
\label{subsubsec:consensus-beyond-threhold}
% - - - - - - - - - - - - - - - - - -

In this section, we extend the consensus algorithm of~\cite{CastanedaFPRRT23} by considering the case where the number~$t$ of failures is unbounded. In particular, $t$~might be larger than the connectivity $\kappa(G)$ of the graph~$G$. The subgraph of $G$ induced by the set of correct nodes may thus be disconnected, split into several connected components. 
As an example, consider the 3-node path $G=(V,E)$ displayed at the top of Fig.~\ref{fig:IFG-disconnected}, and $t$ failures. 
The information flow graph $\IF(G,r,\FPA)$ is connected for every $r\geq 1$, but is not dominated. Our characterization theorem, Theorem~\ref{thm:new-characterizationCFRRT}, applies even for $t\geq \kappa(G)$. 
It follows that consensus in $G$ cannot be solved under $\FPA$ even for $t=1$. The same holds for any graph whenever the failure pattern may disconnect the graph. We therefore consider a weaker variant of consensus, called \emph{local consensus}, adapted to possibly disconnected graphs. 

\begin{figure}[tb]
	\centering
	\includegraphics[width=14cm]{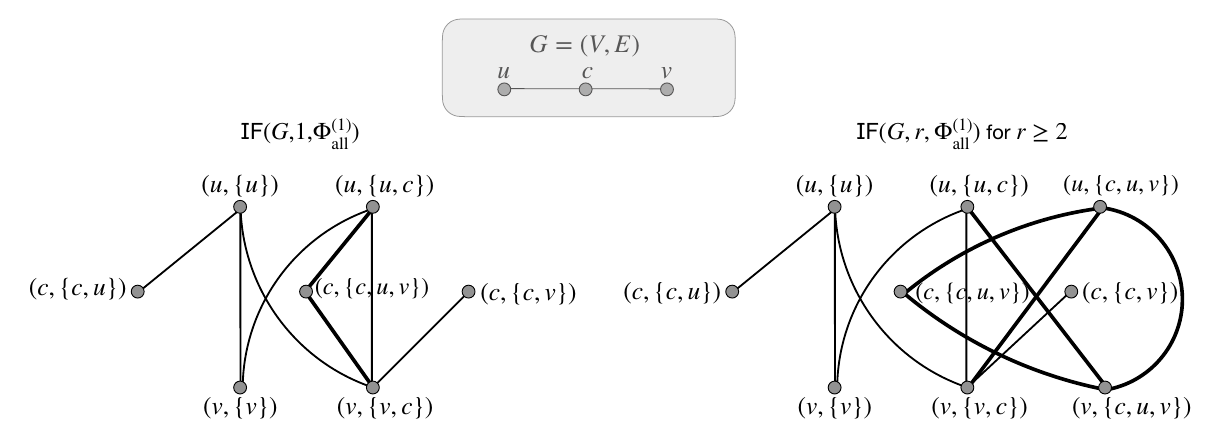}
	\caption{The information flow graphs of a $3$-path, after one round and $r\geq 2$ rounds, where $t=1$ node may fail, potentially disconnecting the graph since $\kappa(G)=1$.}
	\label{fig:IFG-disconnected}
\end{figure}

\subsection{Local consensus}

For every failure pattern  $\varphi$, we define the connected components of $\varphi$ as the connected components of the subgraph of $G$ obtained by removing from $G$ all nodes that crash in~$\varphi$. The set of connected components of $\varphi$  is denoted by $\cc(G,\varphi)$. 

\begin{definition}[Local Consensus]
	Local consensus in a graph $G=(V,E)$ is the problem in which every node $v\in V$ starts with an input value~$x_v$, and every correct node $v\in V$ must decide an output value~$y_v$ such that, (1)~for every failure pattern~$\varphi$, for every connected component $C\in \cc(G,\varphi)$, and for every two correct nodes $u$ and $v$ in~$C$, $y_u=y_v$, and (2)~for every correct node~$v\in V$, there exists~$u\in V$ such that $y_v=x_u$. 
\end{definition}

In other words, local consensus weakens the agreement condition by requiring agreement in each connected components, instead of globally among all correct nodes. However, the validity condition remains the same: every output value of any node $v$ must be equal to the input value of some node~$u$, which may or may not be in the same connected component of the actual failure pattern.  
In particular, if $t\geq\kappa(G)$ but the specific  failure pattern does not disconnect the graph, or if 
$t<\kappa(G)$,
then the definition local consensus coincides with the standard definition of consensus.
Thus, any local consensus algorithm also solves consensus in both cases.

\begin{lemma}
	For every $n$-node graph $G$, and every non-negative integer $t$, local consensus is solvable in $G$ under the $t$-resilient model. 
\end{lemma}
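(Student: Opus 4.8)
The plan is to establish solvability of local consensus by a direct construction, generalizing the algorithm of~\cite{CastanedaFPRRT23} so that it works independently within each connected component of the actual failure pattern. The key observation is that for local consensus we do not need \emph{any} single node to broadcast to all correct nodes; we only need, for each connected component $C$ of the post-failure graph, \emph{some} node whose input is received by all correct nodes of $C$. First I would have every node $v\in V$ broadcast the pair $(v,x_v)$ by flooding for $n$ rounds (or more simply, until no node's view changes anymore; $n-1$ rounds suffice since any component has at most $n$ nodes). After this phase, a correct node $u$ in component $C$ has received exactly the set of pairs $(v,x_v)$ for all $v$ lying on a causal path ending at $u$; in particular, since $C$ is connected in the post-failure graph and the correct nodes of $C$ never stop, every correct node of $C$ has received $(w,x_w)$ for every correct node $w\in C$.

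The decision rule is then the natural one: each correct node $u$ outputs $x_{w}$ where $w$ is the node of smallest identifier among all nodes $v$ such that $(v,x_v)$ is in $u$'s view \emph{and} $v$ is correct in the actual failure pattern and in the same component as $u$. The subtlety here is that $u$ may not be able to tell, from its view alone, which of the received identifiers are correct and co-located. To sidestep this, I would instead use the cleaner rule employed in the full-information flavor: among the correct nodes of its own component, $u$ knows them all (as argued above, the view of $u$ contains $(w,x_w)$ for all correct $w\in C$), and moreover $u$ knows $G$ and $t$, so $u$ can compute which subsets of $V$ it could possibly be a correct component-mate of. Concretely, I would let $u$ output $x_{w^\star}$ where $w^\star=\min\{\,v : v \text{ is correct in } \varphi,\ v \in C\,\}$, and argue $u$ can identify this set precisely: $u$ determines the set $A_u$ of nodes from which it received a pair, and within $A_u$ the correct nodes of $C$ are exactly those $v$ such that the induced subgraph structure plus received timing information forces $v$ into $u$'s component — but since a simpler argument is available, I would instead note that \emph{all} correct nodes of $C$ receive \emph{exactly the same set} of pairs after $n-1$ rounds is false in general, so the robust choice is: $u$ waits $n-1$ rounds, then each node sends its entire view to its neighbors once more so that within $C$ everyone learns everyone else's view, and $u$ outputs the minimum identifier appearing in the \emph{intersection} of all views held by nodes of $C$. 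Since the correct nodes of $C$ form a connected subgraph and all are correct, this intersection is well-defined, nonempty (it contains every correct node of $C$), identical at all nodes of $C$, and its minimum is the same everywhere, giving agreement within $C$; validity is immediate since every received identifier is a genuine node of $G$ with its genuine input.

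Actually, the slickest route avoids all of this: I would simply invoke that within component $C$, after flooding for $n-1$ rounds, the set $V_C$ of correct nodes of $C$ is a subset of every correct $C$-node's view, and I would have each node $u$ output $x_{w}$ where $w=\min\{v : (v,x_v)\in \view(u) \text{ and } v \text{ broadcasts to the whole of } C\}$ — but determining the latter requires knowing $C$. The genuinely clean statement is: run flooding for $n-1$ rounds; then one extra round in which every node relays its full view; then $u$ outputs the minimum identifier that is present in the view of \emph{every} node it heard from in that last round via a correct link, equivalently the minimum of $\bigcap$ of views over $C$. For the purpose of this lemma — which only asserts solvability, with no round bound — even the crudest algorithm works: flood for $n$ rounds, then flood all views for another $n$ rounds, and take the minimum identifier in the common-knowledge set of the component; correctness follows from connectivity of $V_C$ and the fact that correct nodes never crash.

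The main obstacle I anticipate is purely definitional rather than technical: pinning down exactly what information a correct node $u$ in component $C$ can rely on, given that a neighbor $v$ of $u$ may crash in $\varphi$ after helping propagate information, so $v\notin V_C$ yet $(v,x_v)$ and even data that transited through $v$ appear in $u$'s view. This is harmless for validity but means the naive "minimum received identifier" rule can break agreement, since two correct nodes of $C$ might have received pairs from \emph{different} crashed nodes. The fix — restricting the decision to the set of identifiers that all correct nodes of $C$ provably share, which by connectivity of $V_C$ is exactly (a superset of) $\{v : v\in V_C\}$ and is computable once every node in $C$ has exchanged its view with the rest of $C$ — is where the small amount of real work lies, and it is exactly parallel to the agreement argument in~\cite{CastanedaFPRRT23} restricted to a single component. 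No round-complexity accounting is needed here, so I would keep the proof short and defer the optimized $\radius(G,t)$-round version to the proof of Theorem~\ref{theo:correctness-algo-local-consensus}.
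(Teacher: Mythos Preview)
Your proposal contains a genuine misconception that derails the whole argument. You assert that ``\emph{all correct nodes of $C$ receive exactly the same set of pairs after $n-1$ rounds} is false in general,'' and you then spend the rest of the proposal designing workarounds (extra view-exchange rounds, intersections of views, restricting to ``provably shared'' identifiers). In fact that statement \emph{is} true, and it is the single observation on which the paper's proof rests. The paper's algorithm is simply: flood $(v,x_v)$ for $n-1$ rounds, then output the input of the smallest identifier in your view.

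Here is why views coincide within a component. Suppose $v,v'\in C$ are correct and $v$ has received $(u,x_u)$; then there is a causal path $u=a_1,\dots,a_\ell=v$ in~$\varphi$. Since all nodes of $C$ are correct and $C$ is connected, appending a path inside $C$ from $v$ to $v'$ yields a causal path from $u$ to $v'$ (every node on the appended segment is correct, so the crash conditions are vacuous there). Now any \emph{shortest} causal path visits each node at most once: if $a_i=a_j$ with $i<j$, then $a_i$ is still alive at every round up to $j-1>i-1$, so one may splice out $a_{i+1},\dots,a_j$ and obtain a strictly shorter causal path, contradicting minimality. Hence the shortest causal path from $u$ to $v'$ has at most $n$ nodes, i.e., length at most $n-1$, and $v'$ also receives $(u,x_u)$ within $n-1$ rounds. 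By symmetry the views of $v$ and $v'$ are identical, so the naive minimum-identifier rule already gives agreement within~$C$; validity is immediate. Your worry that ``two correct nodes of $C$ might have received pairs from different crashed nodes'' therefore cannot materialize, and none of the more elaborate mechanisms you sketch are needed.
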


\begin{proof}
	A simple algorithm proceeds in $n-1$ rounds, during which every node $v$ broadcasts the pair $(v,x_v)$. That is, at the first round, every node $v$ sends $(v,x_v)$ to all its neighbors, and, at each subsequent round, every node $v$ forwards to its neighbors all the pairs $(u,x_u)$ received during the previous round. After round $n-1$, every node $v$ outputs $y_v=x_u$ where $u$ is the smallest node (i.e., the node with smallest identifier) received during the execution of the algorithm. The validity condition is satisfied by construction, and we just need to check the agreement condition. For this purpose, let us assume that the execution of the algorithm is subject to failure pattern~$\varphi$. Let $C\in \cc(G,\varphi)$, let $v,v'$ be two distinct nodes in~$C$, and let $(u,x_u)$ be some pair received by~$v$. We claim that $v'$ has also received the same pair $(u,x_u)$. 
	
	To see why, recall that a \emph{causal} path in~$\varphi$ from a node $w$ to a node~$w'$ is a sequence of nodes $a_1,\dots,a_k$ with $a_1=w$, $a_k=w'$, and, for every $i\in\{1,\dots,k-1\}$, $a_{i+1}\in N(a_i)$, $a_i$~has not crashed in $\varphi$ during rounds $1,\dots,i-1$, and if $a_i$ crashes in $\varphi$ at round~$i$, i.e., if $(a_i,F_i,i)\in \varphi$ for some non-empty $F_i\subseteq N(a_i)$, then $a_{i+1}\notin F_i$. The straightforward but crucial observation is that, for every two nodes $w,w'$, if there is a causal path in $\varphi$ from $w$ to $w'$, then this path has length at most $n-1$ (i.e., contains at most $n$ nodes). 
	
	If $v$ has received the pair $(u,x_u)$, then there is a causal path in $\varphi$ from $u$ to $v$. Since $C$ is connected and contains only correct nodes in $\varphi$, it follows that there is also a causal path from $u$ to $v'$. Therefore, $v'$ has also received the pair $(u,x_u)$. In other words, the sets of pairs $(u,x_u)$ received by the two nodes $v$ and $v'$ are identical. Therefore, $y_v=y_{v'}$, and the agreement condition is thus satisfied, which completes the proof. 
\end{proof}

To describe a faster algorithm solving local consensus in any fixed $n$-node graph $G$ under the $t$-resilient model (for any fixed $t\leq n-1$), we need to adapt the notion of eccentricity to failure patterns disconnecting the graph.

\subsection{Eccentricity revisited}

Given a failure pattern $\varphi \in \FPA$, and a connected component $C\in \cc(G,\varphi)$, the eccentricity of $v\in V$ in~$C$ under~$\varphi$, denoted by $\ecc(v,\varphi,C)$, is the number of rounds required to broadcast from $v$ to all nodes in $C$ under~$\varphi$. If some nodes in $C$ cannot receive a message broadcast from $v$ under~$\varphi$, then $\ecc(v,\varphi,C)=\infty$. The following result is a straightforward generalization of Proposition~\ref{lem:all-receive-iff-one-receive} to the setting in which the graph may be disconnected. 

\begin{lemma}\label{lem:manycrash_active}
	For every node~$v$, every failure pattern~$\varphi$, and every connected component $C\in \cc(G,\varphi)$, $\ecc(v,\varphi,C)<\infty$ if and only if there exists at least one node~$w\in C$ that can receive a message broadcast from~$v$ under $\varphi$. In other words, either all nodes of $C$ can receive a message broadcast from~$v$ under~$\varphi$, or none can. 
\end{lemma}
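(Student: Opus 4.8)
The plan is to mimic the proof of Proposition~\ref{lem:all-receive-iff-one-receive} (Lemma~1 of~\cite{CastanedaFPRRT23}), but restricted to the subgraph induced by a single connected component $C\in\cc(G,\varphi)$, and to exploit that $C$, being a connected component of $G$ minus the crashed nodes, is itself a connected graph all of whose vertices are correct in $\varphi$. One direction is trivial: if all nodes of $C$ receive the broadcast from $v$, then certainly at least one does. So the content is the converse, and (equivalently) the ``in other words'' dichotomy.

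\medbreak

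First I would fix $\varphi$ and $C$, and suppose some node $w\in C$ receives the message broadcast from $v$ under $\varphi$; equivalently, there is a causal path in $\varphi$ from $v$ to $w$. The goal is to show every $w'\in C$ also receives it, i.e.\ $\ecc(v,\varphi,C)<\infty$. Let $A\subseteq C$ be the set of nodes of $C$ that do receive the broadcast from $v$; we have $w\in A$, so $A\neq\varnothing$, and we want $A=C$. Suppose not. Since $C$ is connected (in $G$) and $A\neq C$, there is an edge $\{a,b\}$ of $G$ with $a\in A$, $b\in C\smallsetminus A$, and both $a,b$ correct in $\varphi$ (all of $C$ is correct). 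Let $r=\ecc(v,\varphi,\{a\})<\infty$ be the first round at which $a$ holds the message, via a causal path $u_1=v,\dots,u_q=a$ of length $r$. Now consider appending $b$: since $a$ is correct in $\varphi$, $a$ never crashes, so in particular the causal-path condition at $a$ (``if $a$ crashes at round $q$ then $u_{q+1}\notin F_a$'') is vacuous, and the edge $\{a,b\}\in E$ together with $b\in N(a)$ gives a causal path $u_1,\dots,u_q,b$ from $v$ to $b$ in $\varphi$. Hence $b$ receives the broadcast, contradicting $b\notin A$. Therefore $A=C$, every node of $C$ receives the message, and $\ecc(v,\varphi,C)<\infty$, which also establishes the dichotomy.

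\medbreak

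The one subtlety — and the only place where care is needed — is the passage from ``$b$ receives the broadcast'' to a finite bound on $\ecc(v,\varphi,\{b\})$: one must note that the causal path exhibited has finite length, and more generally that, by induction along any path inside the connected graph $C$ from $a\in A$ to an arbitrary target, each successive node is correct, so the extension step (appending an edge of $C$ at a correct, hence non-crashing, node) is always legal. This is exactly where the argument uses that all of $C$ is correct in $\varphi$: in a general subgraph one could get stuck at a node that crashes before it can forward, but a connected component of $G$ minus the faulty set contains no such node. I expect this inductive extension along $C$ to be the main (though still routine) point; everything else is a direct transcription of the reasoning behind Proposition~\ref{lem:all-receive-iff-one-receive}.
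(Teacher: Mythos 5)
Your proposal is correct and follows essentially the same route as the paper: the paper simply takes a path $P$ inside $C$ from $w$ to an arbitrary $w'$ and propagates the message along it, using that every node of $C$ is correct and hence always forwards; your boundary-edge/contradiction phrasing is just a repackaging of that same one-edge causal-path extension. The "subtlety" you flag (legality of appending an edge at a correct node) is exactly the point the paper's proof relies on, so nothing is missing.
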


\begin{proof}
	Let $v\in V$, $\varphi\in\FPA$, and $C\in\cc(G,\varphi)$ such that some node $w\in C$ can receive the message broadcast from $v$ under~$\varphi$. Let $w'\in C$ be any node. By definition, there is a path $P$ from $w$ to $w'$ in $C$. Moreover, all nodes in $C$ are correct in~$\varphi$. Therefore, $w'$ will eventually receive the message broadcast from $v$, via $w$, along the path~$P$. 
\end{proof}

\noindent We can then define 
\[
\ecc(v,\varphi) = \max \{\ecc(v,\varphi,C) \;\mid \; C\in\cc(G,\varphi) \text{ and } \ecc(v,\varphi,C)<\infty\}, 
\]
and, for a set $\Phi\subseteq \FPA$ of failure patterns,  
\[
\ecc(v,\Phi) = \max\{\ecc(v,\varphi)\;\mid \; \varphi \in \Phi \text{ and } \ecc(v,\varphi)<\infty\}.  
\]
However, we want to refine the notion of eccentricity to include the connected components instead of just focusing on the failure patterns. For this purpose, let 
\[
\CCA=\{(\varphi,C)\mid \varphi\in \FPA \text{ and } C\in \cc(G,\varphi)\} .
\]
For any $\Omega\subseteq \CCA$, we then define 
\[
\ecc(v,\Omega) = \max \{ \ecc(v,\varphi,C) \; | \; (\varphi,C) \in \Omega \text{ and } \ecc(v,\varphi,C)<\infty \} \}
\]
\color{black}
Finally, the radius of $G$ in the $t$-resilient model is then defined from the eccentricity as for the case $t<\kappa(G)$, that is,  
\[
\radius(G,t)=\min_{v\in V}\ecc(v,\CCA). 
\]

\subsection{The local consensus algorithm}
With the above  definition of radius, adopted to the case ot $t\geq\kappa(G)$, we can finally state main theorem of this section.

\begin{theorem}
	\label{theo:correctness-algo-local-consensus}
	For every connected graph $G=(V,E)$, and every $t\geq 0$, local consensus in $G$ can be solved by an oblivious algorithm running in $\radius(G,t)$ rounds under the $t$-resilient model. 
\end{theorem}

Similarly to the consensus algorithm in~\cite{CastanedaFPRRT23} under the assumption $t<\kappa(G)$, our algorithm for local consensus in the case $t\geq \kappa(G)$ constructs an ordered sequence of $t+1$ nodes as follows. For every node $v\in V$, let 
\[
\Omega^\infty_v=\{(\varphi,C)\in \CCA \mid \ecc(v,\varphi,C)=\infty\}, 
\text{ and } 
\Omega^\star_v=\CCA\smallsetminus \Omega^\infty_v.
\]

\begin{lemma}\label{lem:intersection-of-Omegas-is-empty}
	$\bigcap_{v\in V}\Omega^\infty_v=\varnothing$.
\end{lemma}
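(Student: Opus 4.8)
The claim is that $\bigcap_{v\in V}\Omega^\infty_v=\varnothing$, i.e., there is no pair $(\varphi,C)\in\CCA$ such that \emph{every} node $v\in V$ fails to broadcast to the component $C$ under $\varphi$. The plan is to fix an arbitrary $(\varphi,C)\in\CCA$ and exhibit at least one node $v$ with $\ecc(v,\varphi,C)<\infty$. The natural candidate is any node that lies in $C$ itself: if $v\in C$, then since $C$ is a connected component of the graph obtained by deleting the faulty nodes of $\varphi$, and all nodes of $C$ are correct in $\varphi$, a flooding message from $v$ reaches every other node of $C$ along a path internal to $C$. Formally this is exactly the content of Lemma~\ref{lem:manycrash_active}: since $v\in C$ trivially ``receives'' its own broadcast, we get $\ecc(v,\varphi,C)<\infty$, hence $(\varphi,C)\notin\Omega^\infty_v$, so $(\varphi,C)\notin\bigcap_{v\in V}\Omega^\infty_v$.

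The one thing to check is that $C$ is nonempty, so that such a node $v$ exists; but $\cc(G,\varphi)$ is by definition the set of connected components of a graph, and components are nonempty, so any $(\varphi,C)\in\CCA$ has $C\neq\varnothing$. I would also remark that the bound on $\ecc(v,\varphi,C)$ is in fact at most $|C|-1\le n-1$, though this is not needed for the statement.

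\textbf{Main obstacle.} Honestly there is essentially no obstacle here: the statement is a near-immediate corollary of Lemma~\ref{lem:manycrash_active}. The only subtlety worth spelling out is purely definitional --- making sure the reader sees that a node of $C$ is automatically correct in $\varphi$ (components of $\cc(G,\varphi)$ contain only non-faulty nodes by construction) and that it reaches its own component. So the proof is two or three sentences: pick $(\varphi,C)\in\CCA$, pick $v\in C$, invoke Lemma~\ref{lem:manycrash_active} to conclude $\ecc(v,\varphi,C)<\infty$, and therefore $(\varphi,C)\notin\Omega^\infty_v$, which shows the intersection over all $v\in V$ is empty.
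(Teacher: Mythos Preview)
Your proposal is correct and essentially identical to the paper's proof: the paper assumes for contradiction that some $(\varphi,C)$ lies in the intersection, picks a node $u\in C$ (noting $C\neq\varnothing$), and observes $\ecc(u,\varphi,C)<\infty$ because $C$ is connected and consists of correct nodes. The only cosmetic difference is that the paper argues directly from connectivity rather than citing Lemma~\ref{lem:manycrash_active}, and phrases it as a contradiction rather than a direct argument.
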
 

\begin{proof}
	Let us assume for the purpose of contradiction that there exists $(\varphi,C)\in \bigcap_{v\in V}\Omega^\infty_v$. $C$ is a connected component in $\cc(G,\varphi)$, thus $C\neq\varnothing$. Let $u\in C$. Since $C$ is connected and contains only correct nodes in $\varphi$, we have $\ecc(u,\varphi,C)<\infty$, a contradiction. 
\end{proof}

We now have all ingredients to define our algorithm. 
Let us construct a sequence of nodes $s_1,s_2,\dots$ iteratively as follows. Let 
\[
s_1=\argmin_{v\in V}\ecc(v,\Omega^\star_v).   
\]
In other words, we have $\ecc(s_1,\Omega^\star_{s_1})=\ecc(s_1,\CCA)=\radius(G,t)$. Now, for $i\geq 2$, we set
\[
s_{i+1} = \argmin_{v\in V\smallsetminus \{s_1,\dots,s_{i-1}\}} \ecc(v,\Omega_{s_1}^{\infty} \cap \dots \cap \Omega_{s_i}^{\infty} \cap \Omega^\star_{v}),
\]
until one get $s_r$ such that $\Omega_{s_1}^{\infty} \cap \ldots \cap \Omega_{s_r}^{\infty} = \varnothing$. Note that $r$ is well defined as, thanks to Lemma~\ref{lem:intersection-of-Omegas-is-empty}, $\cap_{v\in V}\Omega^\infty_v=\varnothing$.
Our algorithm then performs as follows: 
\begin{enumerate}
	\item Every node $u$ broadcasts $(u,x_u)$ during $\radius(G,t)=\ecc(s_1,\CCA)=\ecc(s_1,\Omega^*_{s_1})$ rounds.
	\item Every node $v$ outputs $y_v=x_{s_i}$ where $s_i$ is the node in the core sequence with smallest index~$i$ for which $(s_i,x_{s_i})\in \view(v,\radius(G,t))$. 
\end{enumerate}

\subsection{Correctness of the local consensus algorithm}

We establish Theorem~\ref{theo:correctness-algo-local-consensus} by proving the correctness of our local consensus algorithm.

\begin{proof}
	The main argument demonstrating the correctness of the core-based algorithm is the fact that, for every $i\in\{1,\dots,r\}$, 
	\begin{equation}\label{eq:yet-it-decreases}
		\ecc(s_i,\Omega_{s_1}^{\infty} \cap \ldots \cap \Omega_{s_{i-1}}^{\infty} \cap \Omega_{s_i}^\star) \leq  \ecc(s_1,\CCA), 
	\end{equation}
	where $\Omega_{s_1}^{\infty} \cap \ldots \cap \Omega_{s_{i-1}}^{\infty}=\varnothing$ for $i=1$. 
	Indeed, let us first assume that Eq.~\eqref{eq:yet-it-decreases} holds. Then, since the sequence $s_1,\ldots,s_r$ that defines our algorithm satisfies 
	\[
	\Omega_{s_1}^{\infty} \cap \ldots \cap \Omega_{s_r}^{\infty} = \varnothing,
	\]
	we have that every correct node hears from at least one node $s_i$, $1\leq i\leq r$, and thus termination is guaranteed. The validity condition holds by construction. For the agreement condition, let us assume that the algorithm performs under failure pattern~$\varphi$, and let  $C\in\cc(G,\varphi)$. There exists $i\in\{1,\dots,r\}$ such that 
	\[
	C \in \Omega_{s_1}^{\infty} \cap \ldots \cap \Omega_{s_{i-1}}^{\infty} \cap \Omega_{s_i}^\star,
	\]
	and thus $s_i$ broadcasts in $C$ under $(G,\varphi)$. By lemma~\ref{lem:manycrash_active}, no node in $C$ hear from any node in $s_1,\ldots,s_{i-1}$.
	Moreover, since 
	\[
	\ecc(s_i,\varphi,C) \leq \ecc(s_i,\Omega_{s_1}^{\infty} \cap \ldots \cap \Omega_{s_{i-1}}^{\infty} \cap \Omega_{s_i}^\star) \leq \radius(G,t),
	\]
	we get that all nodes in $C$ hear from $s_i$. Therefore, all nodes in $C$ output $x_{s_i}$.   
	
	It remains to prove Eq.~\eqref{eq:yet-it-decreases}. The proof goes by induction on~$i=1,\dots,r$. The base case $i=1$ is a tautology. For the induction case, let $1\leq i < r$, and let us assume that, for all $1\leq j \leq i$,
	\[
	\ecc(s_{j},\Omega_{s_1}^{\infty} \cap \ldots \cap \Omega_{s_{j-1}}^{\infty} \cap \Omega_{s_{j}}^\star) \leq \ecc(s_1,\CCA).
	\]
	We aim at proving that $\ecc(s_{i+1},\Omega_{s_1}^{\infty} \cap \ldots \cap \Omega_{s_i}^{\infty} \cap \Omega_{s_{i+1}}^\star) < \ecc(s_1,\CCA$.
	Since $i<r$, we have $V\smallsetminus \{s_1,\ldots,s_i\}\neq \varnothing$. Let $w\in V\smallsetminus \{s_1,\ldots,s_i\}$ at minimum distance to the set $\{s_1,\ldots,s_i\}$ such that  $\Omega_{s_1}^{\infty} \cap \ldots \cap \Omega_{s_i}^{\infty} \cap \Omega_{w}^\star \neq \varnothing$.
	There exists $\varphi\in\FPA$, $C\in\cc(G,\varphi)$, and $w' \in C$ such that 
	\[
	\ecc(w,\Omega_{s_1}^{\infty} \cap \ldots \cap \Omega_{s_i}^{\infty} \cap \Omega_{w}^\star) = \dist(w,w',\varphi)
	\]
	where $\dist(w,w',\varphi)$ denotes the smallest number of rounds required such that $w'$ hear from $w$ in $\varphi$. It is sufficient to prove that $\ecc(w,\Omega_{s_1}^{\infty} \cap \ldots \cap \Omega_{s_i}^{\infty} \cap \Omega_{w}^\star) \leq  \ecc(s_1,\CCA)$. For this purpose, let $P$ be a shortest (causal) path from $w$ to $w'$ in~$\varphi$. There exists a neighbor $z$ of $w$ such that 
	\[
	\Omega_{s_1}^{\infty} \cap \ldots\cap  \Omega_{s_i}^{\infty} \cap \Omega_{z}^\star = \varnothing.
	\]
	For every $u\in V$, let $f_u$ be the rounds at which node $u$ fails in $\varphi$. So, let us consider another failure pattern~$\varphi'$ that is identical to $\varphi$ except that (1)~$z$ sends message to $w$ at the first round in~$\varphi'$, and (2)~for every node $u$ in $P$ that receives messages from $w$ at round $f_u-1$ in $\varphi$, $u$ fails one round later in $\varphi'$ compared to $\varphi$. In $\varphi'$, the information can flow from $z$ to $w$, and then follow path $P$ for reaching $w'$.
	Note that $\cc(G,\varphi')=\cc(G,\varphi)$. Also, observe that, under $\varphi'$, $z$~can broadcast in component $C$, so there exist at least one node in $\{s_1,\ldots,s_i\}$ that can broadcast in~$C$ because $\Omega_{s_1}^{\infty} \cap \ldots\cap  \Omega_{s_i}^{\infty} \cap \Omega_{z}^\star = \varnothing$. Let $s_j$ be such a node, say the one with smallest index~$j$. By this choice, we have  $\ecc(s_j,C)<\infty$. 
	
	\begin{figure}[tb]
		\centering
    \begin{tikzpicture}
        \draw[thick] (0,0) circle [radius=1];
        \draw[thick] (10,0) circle [radius=1.5];

        \draw (0,0) node {$s_1,\ldots,s_i$};
        \draw (3,0) node [draw,circle] (z) {z};
        \draw (4,0) node [draw,circle] (w) {w};
        \draw (10.5,1) node (w') {w'};
        \draw (10,-1) node (C) {$C$};
        \draw (5.5,0) node [draw,circle] (u') [] {u'};
        \draw (7,0) node [draw,circle] (u) [] {u};

        \draw[dotted,thick] (1,0) node {} -- (z);
        \draw[thick] (z) -- (w);
        \draw[dotted,thick] (w) -- (u') -- (u);
        \draw[thick,dotted] (u) .. controls (8,0) .. (w') node[midway, below] {P};

        \draw[thick,dotted] (0,1) .. controls (3,1.5) .. (u) node[midway, above] {P'};
        \draw[thick,dotted] (u) .. controls (8,1) .. (w') node[midway, above] {P'};
         \draw[thick,dotted] (z) .. controls (4,-0.7) .. (u') node[midway, below] {P"};
        \draw[thick,dotted] (u') .. controls (8,-1) .. (w') node[midway, below] {P"};
    \end{tikzpicture}
		\caption{Illustration of the proof of Theorem~\ref{theo:correctness-algo-local-consensus}.}
		\label{fig:proof-theo-local-consensus}
	\end{figure}
	
	Let us now prove that 
	\begin{equation}\label{eq:first-inequality}
		\ecc(s_j,w',\varphi') \geq \dist(z,w',\varphi').
	\end{equation}
	Let $P'$ be a shortest causal path from $s_j$ to $w'$ in $\varphi'$. 
	Since $s_j$ cannot broadcast in $C$ under~$\varphi$, there exists at least one node $u$ belonging to both $P$ and $P'$ that fails  in $\varphi'$ later than in $\varphi$, and $u$ receives messages from $s_j$ at round $f_u$ under $\varphi'$. Since $u$ fails one round later in $\varphi'$ compared to $\varphi$, $u$ receives messages from $w$ at round $f_u-1$ in $\varphi$. So, $u$ receives messages from $z$ at the end of round $f_u$ in $\varphi'$. From node $u$, a message can follow the path $P'$ to reach~$w'$. As a consequence, $\dist(s_j,w',\varphi') \geq \dist(z,w',\varphi')$, as claimed.
	Similarly, let us prove that 
	\begin{equation}\label{eq:second-inequality}
		\dist(z,w',\varphi') > \dist(w,w',\varphi).
	\end{equation}
	Let $P''$ be a shortest causal path from $z$ to $w'$ in $\varphi'$. 
	Again, there must  exist at least one node $u'$ in both $P$ and $P''$ that fails  in $\varphi'$ later than in~$\varphi$, and $u'$ receives messages from $z$ at round $f_{u'}$ under $\varphi'$. Note that $u'$ receives message from $w$ at round $f_{u'}-1$ in $\varphi$.
	Therefore, in $\varphi$, a message from $w$ can follow the path $P"$ from $u'$ for reaching~$w'$. As a consequence, the path $P''$ from $u'$ to $w'$ is at least as long as the path $P$ from $u'$ to $w'$. In addition, the path $P''$ from $z$ to $u'$ is (strictly) longer than the path $P$ from $w$ to $u'$. Thus, $\dist(z,w',\varphi') > \dist(w,w',\varphi)$.
	
	Combining Eq.~\eqref{eq:first-inequality} and~\eqref{eq:second-inequality}, we get  $\dist(s_j,w',\varphi')>\dist(w,w',\varphi)$. 
	By the definition of~$s_j$, we have $C \in \Omega_{s_1}^{\infty} \cap \ldots \cap \Omega_{s_{j-1}}^{\infty} \cap \Omega_{s_{j}}^\star$. As a consequence, 
	\begin{align*}
		\ecc(s_j,\Omega_{s_1}^{\infty} \cap \ldots \cap \Omega_{s_{j-1}}^{\infty} \cap \Omega_{s_{j}}^\star) & \geq \ecc(s_j,\varphi,C) \\
		& \geq \dist(s_j,w',\varphi') \\
		& > \dist(w,w',\varphi) \\
		& = \ecc(w,\Omega_{s_1}^{\infty} \cap \ldots \cap \Omega_{s_i}^{\infty} \cap \Omega_{w}^\star). 
	\end{align*}
	By the definition of $s_{i+1}$,  and thanks to the induction hypothesis, we get that  
	\[
	\ecc(s_1,\CCA) \geq ecc(s_{i+1},\Omega_{s_1}^{\infty} \cap \ldots \cap \Omega_{s_i}^{\infty} \cap \Omega_{s_{i+1}}^\star),
	\]
	which completes the proof of the induction steps, and thus the proof of Theorem~\ref{theo:correctness-algo-local-consensus}. 
\end{proof}

% - - - - - - - - - - - - - - - - - -
\section{Set agreement beyond the connectivity threshold}
% - - - - - - - - - - - - - - - - - -

As for consensus, we extend $k$-set agreement to \emph{local} $k$-set agreement in the $t$-resilient model, as follows. 

\begin{definition}[Local Set Agreement]
	For every $k\geq 1$, local $k$-set agreement in a graph $G=(V,E)$ is the problem in which every node $v\in V$ starts with an input value~$x_v$, and every correct node $v\in V$ must decide an output value~$y_v$ such that, (1)~for every failure pattern~$\varphi$ and for every connected component $C\in \cc(G,\varphi)$, $|\{y_v : v \in C\}|\leq k$, and (2)~for every correct node~$v\in V$, there exists~$u\in V$ such that $y_v=x_u$. 
\end{definition}

\subsection{Eccentricity and radius revisited}

Given a failure pattern $\varphi \in \FPA$, and a connected component $C\in \cc(G,\varphi)$, the eccentricity of $S\subseteq V$ in~$C$ under~$\varphi$, denoted by $\ecc(S,\varphi,C)$, is the number of rounds required to broadcast from $S$ to all nodes in $C$ under~$\varphi$, i.e., such that every correct node in $C$ hear from at least one node in $S$. If some nodes in $C$ cannot receive any message broadcast from $S$ under~$\varphi$, then $\ecc(S,\varphi,C)=\infty$.
We can then define 
\[
\ecc(S,\varphi) = \max \{\ecc(S,\varphi,C) \;\mid \; C\in\cc(G,\varphi) \text{ and } \ecc(S,\varphi,C)<\infty\}, 
\]
and, for a set $\Phi \subseteq \FPA$ of failure patterns,  
\[
\ecc(S,\Phi) = \max\{\ecc(S,\varphi)\;\mid \; \varphi \in \Phi \text{ and } \ecc(S,\varphi)<\infty\}.  
\]
Recall that
$
\CCA=\{(\varphi,C)\mid \varphi\in \FPA \text{ and } C\in \cc(G,\varphi)\} .
$
For every $\Omega\subseteq \CCA$, let 
\[
\ecc(S,\Omega) = \max \{ \ecc(S,\varphi,C) \; | \; (\varphi,C) \in \Omega \text{ and } \ecc(S,\varphi,C)<\infty \} \}
\]
As before, for every node $S\subseteq V$, let 
\[
\Omega^\infty_S=\{(\varphi,C)\in \CCA \mid \ecc(S,\varphi,C)=\infty\}, 
\text{ and } 
\Omega^\star_S=\CCA\smallsetminus \Omega^\infty_S.
\]
Finally, we set 
\[
\radius(G,t,k)=\min_{S\subseteq V, |S|\leq k}\ecc(S,\Omega^\star_S).
\]

\subsection{The local set agreement algorithm}

With the revised definitions above, we can introduce the main theorem of this section.

\begin{theorem}
	\label{theo:correctness-algo-local-set-agreement}
	For every connected graph $G=(V,E)$,  every $t\geq 0$, and every $k\geq 1$,  local $k$-set agreement in $G$ can be solved by an oblivious algorithm running in $\radius(G,t,k)$ rounds under the $t$-resilient model. 
\end{theorem}

For our algorithm, we construct a sequence $S_1,s_2,\dots$, where $S_1$ is a \emph{set} of nodes, but, for every $i\geq 2$, $s_i$ is a \emph{single} node. We set   
\[
S_1=\argmin_{S\subseteq V,|S|\leq k}\ecc(S,\Omega^\star_S).   
\]
In other words, we have $\ecc(S_1,\Omega^\star_{S_1})=\ecc(S_1,\CCA)=\radius(G,t,k)$. Now, for $i\geq 1$, we set
\[
s_{i+1} = \argmin_{s\in V\smallsetminus (S_1\cup \{s_2,\dots s_{i}\})} \ecc(s,\Omega_{S_1}^{\infty} \cap \dots \cap \Omega_{s_i}^{\infty} \cap \Omega^\star_{s}),
\]
until we get $s_r$ such that $\Omega_{S_1}^{\infty} \cap \Omega_{s_2}^{\infty} \ldots \cap \Omega_{s_{r}}^{\infty} = \varnothing$. Note that $r$ is well defined as $\cap_{v\in V}\Omega^\infty_v=\varnothing$ (see Lemma~\ref{lem:intersection-of-Omegas-is-empty}).
Let $u\neq v$ be two nodes in~$S_1\cup\{s_2,\dots,s_r\}$. We set 
\[
u \prec v \iff 
(u\in S_1, \; v\notin S_1) 
\text{ or } 
(\{u,v\}\subseteq S_1, \; u < v)
\text{ or } 
(u=s_i, \; v=s_j, \; i<j).
\]
Our algorithm performs as follows: 
\begin{enumerate}
	\item Every node $u\in S_1 \cup \{s_2,\dots,s_r\}$ broadcasts $(u,x_u)$ during $\radius(G,t,k)=\ecc(S_{1},\CCA)$ rounds.
	\item Every node $v$ outputs $y_v=x_u$ where $u\in S_1 \cup \{s_2,\dots,s_r\}$ is the smallest node according to $\prec$ for which $(u,x_u)\in \view(v,\radius(G,t,k))$. 
\end{enumerate}

\subparagraph{Remark.}

It is important to note that our  algorithm for local $k$-set agreement is not a direct extension of our previous algorithm for $k$-set agreement designed for the case  $t<\kappa(G)$. Indeed, in our $k$-set agreement algorithm for $t<\kappa(G)$, the basic notion is an ordered sequence $S_1,\ldots, S_r$ of (disjoint) sets of size at most $k$, where $r$ is the smallest index such that $|\cup_{i=1}^r S_i|\geq t+1$. 
In our local $k$-set agreement algorithm for arbitrary~$t\geq 0$, the central notion is an ordered sequence $S_1,s_2,s_3,\ldots$ where $S_1$ is a set of size at most $k$, but, for every $j\geq 2$, $s_j$ is a single node. 
Nevertheless, our algorithm for local $k$-set agreement also solves standard $k$-set agreement whenever $t<\kappa(G)$. 
Specifically, for $t<\kappa(G)$, the algorithm constructs a sequence $S_1,\ldots,S_r$ with $|S_j|=1$ for all $j\geq 2$. 
The enforcement of using sets that are reduced to single nodes for solving local $k$-set agreement is for pure technical reasons.
Specifically, our proof of correctness for $k$-set agreement does not extend to the case where the graph can be disconnected --- a notion called \emph{envelop} for the sets $S_i$ in the sequence $S_1,\ldots, S_r$ is  well defined only under the condition that the total number of nodes in $\cup_{j=1}^{i}S_j$ is at most $t$, which may not be the case when the failure pattern induces more than one connected components.

\subsection{Correctness of the local set agreement algorithm}

We establish Theorem~\ref{theo:correctness-algo-local-set-agreement} by proving the correctness of our algorithm. 

\begin{proof}
	The validity condition holds by construction. The main argument demonstrating the correctness of the core-based algorithm is the fact that, for every $i\in\{1,\dots,r\}$, 
	\begin{equation}\label{eq:yet-it-decreases-set-agr}
		\ecc(S_i,\Omega_{S_1}^{\infty} \cap \ldots \cap \Omega_{S_{i-1}}^{\infty} \cap \Omega_{S_i}^\star) \leq  \ecc(S_1,\CCA), 
	\end{equation}
	where for $i=2,\ldots,r$, $S_i=\{s_i\}$, and $\Omega_{S_1}^{\infty} \cap \ldots \cap \Omega_{S_{i-1}}^{\infty} = \varnothing$ for $i=1$.
	Indeed, let us first assume that Eq.~\eqref{eq:yet-it-decreases-set-agr} holds. Then, since the sequence $S_1,s_2,\ldots,s_r$ satisfies $\Omega_{S_1}^{\infty} \cap \ldots \cap \Omega_{s_r}^{\infty} = \varnothing$, we have that every correct node hears from at least one node in $\cup_{i=1}^r S_i= S_1\cup(\cup_{i=2}^r \{s_i\})$.  For the agreement condition, let us assume that the algorithm performs under failure pattern~$\varphi$, and let  $C\in\cc(G,\varphi)$. There exists $i\in\{1,\dots,r\}$ such that 
	\[
	C \in \Omega_{S_1}^{\infty} \cap \ldots \cap \Omega_{S_{i-1}}^{\infty} \cap \Omega_{S_i}^\star,
	\]
	and thus $S_i$ broadcasts in $C$ under $(G,\varphi)$. By lemma~\ref{lem:manycrash_active}, no node in $C$ hear from any node in $S_1,\ldots,S_{i-1}$.
	Moreover, since 
	\[
	\ecc(S_i,\varphi,C) \leq \ecc(S_i,\Omega_{S_1}^{\infty} \cap \ldots \cap \Omega_{S_{i-1}}^{\infty} \cap \Omega_{S_i}^\star) \leq \radius(G,t,k),
	\]
	we get that all nodes in $C$ hear from $S_i$. Therefore, every node in $C$ output $x_{s} \in S_i$, and the agreement condition is fulfilled.   
	
	\medbreak 
	
	It remains to prove Eq.~\eqref{eq:yet-it-decreases-set-agr}. The proof goes by induction on~$i=1,\dots,r$. The base case $i=1$ is a tautology. For the induction case, let $1\leq i < r$, and let us assume that, for all $1\leq j \leq i$,
	\[
	\ecc(S_{j},\Omega_{S_1}^{\infty} \cap \ldots \cap \Omega_{S_{j-1}}^{\infty} \cap \Omega_{S_{j}}^\star) \leq \ecc(S_1,\CCA)
	\]
	We aim at proving that $\ecc(S_{i+1},\Omega_{S_1}^{\infty} \cap \ldots \cap \Omega_{S_i}^{\infty} \cap \Omega_{S_{i+1}}^\star) \leq \ecc(S_1,\CCA)$.
	Since $i<r$, we have $V\smallsetminus \{S_1,\ldots,S_i\}\neq \varnothing$. Let $w\in V\smallsetminus \{S_1,\ldots,S_i\}$ at minimum distance to the set $\{S_1,\ldots,S_i\}$ such that  $\Omega_{S_1}^{\infty} \cap \ldots \cap \Omega_{S_i}^{\infty} \cap \Omega_{w}^\star \neq \varnothing$.
	There exists $\varphi\in\FPA$, $C\in\cc(G,\varphi)$, and $w' \in C$ such that 
	\[
	\ecc(w,\Omega_{S_1}^{\infty} \cap \ldots \cap \Omega_{S_i}^{\infty} \cap \Omega_{w}^\star) = \dist(w,w',\varphi)
	\]
	where $\dist(w,w',\varphi)$ denotes the smallest number of rounds required such that $w'$ hear from $w$ in $\varphi$. It is sufficient to prove that $\ecc(w,\Omega_{S_1}^{\infty} \cap \ldots \cap \Omega_{S_i}^{\infty} \cap \Omega_{w}^\star) \leq  \ecc(S_1,\CCA)$. For this purpose, let $P$ be a shortest (causal) path from $w$ to $w'$ in~$\varphi$. There exists a neighbor $z$ of $w$ such that 
	\[
	\Omega_{S_1}^{\infty} \cap \ldots\cap  \Omega_{S_i}^{\infty} \cap \Omega_{z}^\star = \varnothing.
	\]
	For every $u\in V$, let $f_u$ be the rounds at which node $u$ fails in $\varphi$. So, let us consider another failure pattern~$\varphi'$ that is identical to $\varphi$ except that (1)~$z$ sends message to $w$ at the first round in~$\varphi'$, and (2)~for every node $u$ in $P$ that receives messages from $w$ at round $f_u-1$ in $\varphi$, $u$ fails one round later in $\varphi'$ compared to $\varphi$. In $\varphi'$, the information can flow from $z$ to $w$, and then follow path $P$ for reaching $w'$. 
	Note that $\cc(G,\varphi')=\cc(G,\varphi)$. Also, observe that, under $\varphi'$, $z$~can broadcast in component $C$, so there exist at least one node in $S_1,\ldots,S_i$ that can broadcast in~$C$. If there is no node in $S_1$ can broadcast in~$C$ under $\varphi'$, then $s=s_j$ with smallest index~$j$. Otherwise, let $s=s_1$ be the fastest node in $S_1$ broadcast to $C$ in $\varphi'$. By this choice, we have  $\dist(s_j,w',\varphi') \leq \ecc(s_j,C)<\infty$. 
	Similarly to Theorem \ref{theo:correctness-algo-local-consensus}, we now prove that 
	\[
	\ecc(s_j,w',\varphi') \geq \dist(z,w',\varphi')> \dist(w,w',\varphi).
	\]    
	By the definition of $s_j$, we have $C \in \Omega_{S_1}^{\infty} \cap \ldots \cap \Omega_{S_{j-1}}^{\infty} \cap \Omega_{s_{j}}^\star$. Note that if $s_j=s_1\in S_1$, then we have 
	\[
	\ecc(S_1,\Omega_{S_{1}}^\star) \geq \ecc(S_1,C) \geq ecc(s_1,C) \geq \dist(s_1,w',\varphi') > \dist(w,w',\varphi) =  \ecc(w,\Omega_{S_1}^{\infty} \cap \Omega_{w}^\star).
	\] 
	If $s_j \neq s_1$, we have,
	\begin{align*}
		\ecc(s_j,\Omega_{S_1}^{\infty} \cap \ldots \cap \Omega_{S_{j-1}}^{\infty} \cap \Omega_{s_{j}}^\star) & \geq \ecc(s_j,\varphi,C) \\
		& \geq \dist(s_j,w',\varphi') \\
		& > \dist(w,w',\varphi) \\
		& = \ecc(w,\Omega_{S_1}^{\infty} \cap \ldots \cap \Omega_{s_i}^{\infty} \cap \Omega_{w}^\star). 
	\end{align*}
	By the definition of $s_{i+1}$,  and thanks to the induction hypothesis, we get that  
	\[
	\ecc(S_1,\CCA) \geq \ecc(s_{i+1},\Omega_{S_1}^{\infty} \cap \ldots \cap \Omega_{s_i}^{\infty} \cap \Omega_{s_{i+1}}^\star),
	\]
	which completes the proof of the induction steps, and thus the proof of Theorem~\ref{theo:correctness-algo-local-set-agreement}. 
\end{proof}

%%%%%%%%%%%%%%%%%%%%%%%%%%%%%%%%%%%%%%%%%%%%%%%%%%%%%%%%%%%%
\section{Conclusion}
%%%%%%%%%%%%%%%%%%%%%%%%%%%%%%%%%%%%%%%%%%%%%%%%%%%%%%%%%%%%

In this paper, we have completed the picture for consensus in the $t$-resilient model for arbitrary graphs. That is, we have proved that the consensus algorithm in~\cite{CastanedaFPRRT23} is optimal, i.e., for every graph~$G$ and $t<\kappa(G)$, consensus can be solved by an oblivious algorithm performing in $\radius(G,t)$ rounds under the $t$-resilient model, and no  oblivious algorithm can solve consensus in $G$ in less than $\radius(G,t)$ rounds under the $t$-resilient model. 

We have designed a generic (oblivious) algorithm for $k$-set agreement in arbitrary graph~$G$ performing in $\radius(G,t,k)$ rounds under the $t$-resilient model, for $t<\kappa(G)$.
Moreover, we have extended the study of consensus and $k$-set agreement beyond the connectivity threshold. Specifically, we defined the  \emph{local} consensus and \emph{local} $k$-set agreement tasks, generalizing consensus and $k$-set agreement respectively,
and analyzed generic algorithms for these tasks. 
The technical difficulty of establishing optimality of our algorithms for the local variants of consensus and $k$-set agreement yields from the fact that we miss an analog of our characterization theorem (cf. Theorem~\ref{thm:new-characterizationCFRRT} in Section~\ref{sec:lower-bound-for-consensus}) even for local consensus. 

\subparagraph{Open Problem.} \textit{Is there an oblivious algorithm solving local $k$-set agreement in graph $G$ in less than $\radius(G,t,k)$ rounds under the $t$-resilient model for some graph~$G$, some $k\geq 1$?}

\medskip

Our results open a vast domain for further investigations. 
In particular, what could be said for sets of failure patterns $\Phi$ distinct from~$\FPA$? The case $\Phi_{\mbox{\rm \tiny clean}}$ of clean failures, for which there are no known generic consensus algorithms applying to arbitrary graphs, is particularly intriguing. 
Another intriguing and potentially challenging area for further research is exploring scenarios where  no upper bounds on the number of failing nodes are assumed, by concentrating solely on the set $\Phi_{\mbox{\rm \tiny connect}}$ of failure patterns that do not result in disconnecting the graph.
The main difficulties is that basic results such as Lemma 1 in~\cite{CastanedaFPRRT23} (cf. Proposition~\ref{lem:all-receive-iff-one-receive}) do not hold anymore in this framework. 
Indeed, some ill behaviors that do not occur when the number of failures is bounded from above by the connectivity of the graph, or when the problems are considered in each connected component separately, pop up when the number of failures is arbitrarily large yet preserving connectivity. 

Finally, the design of early-stopping algorithms in the $t$-resilient model for arbitrary graphs is also highly desirable. The early-stopping algorithms in~\cite{ChlebusKOO23} are very promising, but their analysis must be refined to a grain finer than the stretches of the failure patterns, by focusing on, e.g., eccentricities and radii. 

%%%%%%%%%%%%%%%%%%%%%%%%%%%%%%%%%%%%%%%%%%%%%%%%%%%%%%%%%%%%
%\newpage
\bibliography{ref}
%%%%%%%%%%%%%%%%%%%%%%%%%%%%%%%%%%%%%%%%%%%%%%%%%%%%%%%%%%%%

%%%%%%%%%%%%%%%%%%%%%%%%%%%%%%%%%%%%%%%%%%%%%%%%%%%%%%%%%%%%
\end{document}